\pdfoutput=1

\documentclass[letterpaper,twocolumn,10pt]{article}
\usepackage{usenix,epsfig,endnotes}

\usepackage[T1]{fontenc}
\usepackage{times}
\newcommand{\sys}{\text{Lemonshark}\xspace}

\usepackage{xspace}
\usepackage{graphicx} 
\usepackage{amsmath}
\usepackage[capitalise]{cleveref}
\usepackage{amsthm}

\usepackage{hyperref}

\usepackage{xcolor}
\usepackage{algorithm}
\usepackage{algpseudocode}
\usepackage{float}
\usepackage{subcaption}
\usepackage{titling}

\usepackage{thmtools}
\usepackage[para]{footmisc}
\usepackage{bm}
\usepackage{amsbsy}

\usepackage{enumitem}
\usepackage{titlesec}
\titlespacing\section{0pt}{*0.5}{*0.2}        
\titlespacing\subsection{0pt}{*0.3}{*0.2}       
\titlespacing\subsubsection{0pt}{*0.2}{*0.0}
\titlespacing\paragraph{0pt}{*0.5}{*0.5}

\crefformat{section}{\S#2#1#3} 
\crefformat{subsection}{\S#2#1#3}
\crefformat{subsubsection}{\S#2#1#3}

\declaretheoremstyle[
    spaceabove=1pt,
    spacebelow=2pt,
    headfont=\normalfont\bfseries,
    bodyfont=\itshape, 
    headpunct=.,
]{compacttheorem}

\declaretheorem[style=compacttheorem, numberwithin=section, name=Definition]{definition}
\declaretheorem[style=compacttheorem, numberwithin=section, name=Lemma]{lemma}
\declaretheorem[style=compacttheorem, numberwithin=section, name=Proposition]{proposition}

\def\Snospace~{\S{}}

\newcommand{\mich}[1]{}
\newcommand{\al}[1]{}
\newcommand{\lijl}[1]{}
\newcommand{\yihan}[1]{}
\newcommand{\xiangl}[1]{}

\usepackage{hyperref}
\usepackage{xurl}  
\usepackage[font=small]{caption}

\begin{document}

\date{}

\setlength{\droptitle}{-1em}
\title{\sys: Asynchronous DAG-BFT With Early Finality}


\author{
{\rm Michael Yiqing Hu}\\
National University of Singapore\\
\and
{\rm Alvin Hong Yao Yan}\\
National University of Singapore\\
\and
{\rm Yihan Yang}\\
National University of Singapore\\
\and
{\rm Xiang Liu}\\
National University of Singapore\\
\and
{\rm Jialin Li}\\
National University of Singapore\\
}

\maketitle
\thispagestyle{empty}


\begin{abstract}

DAG-Rider popularized a new paradigm of DAG-BFT protocols, separating dissemination from consensus: all nodes disseminate transactions as blocks that reference previously known blocks, while consensus is reached by electing certain blocks as leaders. This design yields high throughput but confers optimal latency only to leader blocks; non-leader blocks cannot be committed independently.

We present Lemonshark, an asynchronous DAG-BFT protocol that reinterprets the DAG at a transactional level and identifies conditions where commitment is sufficient---but not necessary---for safe results, enabling nodes to finalize transactions before official commitment, without compromising correctness. Compared to the state-of-the-art asynchronous BFT protocol, Lemonshark reduces latency by up to 65\%.

\end{abstract}

\section{Introduction}

Modern distributed applications require consensus protocols that can achieve both high throughput and low latency in adversarial environments.
From cryptocurrency networks\cite{digitalEuro,example1,example2} processing millions of transactions daily to geo-distributed databases\cite{ Petersen1996BayouRD,Lloyd2011DontSF,shuai2016,Shen2025MakoSD} serving global user bases, these systems must maintain consistency despite potential node failures and malicious behavior.

Byzantine Fault-Tolerant (BFT) consensus protocols~\cite{Bracha1987AsynchronousBA,Dolev1982AnEA,Cachin2001SecureAE} address this challenge by solving the Byzantine Atomic Broadcast (BAB) problem~\cite{Correia2006FromCT}:  ensuring all honest processes agree on a total order of messages (e.g., transactions) despite Byzantine faults, with both safety and liveness guarantees.

While BFT protocols solve this fundamental problem, classical Asynchronous BFT approaches\cite{Lu2020DumboMVBAOM,honeybadger,Duan2018BEATAB,Liu2020EPICEA} utilize paradigms\cite{Cachin2001SecureAE,ben-or,bkr} that have been perceived as either excessively complex or prohibitively costly for practical deployment.
While recent advances have substantially improved performance~\cite{Abraham2019AsymptoticallyOV,Lu2020DumboMVBAOM}, a novel class of DAG-based protocols \cite{Gagol2019AlephEA,hashgraph}  has emerged, introducing an innovative paradigm that promises optimal round and amortized communication complexity~\cite{bullshark,dag-rider,narwhal}.


DAG-based BFT protocols~\cite{bullshark,dag-rider,narwhal} follow the principle of decoupled data dissemination and consensus to improve throughput and expected latency.
These protocols enable every node to propose transactions in the form of blocks that reference previously known blocks, collectively forming a Directed Acyclic Graph (DAG) across rounds. This parallelization is the key to their increased throughput. 
While the DAG imposes only a partial order, a total order can be derived by selecting specific blocks as leaders. 

DAG-BFT protocols achieve consensus by electing \textit{leader blocks} using a Global Perfect Coin abstraction~\cite{Boneh2001ShortSF,Libert2014BornAR,Shoup2000PracticalTS}. This mechanism significantly improves expected latencies compared to traditional approaches~\cite{Lu2020DumboMVBAOM,honeybadger,Duan2018BEATAB,Liu2020EPICEA}.
Once confirmed, leader blocks and their causal histories (all blocks reachable through directed paths) are \textbf{committed}, then ordered deterministically and executed to produce \textbf{finalized} outcomes for the constituent transactions of blocks.

Prevailing designs couple \textit{finalized} outcomes with block \textit{commitment}.
The leader election process fundamentally governs block commitment.
\textit{Elected leaders may achieve optimal latency} (committed after the next round), while non-leader blocks require inclusion in the causal history of a subsequently elected leader to be committed. This asymmetry creates a latency disparity: even under perfect network conditions without faults, non-leader blocks experience 1-2 additional rounds of delay compared to leader blocks.

Moreover, leader elections typically yield a single leader and occur infrequently, thereby exacerbating latencies for the majority of blocks.
The current state-of-the-art asynchronous DAG-BFT protocol, Bullshark~\cite{bullshark}, has only partially addressed this issue by enabling more frequent leader elections during periods of relative progress.

Rather than accelerating leader election, \sys introduces a theoretical shift:
\begin{quote}
\vspace{-7pt}
    \textit{We may yield \textbf{finalized} outcomes for transactions in a non-leader block \textbf{before} it is committed.}
\vspace{-7pt}
\end{quote}

We term this capability \textbf{Early Finality} --- which decouples finalization from commitment.
\sys demonstrates that commitment via inclusion in a committed leader block's causal history represents a sufficient, but not necessary, condition for obtaining finalized outcomes when it comes to transactions in asynchronous DAG-BFT protocols.

Naively attempting to derive \textit{finalized} outcomes for transactions contained within non-leader blocks before their \textit{commitment} is generally unsafe. 
In asynchronous, adversarial environments, a node's local view of the global DAG may not be complete. 
As such, there may exist blocks unknown to the node---whose transactions conflict with or otherwise influence outcomes of those in its local view. 
Even if the node's local view includes all the blocks, the inclusion of a given block in the causal history of a subsequently elected leader cannot always be guaranteed.

\sys addresses these safety challenges through a key insight: knowing the ordering of conflicting blocks within the eventual committed leader block’s causal history enables the determination of safe, finalized outcomes even before the leader block is proposed.
We implement this through key-space partitioning and deterministic causal history ordering.

\sys partitions the key-space across blocks within each round, ensuring that each block operates on a distinct shard and eliminating most intra-round conflicts.
This constraint, combined with predefined ordering rules, enables nodes to safely derive final transaction outcomes by observing only conflicting actions from previous rounds --- a practically achievable requirement even under network asynchrony and Byzantine adversaries.
While key-space sharding introduces complexity for cross-shard transactions, \sys achieves early finality by re-interpreting the DAG locally without additional coordination. 
Failing the checks does not introduce additional throughput or latency penalties; transactions simply finalize at their original commitment time.

To the best of our knowledge, \sys is the only asynchronous DAG-BFT protocol that provides early finality in such a manner.
We empirically compared \sys with Bullshark, the state-of-the-art asynchronous BFT protocol, in a geo-distributed setting across five AWS regions.
\sys attains up to 65\% lower consensus latency in the failure-free case, while achieving virtually equivalent throughput.
Even under adversarial conditions, \sys maintains its early finality benefits.
In the presence of a single node failure, \sys attains an approximate $50\%$ improvement;
even under maximum tolerable failures, \sys still maintains an approximate $24\%$ latency benefit.



\section{Preliminaries}
\label{sec: prelims}
We consider a set of $n$ nodes: $\Pi = \{p_1,\cdots,p_n\}$, where a dynamic computationally-bounded adversary can corrupt up to $f<\frac{n}{3}$ unique nodes.
Corrupted nodes are considered \textit{Byzantine}, whilst the remainder are regarded as \textit{honest}.
Byzantine nodes may exhibit arbitrary behavior, whereas honest nodes follow the protocol faithfully. We assume a public key infrastructure exists for node identity verification.

We adopt identical timing assumptions as Bullshark: we make no timing assumptions, and progress occurs even under full network asynchrony. 
Messages may be reordered or delayed arbitrarily, but are eventually delivered.

Since \sys utilizes identical dissemination and consensus mechanisms as Bullshark, we assume a reliable broadcast (RBC)\cite{Bracha1987AsynchronousBA}  primitive with agreement, integrity, and validity properties~\cite{bullshark}. 
We also assume a Global Perfect Coin (instantiated with threshold signatures)~\cite{Boneh2001ShortSF,Libert2014BornAR,Shoup2000PracticalTS} for randomized leader election to circumvent the FLP impossibility~\cite{flp}.



\section{Background and challenges}

The current state-of-the-art asynchronous BFT protocols are all DAG-based protocols, with Bullshark being the most performant implementation.
Since \sys utilizes the same state-of-the-art core consensus mechanism, we first examine its design before analyzing the fundamental challenges that prevent early finality.

\subsection{DAG Core}
\label{subsec:bullshark}

DAG-BFT protocols, such as Bullshark~\cite{bullshark}, decouple message dissemination from consensus. The dissemination protocol operates in sequential rounds $r = 1, 2, 3, \ldots$, where each node performs reliable broadcast (RBC)\cite{Bracha1987AsynchronousBA} of a message containing:

\begin{itemize}[itemsep=0pt, topsep=0pt, parsep=0pt,partopsep=0pt]
    \item A unique node identifier. 
    \item A set of client-submitted transactions.
    \item \textit{\textbf{Pointers}} to ($\geq 2f+1$) blocks from the previous round.
    \item Some additional metadata.
\end{itemize}

\textit{Transactions} represent atomic operations that read and modify key-value pairs in a shared state.

Upon successful completion of the RBC, the message becomes established as a block $b$.
\textbf{The RBC primitive provides two crucial guarantees}: eventual availability of a completed broadcast (block) to all correct nodes, and identical block delivery across all correct nodes (non-equivocation).
In effect, this strong primitive mitigates the most disruptive aspects of Byzantine behavior, ensuring that malicious nodes can only interfere with the protocol by remaining silent; much like a crashed node.

These blocks serve as vertices in a graph structure, with their pointers to previous-round blocks forming edges, collectively constructing the global DAG.
It is possible for a block to become \textit{orphaned} if no subsequent block in the following round points to it. We elaborate in \cref{app: orphaned} how \sys handles such blocks.

\subsubsection{Consensus Core}
\label{subsec: consensus in bullshark}

\begin{figure}
    \centering
    \includegraphics[width=.55\linewidth,trim=2.5cm .5cm 2.5cm .3cm]{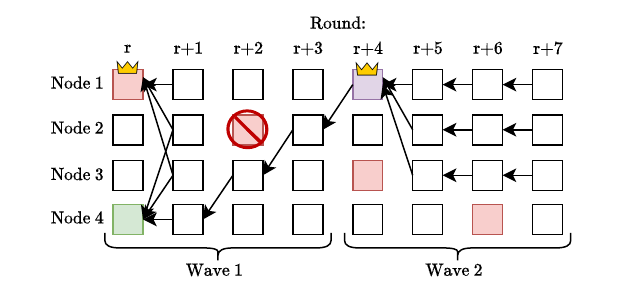}
    \caption{
    An example of latency disparities in Bullshark.  Only critical pointers are shown for clarity. Red blocks denote steady leaders, purple blocks denote fallback leaders, with crowns indicating committed leaders. The steady leader at round $r+1$ commits upon obtaining sufficient votes, while the green block at round $r$ must await inclusion in a future leader's causal history, incurring 8 additional rounds of delay. Were the green block itself a steady leader, it would commit at round $r+1$.
    \lijl{We can also briefly explain the DAG structure in this caption.}
    } 
    \vspace{-5pt}
    \label{fig: inequitable commit}
\end{figure}

Under network asynchrony, nodes may maintain divergent local views of the global DAG. However, since reliable broadcast precludes equivocation, Byzantine Atomic Broadcast is achieved by reaching consensus on a totally ordered list of blocks \textit{designated as leaders}.

The challenge lies in achieving agreement on which blocks are selected as leaders.
Bullshark specifically employs two leader types, where at most one type may commit within each 4-round \textit{wave}:

\begin{itemize}[itemsep=0pt,partopsep=0pt, topsep=0pt, parsep=0pt]
    \item \textbf{Steady leaders}: Selected deterministically via round-robin every 2 rounds. 
    \item \textbf{Fallback leaders}: Selected randomly using a global perfect coin every 4 rounds to circumvent the FLP impossibility~\cite{flp}.
\end{itemize}

When referring to a leader, we mean a block with that allocated designation.
\sys utilizes the same consensus mechanism as Bullshark.
\lijl{This is still introducing DAG background. Don't need to bring up Lemonshark.}\mich{I think a one-liner is fine, since there are many newer dag instantiations nowadays. specifying we are following a particular one seems better.}The leader selection process works as follows: in each wave, blocks produced by a particular node and its pointers constitute either a vote for a \textit{steady} or \textit{fallback} leader depending on whether progress was witnessed as part of its causal history in the previous wave.
A steady leader \textit{commits} when at least $2f+1$ blocks from the subsequent round have pointers to it. Fallback leaders are identified at the end of the wave, where blocks vote to commit by having paths to it. 
Optimal latency is enjoyed by a steady leader that obtains sufficient votes in the following round. It is important to note that at most one leader type may commit each wave. 

This leader commitment process is illustrated in \cref{fig: inequitable commit}, where the blocks are the result of reliable broadcast. The red blocks are steady leaders, and the purple blocks are fallback leaders. The steady leader in $r+1$ obtained sufficient votes and commits, while the missing steady leader in $r+3$ impedes the progress, leading to all block paths in the subsequent wave to be the fallback votes. The fallback leader in round $r+4$ is identified and voted upon at $r+7$.

\subsubsection{Ordering and Executing Transactions}
\label{subsec: Ordering and executing transactions}
\mich{Removed notations for greater clarity}
The consensus core (\cref{subsec: consensus in bullshark}) produces a totally ordered list of leaders. 
We define a block's \textit{causal history} as the set of blocks it has paths to, excluding those already committed by previous leaders. 
A node's view of any block's causal history may vary depending on its perception of the last committed leader. 
Each leader block's causal history is sorted using a deterministic algorithm and ordered in the same sequence as the initial totally ordered list. 

When a new leader $b'$ commits, its sorted causal history is appended to this sequence. Transactions within each block are then executed in this order, yielding \textit{finalized} outcomes for the corresponding transactions and blocks. 
Prior works~\cite{dag-rider,narwhal} employ different leader commitment mechanisms with varying election frequencies.
However, the resulting leader list and transaction execution semantics remain analogous.

\lijl{We can use BullShark to illustrate how a DAG BFT protocol works, then have a dedicated subsection to explain how other DAG protocols differ from BullShark.}
\mich{i have concerns on space, since highlighting the difference between different dag cores does not have a big purpose or is super important in my opinion}


\subsection{Inequitable Finality}
\label{subsec: inequitable finality}

While the ordering and execution mechanism described in \cref{subsec: Ordering and executing transactions} ensures consistency, it creates significant latency disparities between different types of blocks.
For a block to commit, it must either be part of a committed leader's causal history or be a leader itself. Since only a single leader may be elected every 2 rounds, most blocks experience longer commitment latencies than leaders.
\mich{fixed}
Even under optimal network conditions, non-leader blocks experience 1-2 additional rounds of delay compared to leaders, as they must await inclusion in a future leader's causal history. \lijl{We can add some explanation here, or add illustration in the figure}
\mich{I actually thought this was quite obvious from figure 1 haha}

This creates an inherent asymmetry where transaction latency depends not on network conditions or system load, but solely on the arbitrary assignment of transactions to leader versus non-leader blocks.

\subsection{Challenges in Achieving Early Finality }
\label{subsec:impossibility in bullshark}

\begin{figure}[t]
    \centering
    \includegraphics[width=0.6\linewidth,trim=1.5cm .5cm 1cm .3cm]{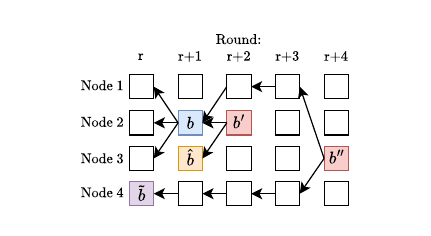}
    \caption{
    Both $\hat{b}$ and $\tilde{b}$ contain transactions conflicting with those in $b$. This figure demonstrates that Bullshark requires commitment to resolve ordering uncertainty for such blocks.
    }

    \vspace{-5pt}
    \label{fig: bullshark issues}
\end{figure}

\lijl{It is not obvious to a reader at this point that the desired goal is to have finality before commitment. Could introduce this goal first and what early finality means.}

This asymmetry raises a fundamental question: can we determine a transaction's finalized outcome before its containing block commits---circumventing the leader election bottleneck? This section demonstrates why early finality determination is unsafe in Bullshark, with formal proofs in \cref{lemma:  Impossibility of early finality in Bullshark}.

The core challenge stems from network asynchrony and Byzantine adversaries creating uncertainty about the complete set of conflicting transactions that may execute before a given block. 
We illustrate this using \cref{fig: bullshark issues}:  Consider Node 2 at round $r+1$ producing block $b$. Its local DAG view includes $b$'s causal history and at least $2f$ blocks from other nodes in round $r+1$. Due to asynchrony and potentially inactive Byzantine nodes, Node 2 proceeds to round $r+2$ without awaiting all $n$ blocks from round $r+1$.

This introduces ordering uncertainty: a block $\hat{b}$ outside Node 2's view may contain transactions affecting $b$'s outcome if $\hat{b}$ precedes $b$ in execution order. Since conventional DAG-BFT protocols impose no restrictions on transaction-to-block assignment, Byzantine adversaries can introduce such blocks in any round, modifying every system key.
If $\hat{b}$ obtains insufficient pointers ($<f+1$) in round $r+2$, we cannot guarantee that a future leader $b'$ (that has a path to $b$) at round $r'$ will have a path to $\hat{b}$. Thus, we must first witness $b'$. 

However, even this is insufficient. After observing $b'$ at round $r'$, uncertainty remains about whether $b'$ will commit; $b'$ may fail to secure sufficient votes, and a different eventual leader ($b''$) may have a different view regarding $\hat{b}$'s effect on $b$---potentially excluding $\hat{b}$ from its causal history entirely. 

These compounding uncertainties yield our main result: without fundamental modifications to how the DAG structure is perceived, early finality is unachievable under network asynchrony with Byzantine nodes. 

\lijl{We could incorporate the example in this figure into the discussion above.}\mich{done}

\section{Problem Definition}

The inequitable finality described in \cref{subsec: inequitable finality} illustrates how arbitrary transaction-to-block assignments lead to the uneven latencies when obtaining finalized outcomes.
Lemonshark thus aims to \textit{determine a transaction's finalized outcome before it is committed}. Thereby allowing all blocks to yield lower latencies!
This section formalizes the notion of \textit{Early Finality} and establishes the theoretical framework for \sys.

\subsection{Refined Definition of Causal History}
We begin by providing a concrete definition of causal history, specifying the exact ordering mechanism:

\begin{definition}[(Sorted) Causal History of a Block]
\label{def: sorted causal history of a block}
For a block $b$, its causal history is the set of nodes that are part of a sub-DAG, where $b$ is the root, exclusive of the blocks that are in the previous known committed leader's causal history.
Its sorted causal history is obtained by applying Kahn's algorithm~\cite{Kahn1962TopologicalSO} to the sub-DAG and reversing the list, where ties (blocks of the same round) are broken deterministically.
We denote this list as $H_{b}=[\ldots,b]$.
\end{definition}

\begin{figure}[t]
    \centering
    \includegraphics[width=1.03\linewidth,trim=0.95cm .2cm 0cm .3cm]{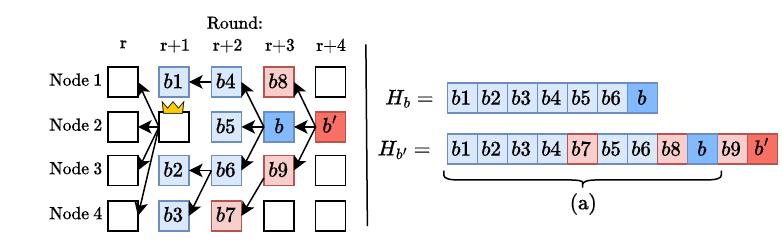}
    \caption{Blue blocks represent the causal history of $b$---excluding previously committed blocks. Blocks are ordered via breadth-first traversal by ascending round number, terminating at $b$. (a) Executing blocks in this order ($b_1,\dots,b$) yields the execution prefix $b'\langle b\rangle$.
    \lijl{Briefly describe the sorting rule in the caption.}\mich{done}} 
    \vspace{-5pt}
    \label{fig: reversekhan}
\end{figure}

\lijl{I wonder if this entire subsection can be merged into the background, and only discuss the new sorting rule.}
Unlike existing DAG-BFT protocols that accept \textit{any} deterministic ordering method, we impose a temporal constraint on the ordering: 
In a block's causal history, blocks from earlier rounds must be ordered before those from later rounds (\cref{fig: reversekhan} illustrates this concept). 
This ordering choice, while intuitive, is \textbf{critical} to \sys's early finality guarantees---arbitrary orderings would undermine our ability to evaluate transaction outcomes before commitment.

When a leader $b'$ commits, all blocks in $H_{b'}$ are committed and executed sequentially. Non-leader blocks are committed if and only if they appear in some committed leader's causal history.
For a committed leader $b'$, we say it \textit{commits} $b$ for all $b \in H_{b'}$ as well as all transactions in $b$. 
Given the strict monotonic ordering between committed leaders, for leaders $b'$ and $b''$ that are committed consecutively, all elements in $H_{b'}$ are committed before those in $H_{b''}$.
\mich{added this below to help with some clarity}

It is also important to note that a block $b$ can only be \textit{committed} by a single leader. Once committed by a leader, $b$ is excluded from the causal history of any other leader.


\subsection{Transaction and Block Outcomes}
\lijl{We can expand Figure 3 to also include illustrations for TO, BO, and potentially SBO.}
To evaluate early finality, we first define outcomes for transactions and blocks:
\begin{definition} [Transaction Outcome (\textit{TO})]
    For a block $b = [t_1,\ldots,t_i,\ldots,t_p]$, the transaction outcome (TO) of $t_i$ is the outcome of $t_i$ when executing transactions in the following order: $H_{b}[:-1]+[t_1,\ldots,t_i]$. We denote the prefix of $H_{b}$ exclusive of $b$ as $H_{b}[:-1]$. 
\end{definition}

\begin{definition}[Block Outcome \textit{(BO)}]
     For a block $b$ and its Sorted Causal History $H_b$, the Block Outcome (BO) of $b$ is the execution results of all transactions in $b$ after executing the blocks in the order of $H_b$.
\end{definition}

The rationale for evaluating transaction and block outcomes based on its own causal history is to eliminate intra-round dependencies; treating it as if it were a leader itself.

\subsection{Early Finality}
\lijl{We could add some intuitive notion of what early finality means first.}
To this end, we define \textit{Execution Prefix} as the outcome of intermediary blocks within a block's sorted causal history $H_{b'}$, computed prior to deriving the BO of $b'$. This serves as our reference point for comparison.

\begin{definition}[Execution Prefix (Block)]
    \lijl{Do we need to include transaction notations in this definition?}\mich{I believe this is the first place blocks and transactions are linked explicitly} For blocks $b',b: b=[t_1,\ldots,t_i,\ldots,t_p]$, where $b \in H_{b'}$. 
    The execution prefix $b'\langle b \rangle$ is the outcome of the transactions in $b$ when executing the prefix of $H_{b'}$ up to and including $b$ (\textit{i.e.}, $H_{b'}[0:index(b)]$ or $H_{b'}[0:index(b)-1]+[t_1,\ldots,t_i,\ldots,t_p]$ ). 
    We describe this as the execution prefix of $b$ with respect to $b'$.
   
\end{definition}

\begin{definition} [Execution Prefix (Transactions)]
    For blocks $b',b: b=[t_1,\ldots,t_i,\ldots,t_p]$, where $b \in H_{b'}$. 
    The execution prefix $b'\langle b(t_i) \rangle$ is outcome of transaction $t_i$ when executing the prefix of $H_{b'}$ right before $b$ and $[t_1,\ldots,t_i]$ (\textit{i.e.}, $H_{b'}[0:index(b)-1]+ [t_1,\ldots,t_i]$).
    We describe this as the execution prefix of $t_i$ with respect to $b'$.
\end{definition}

When the leader block $b'$ (where $b \in H_{b'}$) obtains sufficient votes, the execution prefix of each block/transaction in $H_{b'}$ with respect to $b'$ \textbf{becomes the \emph{finalized} immutable outcome. }
Suppose a non-leader block's BO matches the execution prefix with respect to the leader that commits it.
In that case, deriving the BO is functionally equivalent to deriving the block's eventual finalized outcome. We formalize this as Safe Transaction/Block Outcome:
\begin{definition}[Safe Transaction Outcome (STO)]
    For a particular transaction $t_i \in b$, we say that its transaction outcome is \textit{safe} if for a committed leader block $b': b\in H_{b'}$, the transaction outcome of $t_i$ is equivalent to the execution prefix $b' \langle b(t_i) \rangle $.
\end{definition}

\begin{definition}[Safe Block Outcome (SBO)]
    If all transactions within a block $b$ have STO, we say the block has a \textit{safe} block outcome (SBO). 
\end{definition}

Finally, we define our central objective of \sys:
\begin{definition}[Early Finality]
For a non-leader block $b$, early finality is achieved for $b$ if the SBO of $b$ may be determined before $b$ is determined to be committed. 
\end{definition}

\section{\sys}
\label{sec: lemon}





\begin{figure}[t]
    \centering
    \includegraphics[width=0.37\linewidth, trim=3cm .5cm 2.5cm 0cm]{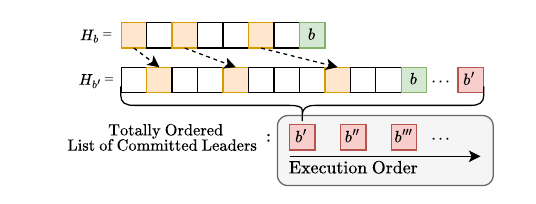}
    \caption{Orange blocks represent all uncommitted blocks from rounds prior to $b$'s round that might impact the \textit{finalized} outcome of $b$. The totally ordered list of committed leaders is generated by the DAG consensus core (\cref{subsec:bullshark}). \lijl{Missing a sentence to claim that SBO can be achieved in this situation}}
     \vspace{-5pt}
    \label{fig: lemon intuition}
\end{figure}

\sys builds upon the block dissemination and consensus mechanisms described in 
\cref{subsec:bullshark}, but restructures block content and reinterprets the DAG structure to enable early finality evaluation. 
As demonstrated in \cref{subsec:impossibility in bullshark}, early finality was unattainable because blocks potentially ordered before $b$---which may contain conflicting transactions---remain ambiguous until the leader committing $b$ is determined. 

\sys addresses this fundamental limitation through a key insight (illustrated with \cref{fig: lemon intuition}): Suppose a block $b$'s causal history includes all uncommitted blocks from prior rounds that can affect its execution outcome. When $b$ is committed by the leader $b'$, our ordering constraint (\cref{def: sorted causal history of a block}) ensures that only blocks from the same round as $b$ may affect $b$'s execution prefix with respect to $b'$. 
If those blocks indeed \textit{do not} affect $b$'s execution, SBO for $b$ can be evaluated before it is known to be committed. 

In \sys, this determination is made independently by each node using its local view of the DAG. 
It is important to note that \sys does not enforce the conditions for early finality;
rather, each node surveys its local DAG and identifies blocks that have met the criterion for SBO. 
All honest nodes will eventually make the same determination or, once the corresponding leader block $b'$ is committed, derive identical \textit{finalized} outcomes for all transactions in $b$. 
When SBO for $b$ is determined before commitment of $b'$, 
it may be delivered to the clients early, reducing perceived latency. 

In this section, we present the sufficient conditions that nodes can check locally to determine whether early finality can be achieved for a block, enabling safe execution results before commitment.

\lijl{I wonder if we could have an ``intuition'' subsection with an overview figure that explains this core concept. Something like Figure 1 in BullShark or Figure 3b in QuePaxa.}\mich{sorta done with the new fig}\lijl{TODO}\mich{section 5 and fig 4?}

\subsection{Sharded Key-Space}
\label{subsec: shared key-space}

As discussed in \cref{subsec:impossibility in bullshark}, each node must proceed after witnessing $\geq 2f+1$ blocks in round $r$, since waiting indefinitely for potentially inactive Byzantine nodes is impractical.

Consequently, when a node produces block $b$ in round $r+1$, it cannot assert that all blocks prior to round $r+1$ containing transactions affecting $b$'s execution prefix with respect to its eventual committing leader are included in $b$'s causal history.

\sys addresses this limitation by requiring each node to operate on a distinct, rotating partition of the key-space within each round. Specifically, we partition the key-space over which transactions operate into $n$ disjoint shards: $K = \{k_1,\ldots,k_n\}$.

In each round, only a single node may produce a block containing transactions that modify keys of a particular shard. 
Furthermore, the node-to-shard mapping rotates each round according to a publicly known schedule. For example, node $p_i$ in-charge of shard $k_i$ at round $r$ becomes in-charge of shard $k_{(i+1)\bmod n}$ at round $r+1$. 
A block containing transactions that exclusively write to keys in shard $k_i$ is designated as being \textbf{in-charge} of that shard. 

For notational clarity, we denote $b_i^r$ as a block from round $r$ that is in-charge of shard $k_i$. 
Clients may broadcast their transactions to all nodes, ensuring that a node in-charge of that key-space will be able to handle it immediately upon receiving it. 

This rotation scheme prevents censorship attacks and simplifies dependency tracking: block $b$ only needs to consider uncommitted blocks from previous rounds that operated on the same shard, rather than all potentially conflicting blocks. 
We assume the key-space partitioning scheme\cite{Okanami2020LoadBF, Han2022AnalysingAI} that achieves load balance while minimizing cross-shard transactions---The specific mechanisms for achieving optimal partitioning are beyond the scope of \sys. 

Key-space partitioning introduces cross-shard transaction complexity. To demonstrate \sys's generality, we define three transaction types:

\begin{itemize}[itemsep=0pt, topsep=0pt, parsep=0pt,partopsep=0pt]
    \item \textbf{Type $\alpha$: }Intra-shard transactions that read and write exclusively within shard $k_i$

    \item \textbf{Type $\beta$: } Cross-shard read transactions that read from multiple shards but write only to shard $k_i$ 

    \item \textbf{Type $\gamma$: } Atomic multi-shard transactions consisting of coordinated Type $\alpha/\beta$ sub-transactions that maintain serializability. 
    
\end{itemize}

These transaction types \textbf{cover essential database operations}\cite{Clark2024ValidatingDS,Tan2020CobraMT,Kingsbury2020ElleII}: local updates ($\alpha$), cross-shard reads ($\beta$), and atomic coordination ($\gamma$). Extensions to additional transaction types are possible, but beyond our current scope. 

For clarity, we focus on Type $\beta$ transactions that read from a single other shard, and Type $\gamma$ transactions that are sub-transaction pairs. Extensions to Type $\beta$ transactions reading from an arbitrary number of shards and Type $\gamma$ transactions as $n$-tuples are detailed in \cref{app: extending beta gamma}.

The remainder of this section examines how each transaction type may achieve early finality, along with various intricacies \sys must handle.


\subsection{Type $\alpha$: Intra-Shard Transactions}
\label{subsec: type a}
Type $\alpha$ transactions represent the optimal case for our \sys, performing both reads and writes exclusively within a single shard. This subsection demonstrates how blocks containing only Type $\alpha$ transactions can achieve SBO before commitment, building the foundation for the more complex transaction types.

We proceed inductively, first analyzing how the earliest uncommitted block in-charge of a shard achieves SBO. We then extend this analysis to subsequent blocks, carefully examining necessary conditions and potential edge cases that arise from evaluating early finality.
\lijl{The following subsections illustrate the protocol in an inductive proof style. I wonder if we can first present the entire requirement and a figure illustration without all the justifications, so the readers get a big picture first. Then dive into why this is safe.}\lijl{TODO}

\subsubsection{First Uncommitted Block}
\label{subsec: base case}


\begin{figure}[t]
    \centering
    \includegraphics[width=0.6\linewidth,trim=2cm .5cm 1.5cm 0.5cm]{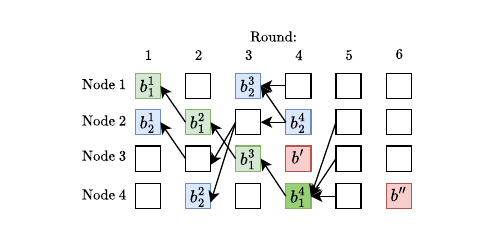}
    \caption{
    Consider a scenario where all blocks before round $1$ has committed. 
    The blue and green blocks are in-charge of two distinct shards ($k_1,k_2$), and the red blocks $b',b''$ are leader blocks.
    Suppose both $b_1^4,b_2^4$ persists in round $5$.
    For $b_2^4$, it has all uncommitted blue blocks as part of its causal history; however, each block does not point to the block of the same shard in the previous round. 
    The block outcome of $b_2^4$ will execute blue blocks in the order of: $(b^1_2,b^2_2,b^3_2,b_2^4)$.
    However, if $b^2_2 \in H_{b'}, b^1_2,b^3_2 \notin H_{b'}$ and $b'$ is committed, then the final order is $b^2_2,b^1_2,b^3_2,b_2^4$; therefore, $b_2^4$ does not have SBO. 
    In contrast, suppose $b_1^3$ has SBO, and $b_1^4$ points to it and persists in round $5$, its BO will execute the green blocks in the order: $(b^1_1,b^2_1,b^3_1,b_1^4)$. 
    Regardless of which subset of green blocks is in $H_{b'}$, the block outcome of $b_1^4$ will always be equivalent to the execution prefix $b'' \langle b_1\rangle$. }
    \vspace{-5pt}
    \label{fig: Type A-2}
\end{figure}

To inductively establish whether a block in-charge of a shard achieves SBO, we begin with the first uncommitted block in-charge of that shard, illustrated by block $b_1^1$ in \cref{fig:  Type A-2}.
\lijl{Is $b_x^y$ the best notation?}

This block possesses a unique advantage: if no leaders exist in the subsequent round and it is committed by a leader $b'$ from round 3 or later, it can guarantee its position as the first block in-charge of shard $k_1$ within the ordering of $H_{b'}$. Our ordering constraint (\cref{def: sorted causal history of a block}) ensures that no other block in-charge of shard $k_i$ may precede $b_1^1$ in $H_{b'}$.

However, $b_1^1$ must still ensure its eventual inclusion in a leader's causal history. This requirement is satisfied by achieving \textit{persistence}. A block \textit{persists} at round $r$ if $\geq f+1$ blocks from that round have paths to it. Since each block contains $\geq 2f+1$ pointers to blocks from the previous round, quorum intersection ensures that if a block persists at round $r$, all blocks from round $r+1$ onwards must have a path\footnote{This requirement aligns with the leader-election criterion for the partially synchronous version of 
Bullshark\cite{partial-bullshark}. } to it.

Consequently, if $b_1^1$ persists at round 2, any leader from round 3 onwards is guaranteed to have a path to it.  Therefore, absent a leader at round 2 and given that $b_1^1$ persists at round 2, its BO equals the execution prefix with respect to any leader from round $\geq 3$ that eventually commits it. 
Since the earliest such leader commits at round 4, exceeding the round at which $b_1^1$ persists (round 2), early finality is achievable for the first uncommitted block in-charge of a shard containing exclusively Type $\alpha$ transactions.





\subsubsection{Leader in the Next Round}
\label{para: leader exp}

\begin{figure}[t]
    \centering
    \includegraphics[width=0.55\linewidth,trim=2cm .3cm 2cm 0.5cm]{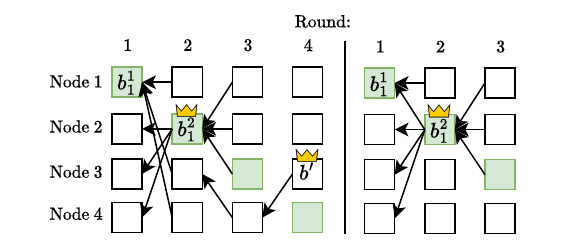}
    \caption{Suppose green blocks are in-charge of a unique shard ($k_1$). On the left portion of the figure, $b'$ eventually commits $b_1^1$. This figure illustrates two ways the restriction discussed in \cref{para: leader exp} may be bypassed.}
    \vspace{-5pt}
    \label{fig: lcheck}
\end{figure}
\lijl{Maybe have a separate subgraph besides the current Figure 4 to illustrate this?}\mich{Done}

In \cref{subsec: base case}, early finality was achieved for $b_1^1$ under the assumption that no leader exists in round 2. 
However, suppose $b_1^2$ is a leader that commits without having a path to $b_1^1$ (illustrated in the left portion of \cref{fig: lcheck}), where as $b_1^1$ is eventually committed by $b'\neq b_1^2$.
\lijl{Figure 6 shows a different scenario, so might be helpful to mention what $b'$ is in this case.}\mich{extended slightly}
In this case, we cannot assert that the BO of $b_1^1$ is safe, as execution follows the totally ordered list of committed leaders, possibly placing $b_1^2$ before $b_1^1$ in execution order.  
We now discuss sufficient conditions to circumvent this situation.

One approach is to wait until leader $b_1^2$ commits: 
Once $b_1^2$ is known to be committed, we can guarantee that no other block in-charge of $k_1$ will precede $b_1^1$ in $H_{b'}$. We formally show this in \cref{prop: leader check_2}. 
Since the earliest $b_1^2$ may commit is at round 3, and the next leader ($b'$) can only exist from round 4 onwards, early finality remains achievable for block $b^1_1$.

Alternatively, if $b_1^2$ has a pointer to $b_1^1$ (illustrated in the right portion of \cref{fig: lcheck}), then committing $b_1^2$ also commits $b_1^1$. Since $b_1^1$ is from an earlier round, our ordering method guarantees it executes before $b_1^2$.
This generalizes as follows: for a block $b_i^{r}$, if $b_i^{r+1}$ is a leader with a pointer to $b_i^{r}$, then $b_i^{r}$ executes before $b_i^{r+1}$ when committed. 

How can we determine which blocks may be elected as leaders in round $r+1$?  At each wave's beginning, the labeling of steady and fallback votes across rounds is predetermined, enabling identification of potential committed steady leaders per wave. However, determining potential fallback leaders is not possible \textit{a priori}.  Consequently, when a fallback leader could potentially commit, we conservatively assume any block in the wave's first round could become a committed fallback leader. In this case, we require $b_i^{r+1}$ to have a pointer to $b_i^{r}$, guaranteeing that $b_i^{r+1}$ will not execute before $b_i^{r}$. We formalize this requirement in \cref{prop: leader check_1}. 

We denote all conditions specified in this subsection as \textit{leader-checks} for shard $k_i$. Specifically, if either condition is satisfied for a block $b$ from round $r$ with respect to block $b_i^{r+1}$, we say that it passes the leader-checks for shard $k_i$. We summarize and illustrate the detailed conditions for leader-checks in \cref{alg:leader_check}.

\subsubsection{Subsequent Uncommitted Blocks}
\label{subsec: alpha extended}

We now demonstrate how blocks other than the earliest uncommitted ones in-charge of a shard may achieve early finality. Reusing the conditions specified in \cref{subsec: base case,para: leader exp} is insufficient. We illustrate this limitation by examining $b_2^3$ in \cref{fig: Type A-2}. 
Suppose $b_2^3$ persists in round 4, passes the leader-check for shard $k_2$ and is eventually committed by $b''$. Further suppose $b_2^2$ does not persist in rounds $3$, making it impossible to determine whether $b_2^2$ will be in $H_{b''}$ preceding $b_2^3$ in execution. 
This is because although $b_2^3$ ensures its eventual commitment by persisting in round 4, the fate of uncommitted blocks in-charge of $k_2$ prior to round 3 remains uncertain.

A straw-man solution would require all uncommitted blocks prior to round 3 and in-charge of $k_2$ to be part of $b_2^3$'s causal history, but we illustrate in~\cref{fig: Type A-2} that this may not be sufficient. Similar to the problem described in \cref{subsec:impossibility in bullshark}, having complete knowledge of all blocks locally does not necessarily imply the execution order. 
A leader's causal history might include only a subset of those blocks, resulting in a completely different execution order from the one in $b_2^3$'s BO. 

However, if a committed leader \textit{must} include a \textit{proper prefix} of those blocks, then we can ensure the eventual execution is consistent with $b_2^3$'s BO.
\sys addresses this challenge by requiring block $b_i^r$ to have a pointer to $b_i^{r-1}$, where the latter has SBO. This creates a recursive path of blocks from $b_i^r$ to the oldest uncommitted block in-charge of $k_i$ in round $\hat{r}$, traversing blocks from rounds $\hat{r}+1$ to $r-1$ that are in-charge of $k_i$.
When such a path exists, we demonstrate in \cref{fig: Type A-2} that this structure is sufficient to ensure that any committed leader must include a proper prefix of this chain; it may not contain an arbitrary subset of blocks from this chain within its causal history.

More generally, if $b_i^r$ points to $b_i^{r-1}$ where the latter has SBO, persists in $r+1$, and passes the leader-check for $k_i$, we can guarantee that no block in-charge of $k_i$ that is not contained in $b_i^r$'s causal history can execute before $b_i^r$. Therefore, we can ensure that the BO of $b_i^r$ will match the execution prefix with respect to the leader that eventually commits it and thus have SBO.

\begin{algorithm}[t]
\footnotesize
\caption{$\alpha$-STO Eligibility Check}
\label{alg:alpha-sto-check}
\begin{algorithmic}[1]
\Function{$\alpha$-StoCheck}{$t \in b_i^r$}
    \If{$\exists t' \in DL_r$ where $t$ modifies same key as $t'$}
        \State \Return \textbf{false}
    \EndIf
    \If{\textbf{not} \Call{LeaderCheck}{$b_i^r, k_i$}}
        \State \Return \textbf{false}
    \EndIf
    \State \Return $\bigl($($b_i^r$ is oldest uncommitted block in-charge of $k_i$) 
    \textbf{or} ($b_i^r$ points to $b_i^{r-1}$ and $b_i^{r-1}$ has SBO)$\bigl)$
    \textbf{and} ($b_i^r$ persists in $r+1$)
\EndFunction
\end{algorithmic}
\vspace{0.5em}
\hrule
\vspace{0.3em}
\footnotesize
\textbf{Note:} $DL_r$ will be elaborated further when we discuss Type $\gamma$ transactions in \cref{sec: type c diff}. LEADERCHECK() is described with \cref{para: leader exp} and illustrated in \cref{alg:leader_check}.
\end{algorithm}

Since the earliest a non-leader block $b_i^r$ can be committed is in round $r+2$, while learning that it persists can be achieved in $r+1$, early finality remains possible. We illustrate all sufficient conditions for a Type $\alpha$ transaction to have STO in \cref{alg:alpha-sto-check} and provide formal proofs in \cref{lemma: STO for alpha}.
We also demonstrate in \cref{prop: persist bound} that since a substantial subset of at least $\frac{3f+2}{2}$ blocks from each round \textit{must} persist in the subsequent round, early finality for some blocks remains achievable even in the presence of byzantine adversaries and network asynchrony.






                                              
\subsection{Type $\beta$: Cross-shard Read Transactions }
\label{subsec: type beta}

Type $\beta$ transactions differ from Type $\alpha$ transactions by reading from a different shard that they write to. 
Since Type $\beta$ transactions continue to write to the shard that their containing block is in-charge of, they retain the same requirements as Type $\alpha$ transactions. Additionally, we must ensure that the perceived read value when evaluating its TO matches that obtained when evaluating the execution prefix of the transaction with respect to the leader that eventually commits it.

In this subsection, we specifically examine how a particular block from round $r$ may exhibit characteristics that accommodate possible modifications to the read value occurring before, during, and after $r$. We then identify conditions sufficient for a Type $\beta$ transaction within it to achieve early finality \lijl{not a complete sentence?}.\mich{fixed}
To aid in our illustration, consider a Type $\beta$ transaction $t$ in a non-leader block $b_i^r$. It reads from $k_j: k_j\neq k_i$ and is eventually committed by a leader block $b'$\lijl{potentially helpful to mention that $k_j$ is a different shard from the writing shard $k_i$}.\mich{fixed}

\subsubsection{Read Value Before $r$}
\label{para: before r}

Similar to \cref{subsec: alpha extended}, we must ensure that all possible uncommitted blocks modifying $k_j$ prior to round $r$ execute before $b_i^r$. This is achieved by having $b_i^r$ point to $b_j^{r-1}$, where the latter has SBO. We can then assert: 
no uncommitted blocks in-charge of $k_j$ existing before round $r$ will execute before $b_i^r$ \lijl{having some question about the expression here. can discuss}. 
Alternatively, if the oldest uncommitted block in-charge of $k_j$ is at round $\geq r$, then no block in-charge of $k_j$ prior to round $r$ may be ordered before $b_i^r$ in execution.

\subsubsection{Read Value During $r$}
\label{para: during r}

\begin{figure}[t]
    \centering
    \includegraphics[width=0.5\linewidth,trim=2cm .3cm 1.5cm .3cm]{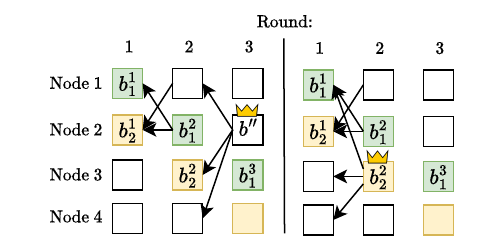}
    \caption{Suppose the green and yellow blocks are in-charge of two distinct shards, $k_1,k_2$ respectively, no blocks exist before round $1$, and leader $b'$ eventually commits $b^2_1$. 
    This figure illustrates the two cases in which $b_2^2$ may be committed while $b_1^2$ is not---a case for Type $\beta$ transactions.} 
    \vspace{-5pt}
    \label{fig: Type B}
\end{figure}

Our ordering procedure (\cref{def: sorted causal history of a block}) orders blocks of the same round in a mere deterministic order. Therefore, it is possible that $b_j^r$ exists in $H_{b'}$ and is ordered before $b_i^r$. 
We do not make any ordering assumptions for blocks within the same round. 
Yet, we have to ensure that $b_j^r$ does not affect the execution prefix of $t$ with respect to $b'$.\lijl{the notation here is a bit heavy}

Instead, if a node were to witness $b_j^r$ in its local DAG, it has to verify if it contains any transactions that write to the key that $t$ reads.
If it does not, then it need not worry if any transaction in $b_j^r$ were to affect the execution prefix of $t$ with respect to $b'$.
However, if it does, $b_j^r$ must be committed by a different leader $b''$ in a round earlier than that of $b'$, so as not to affect the execution prefix of $t$ with respect to $b'$. This condition is analogous to the first one described in \cref{para:  leader exp}. 
There exist some intricacies in \textbf{when} exactly $b_j^r$ is committed. We illustrate 2 cases in  \cref{fig: Type B}, focusing on the green block $b_1^2$.
Suppose $b_1^2$ has already satisfied all requirements described in \cref{para: before r} and contains a Type $\beta$ transaction $(t)$ that reads from the yellow shard $k_2$.

\mich{slightly rephrased this section due to reviewer 3's comments}
\lijl{Not very clear how the following two paragraphs translate to our protocol}
In the first case (depicted in the left portion of \cref{fig: Type B}), $b_2^2\in H_{b''}$, where $b''$ is a committed leader.
Furthermore, $b_2^2$ does not have a path to $b^1_2$ and not $b_1^2$ \lijl{they are in the same round, how can they have a path?}\mich{i made a mistake, its $b^1_2$}, but $b_1^2$ has a path to $b_2^1$ (where the latter has SBO).
Per \cref{para: before r}, $b''$ must therefore have a path to $b_2^1$. At this point, all blocks before round 3 that are in-charge of $k_2$ have been committed, and we can be assured that no block in-charge of $k_2$ before round 3 will supersede $b_1^2$ in $H_{b'}$.
In the second case (depicted in the right portion of \cref{fig: Type B}), $b_2^2$ \textit{is} a committed leader. Regardless of whether $b_2^2$ has a path to $b_2^1$, once $b_2^2$ is known to be committed, $b_2^1$ satisfies the second condition of leader-checks on $k_2$ (per \cref{para: leader exp}) and $t$ achieves STO. Similarly to the previous case, we can be assured that no block in-charge of $k_2$ before round 3 will supersede $b_1^2$ in $H_{b'}$ that does not exist in $H_{b_1^2}$.

\subsubsection{Read Value After $r$}
\label{para: after r}

\begin{algorithm}[t]
\footnotesize
\caption{$\beta$-STO Eligibility Check}
\label{alg:beta-sto-check}
\begin{algorithmic}[1]
\Function{$\beta$-StoCheck}{$t \in b_i^r, k_j: k_j \neq k_i$}
    \If{$\exists t' \in DL_r$ where $t$ modifies same key as $t'$ \textbf{or} \textbf{not} \Call{$\alpha$-StoCheck}{$t \in b_i^r$}}
        \State \Return \textbf{false}
    \EndIf
    \If{\textbf{not} $\bigl($($b_j^r$ is oldest uncommitted block in-charge of $k_i$) \textbf{or} ($b
    _i^r$ points to $b_j^{r-1}$ and $b_j^{r-1}$ has SBO)$\bigl)$}
        \State \Return \textbf{false}
    \EndIf
    \If{$b_j^r$ modifies key that $t$ reads \textbf{and} $b_j^r$ not yet committed}
        \State \Return \textbf{false}
    \EndIf
    \State \Return \Call{LeaderCheck}{$b_i^r, k_j$} \textbf{or} \textbf{not} $\exists t' \in b_j^{r+1}$ that modifies what $t$ reads
\EndFunction
\end{algorithmic}
\vspace{0.5em}
\hrule
\vspace{0.3em}
\footnotesize
\textbf{Note:} $DL_r$ will be elaborated further when we discuss Type $\gamma$ transactions in \cref{sec: type c diff}. LEADERCHECK() is described in \cref{para: leader exp} and illustrated with \cref{alg:leader_check}.
\end{algorithm}

Lastly, similar to \cref{para: leader exp}, we must consider the case where block $b_j^{r+1}$ may contain transactions that modify the key $t$ reads and is committed before $b'/b_i^r$. In such a case, we simply perform the same leader-checks for $b_i^r$ but on shard $k_j$. 
These checks can be omitted if $b_j^{r+1}$ does not contain transactions that modify the key $t$ is reading, since even if it precedes $b_i^r$ in execution, it does not affect the execution prefix of $t$ with respect to $b'$.

Collectively, with the conditions mentioned in \cref{para: before r,para: during r}, we can ensure that blocks in-charge of $k_j$ from rounds before, during, and after $r$ will not impact the execution prefix of $t$ with respect to $b'$. We illustrate the sufficient conditions for a Type $\beta$ transaction to have STO in \cref{alg:beta-sto-check}. Furthermore, all these conditions may be evaluated by analyzing the local DAG before $b'$ is committed; therefore, early finality for Type $\beta$ transactions is achievable. We provide complete formal proofs in \cref{lemma: STO for beta}.


\subsection{Type $\gamma$: Coordinated Cross-Shard Transactions}
\label{subsec: type gamma}

Coordinated operations across multiple shards are a fundamental requirement in distributed databases. Atomic and serializable tuples of transactions enable consistent writes across shards. 
Combined with Type $\alpha,\beta$ transactions, these three transaction types are sufficient to express the core operations required by most distributed database workloads.

A Type $\gamma$ transaction is a specialized tuple of Type $\alpha/\beta$ sub-transactions that are \textit{atomic} and \textit{serializable as a tuple}. As mentioned in \cref{subsec: shared key-space}, we will discuss Type $\gamma$ transactions as a pair of sub-transactions. 

Atomicity is guaranteed if all sub-transactions are eventually executed; this can be achieved by having both sub-transactions include each other as part of its metadata. Therefore, if one is part of a block, the other will be known by all nodes and eventually be incorporated into the DAG and committed as well.

The challenge lies in ensuring that Type $\gamma$ sub-transactions are \textit{serializable as a pair}. We illustrate this with an example: Consider a Type $\gamma$ transaction that swaps the values of two keys $k_j^1 \mapsto \text{\textit{apple}}$ and $k_i^1 \mapsto \text{\textit{orange}}$, where $k_j^1 \in k_j, k_i^1 \in k_i$ and $k_i \neq k_j$.
In this particular instance, the Type $\gamma$ transaction comprises two Type $\beta$ sub-transactions:

\begin{enumerate}[itemsep=0pt, topsep=0pt, parsep=.3pt]
    \item Sub-transaction 1: Read $k_j^1$ and write it into $k_i^1$.
    
    \item Sub-transaction 2: Read $k_i^1$ and write it into $k_j^1$.
\end{enumerate}

The desired outcome should be $k_j^1 \mapsto \text{\textit{orange}}, k_i^1 \mapsto \text{\textit{apple}}$. 
However, as the sub-transactions execute independently and sequentially, the result will be both keys having identical values. This is the case even if both transactions are adjacent in the sorted causal history of a leader that commits them both. 
Furthermore, a third transaction (such as writing $\text{\textit{mango}}$ to $k_i^1$) may be ordered between the two sub-transactions, causing the final outcome to deviate from the desired result. 

Therefore, the sub-transactions that constitute a Type $\gamma$ transaction must be executed atomically at the \textit{same time} for the desired outcome to be achievable.

\subsubsection{Ordering and Executing Type $\gamma$ Sub-Transactions}
\label{subsec: ordering and executing gamma}

Enforcing that both sub-transactions of a Type $\gamma$ transaction are executed together is non-trivial, especially since both components of a Type $\gamma$ transaction may exist in different blocks, potentially in different rounds and within different committed leaders' causal histories.

Type $\alpha$ and $\beta$ transactions follow the commitment and execution procedure described in \cref{subsec: Ordering and executing transactions}, while Type $\gamma$ transaction \textit{executions deviate slightly}.
Consider two sub-transactions that constitute a Type $\gamma$ transaction, $t_1 \in b_i^{r_1}$ and $t_2 \in b_j^{r_2}$, where $b_i^{r_1} \neq b_j^{r_2}$. We consider 3 cases:

\begin{figure}[t]
    \centering
    \includegraphics[width=0.6\linewidth,trim=1.9cm .5cm 1.5cm .3cm]{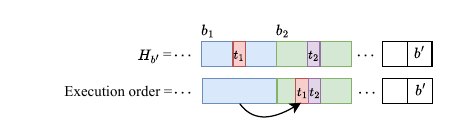}
    \caption{This figure shows how a Type $\gamma$ transaction ($t_1,t_2$) affects the execution order.}
    \vspace{-5pt}
    \label{fig: type c1}
\end{figure}

\begin{itemize}[itemsep=0pt, topsep=0pt, parsep=0pt,partopsep=0pt]
    \item Consider $r_1=r_2$ and $b_i^{r_1}, b_j^{r_2}$ are committed by the same leader. Suppose $b_i^{r_1}$ is ordered before $b_j^{r_2}$ (per \cref{def: sorted causal history of a block}). $t_1$ will not be executed with the other transactions in $b_i^{r_1}$. Instead, $t_1$ is executed concurrently with $t_2$ in $b_j^{r_2}$. We illustrate this example in \cref{fig: type c1}.

    \item Consider $r_1 < r_2$ and $b_i^{r_1}, b_j^{r_2}$ are committed by the same leader. Since $b_i^{r_1}$ is from an earlier round, it is ordered before $b_j^{r_2}$ in the leader's sorted causal history. Therefore, similar to the previous case, $t_1$ is executed together with $t_2$ in $b_j^{r_2}$.

    \item Consider $b_i^{r_1}, b_j^{r_2}$ are committed by different leaders (with $b_i^{r_1}$ committed earlier). Transaction $t_1$ will not execute until $t_2$ is committed, and $t_1$ will similarly be executed together with $t_2$ in $b_j^{r_2}$.
\end{itemize}

Essentially, although sub-transactions may exist in distinct blocks, they are reordered such that they execute together, as-if both exist in \textit{one} of the 2 blocks. 
For illustration, we denote that block as the \textit{\textbf{prime}} block and the corresponding transaction (that exists physically in the block) as the prime sub-transaction. The other block/transaction is likewise referred to as the non-prime block/sub-transaction.

A challenge of achieving early finality for Type $\gamma$ transactions is determining which transaction is \textit{prime}. This is mainly because it requires us to witness both blocks and identify which leader's causal history they are part of.

\sys's technique to achieve early finality for Type $\gamma$ transactions is straightforward: if we can guarantee that both sub-transactions are part of the same leader's causal history, we can determine which sub-transaction is prime. Subsequently, from the position of the prime sub-transaction, we evaluate if STO for both sub-transactions can be achieved before it is committed.
\lijl{Paper prior to this point determines early finality at the block level. Type $\gamma$, however, does it at the transaction level. Might be worth noting this, and briefly explain how this plays with a block that contains other types of transactions.}
\mich{i think beta does so on a transaction level too, only alpha does it on a block level, since its the simplest.}

\subsubsection{Same Round, Same Leader}
\label{para: same leader causal history} 

We now describe how early finality for Type $\gamma$ transactions may be achieved per the first case described in \cref{subsec: ordering and executing gamma}.
Consider two sub-transactions that constitute a Type $\gamma$ transaction, $t_1$ and $t_2$, existing in blocks $b_i^{r_1}$ and $b_j^{r_2}$ respectively.

\paragraph{Same Leader's Causal History}
The requirement that both blocks exist within the same leader's causal history is readily satisfied.
This condition is trivially met when a leader in round $\max(r_1,r_2)+1$ contains both blocks within its causal history. Conversely, when no leader from a round prior to $\max(r_1,r_2)+1$ has either block in its causal history, the persistence of both blocks in round $\max(r_1,r_2)+1$ ensures their eventual inclusion in the same committed leader's ($b'$) causal history. We establish this property in \cref{prop: 2 blocks 1 leader}.

Suppose $b'$ is the leader that ultimately commits both blocks. Under the assumption $r_1 = r_2$ and $b_i^{r_1}$ precedes $b_j^{r_2}$ in the ordering of $H_{b'}$: $t_2$ constitutes the prime sub-transaction. 

\paragraph{Condition for STO}
Firstly, we require that $t_1$ and $t_2$ be evaluated to have STO independently---treating each as an autonomous Type $\alpha$ or $\beta$ transaction. 
When the non-prime sub-transaction $t_1$ achieves STO independently, this indicates that all uncommitted blocks capable of influencing its outcome that exist in rounds $<r_1$ are contained within $b_i^{r_1}$. 

Similarly, satisfying the leader-checks ensures that blocks in the subsequent round $r_1+1=r_2+1$ will not supersede them during execution according to our sorting procedure established in \cref{def: sorted causal history of a block}.
Given that both $b_i^{r_1}$ and $b_j^{r_2}$ are elements of $H_{b'}$, this guarantees that $t_1$ would maintain STO if it were the sole transaction in $b_j^{r_2}$. 

When no additional transactions exist in $b_i^{r_1}$ and $b_j^{r_2}$, this condition alone suffices for STO evaluation of $t_1$ and $t_2$. However, given that this scenario is not frequently realized, we must account for the potential influence of other transactions in $b_i^{r_1}$ and those preceding $t_2 \in b_j^{r_2}$ on the outcome of $t_1$. 
To address this complexity, we impose a comprehensive constraint: all remaining transactions in both $b_i^{r_1}$ and $b_j^{r_2}$ must have STO. This requirement enables the deterministic assessment of their impact on $t_1$ in advance, enabling us to produce STOs for both $t_1$ and $t_2$ that match the execution prefix of $t_1$ and $t_2$ with respect to $b'$.

Given that every transaction within $b_i^{r_1}$ and $b_j^{r_2}$ can be determined to possess STO in round $\max(r_1,r_2)+1$ before commitment, early finality remains achievable for Type $\gamma$ transactions. The complete formal proof is presented in \cref{lemma: gamma 1}.

We relegate the more intricate case where $r_1 \neq r_2$ to the appendix (\cref{lemma: gamma 2}). Nevertheless, the underlying principle is intuitive: if we can reduce the problem to the one described in this section, then early finality is achievable as well.


\subsubsection{Different Leaders, Delay List}
\label{sec: type c diff}

As highlighted by the third case specified in \cref{subsec: ordering and executing gamma}---it is entirely possible that $b_i^{r_1}$ and $b_j^{r_2}$ are committed by two distinct leaders.
In this particular case, we are unable to evaluate the TO of the non-prime sub-transaction (suppose it is $t_1 \in b_i^{r_1}$) until the prime counterpart is committed. This inability to evaluate the TO of $t_1$ makes it impossible to derive the outcome of subsequent transactions from shard $k_i$ that read or write to the same key that $t_1$ modifies. This naturally implies that STO would be impossible to derive as well.

We handle this constraint with a blacklisting approach: by adding $t_1$ in the Delay List for round $r_1$ ($DL_{r_1}$).
The delay list $DL_r$ contains transactions from rounds up to $r$ that are appended in this manner. If a transaction in a block from round $r$ reads or writes to the same key modified by a transaction that exists in $DL_r$, that transaction automatically becomes ineligible to achieve STO. 
To ensure that the delay list is comprehensive, we show in \cref{prop: DL} that when a node deems that a transaction were to fulfill all other requirements for STO, its view of the Delay List must include all transactions that may possibly affect it.

A Type $\gamma$ sub-transaction is only removed from $DL$ if both sub-transactions are committed, or when the prime sub-transaction is evaluated to have STO. We elaborate on how the latter is possible in \cref{lemma: gamma 2}, where $t_1$ and $t_2$ are part of the same leader's causal history but in different rounds.

An unfortunate consequence of the Delay List mechanism is that if $t_1$ is committed earlier, we are delaying its execution until $t_2$ is committed, possibly incurring rounds of additional delay. Nevertheless, because Type $\gamma$ transactions are assumed to be relatively rare (due to how the key-space is partitioned into shards), we expect this issue to have only a negligible impact in practice.

\section{Pipelined Dependent Transactions}

Orthogonal to our main insight in \cref{sec: lemon}, we observe that clients with dependent transactions (where subsequent transactions rely on the outcomes of prior transactions) typically face multiplicative latency penalties. By providing tentative speculated results earlier, subsequent transactions may be introduced  into the DAG sooner (\cref{fig: dependent transactions}). When speculation succeeds, this approach achieves lower latencies; when incorrect, subsequent transactions are merely aborted and latency reverts to the baseline case.
We provide a more detailed analysis, experimental results, and the integration with \sys's core contributions in \cref{app: pipelined}.

\begin{figure}[t]
    \centering
    \includegraphics[width=0.50\linewidth,trim=2cm 0.7cm 2cm .5cm]{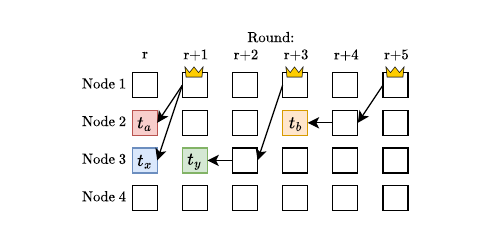}
    \caption{How speculation can aid in reducing latencies for dependent transactions. Suppose the pairs $t_a,t_b$ and $t_x,t_y$ are dependent. However, we allow $t_y$ to be submitted earlier along with a speculated result for $t_x$. $t_y$ execution is contingent on whether that a speculated result is equivalent to the finalized outcome of $t_x$}
    \vspace{-6pt}
    \label{fig: dependent transactions}
\end{figure}

\section{Implementation}

We implement Asynchronous Bullshark by forking Bullshark's \cite{bullsharkImplementation} open source repository, which itself 
forks from the Narwhal project\cite{narwhalImplementation} and utilizes Narwhal's DAG structure for its consensus core. 
We then implement \sys on top of our Asynchronous Bullshark implementation to interpret the DAG differently for early finality. 
The implementation is written in Rust and uses \texttt{tokio}~\cite{tokio} for asynchronous networking. \texttt{ed25519-dalek}~\cite{Dalekcrypto} is used as the cryptographic library and RocksDB~\cite{RocksDB} for persistent storage of the DAG.

\begin{figure*}[t]
    \centering
    \includegraphics[width=.65\linewidth,trim=1cm .5cm 1cm 0cm]{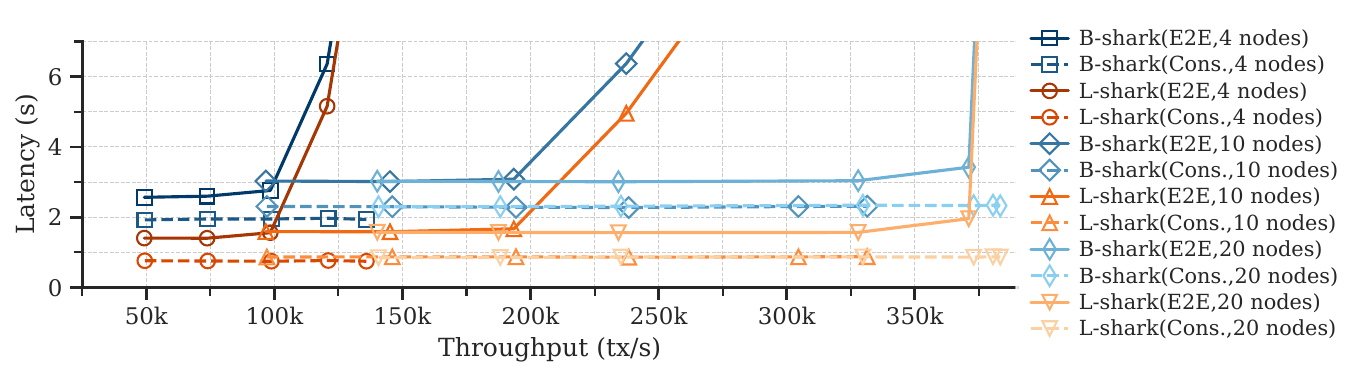}
    \caption{Performance of \sys with Type $\alpha$ transactions vs Bullshark with no faults, varying the number of nodes.}
    \vspace{-15pt}
    \label{fig:exp:max throughput}
\end{figure*}

\paragraph{Baselines:}

We compare \sys explicitly against Bullshark~\cite{bullshark} as it is the only state-of-the-art asynchronous DAG-BFT with a complete open-source implementation. We exclude Tusk~\cite{narwhal} from comparison as Bullshark's greater leader election frequency yields 33\% lower latencies. We also exclude Agreement-on-Common-Set asynchronous BFT protocols like Dumbo~\cite{Lu2020DumboMVBAOM}, as Tusk demonstrates substantial ($20\times$) latency improvements over these approaches.
Uncertified DAG BFT protocols either lack practical implementations~\cite{cordial} or operate under different network synchrony assumptions~\cite{Babel2023MysticetiRT}, which makes direct comparison with \sys inappropriate.

\section{Evaluation}

In theory, early finality in \sys should improve transaction latency while maintaining throughput and scalability of a DAG-BFT protocol.
We now evaluate \sys to empirically demonstrate the benefit of early finality.
In particular, we aim to answer the following questions:

\begin{itemize}[itemsep=0pt, topsep=0pt, parsep=0pt,partopsep=0pt]
    \item Can \sys consistently achieve lower latency through early finality, while maintaining high throughput and scalability?
    \item Can \sys sustain its performance even for cross-shard transactions?
    \item Can \sys maintain its latency benefit even in the presence of failures? 
\end{itemize}

We use two types of latencies in the evaluation.
\textbf{Consensus latency} refers to the time taken for a block to be finalized after its reliable broadcast.
\textbf{End-to-end (E2E) latency} refers to the duration for a transaction within a block to be finalized after it has been generated by the client.
For Type $\gamma$ transactions, both E2E/consensus latencies are recorded with respect to the latest submitted sub-transaction onto the DAG. 

\paragraph{Experimental Setup:}
We deploy our testbed on AWS, utilizing \texttt{m5.8xlarge}~\cite{AmazonEc2} instances distributed across 5 regions: N. Virginia (us-east-1), N. California (us-west-1), Sydney (ap-southeast-2), Stockholm (eu-north-1), and Tokyo (ap-northeast-1). 
Each machine has 10Gbps bandwidth, 32 vCPUs running at 3.1GHz, and 128GiB memory. These machines are selected as they mirror those deployed by production blockchains and represent commodity servers.

\mich{todo: explain the geolocated latencies part:}
As described in \cref{subsec: shared key-space}, shard-rotation is deterministic and public, but clients should broadcast transactions to all nodes. 
Following previous works~\cite{narwhal,bullshark,shoal++}, our clients connect locally to each instance and continuously send streams of 512B ``\texttt{nop}'' transactions.  This approach isolates consensus performance from two orthogonal factors: (1) client geolocation, which would otherwise introduce variable network delays based on client proximity to the instances\footnote{We measured a maximum of $\sim$300ms network latency between the most distant instance pair in our deployment. Even accounting for this maximal penalty, \sys achieves superior end-to-end latencies in all experiments.}, and (2) transaction execution overhead, which would confound measurements of consensus latency. 

To evaluate cross-shard operations, we mark each block's meta at dissemination to denote transaction types it carries (e.g., whether it contains a set of type $\beta$ transactions that read from other shards).

Narwhal proposes scaling dissemination by utilizing an additional ``worker-layer'' that performs the first step of reliable broadcast in batches. Blocks in the DAG then contain hashes representing batches. In our experiments, we utilize batch sizes of 500KB and block sizes of 1000B\footnote{Since each hash digest is 32B long, a 1000B block represents approximately 32K transactions}. 
We utilize a \textit{leader-timeout} value of 5 seconds; if a steady leader should exist in a round, each node waits that duration before proceeding without the leader's block in its local DAG.

As discussed in \cref{subsec: consensus in bullshark}, Byzantine nodes are unable to equivocate due to the reliable broadcast primitive. Therefore, in our experiments, we simulate only \textit{crash-faults}. 
We deviate from past experimental methodologies~\cite{bullshark,narwhal}, normalizing failure behavior by randomly selecting faulty nodes and randomizing steady leader selection. We elaborate in \cref{app: experiment discussion} on why we believe this approach is fairer.
We also elaborate in \cref{app: orphaned} how missing blocks caused by crash faults can be identified by honest nodes. 
All measurements represent the average of 3 independent runs of 3 minutes.

\subsection{Single Shard Transactions}

\sys's latency is optimal when all transactions are Type $\alpha$. As such, we compare that performance with Bullshark with no crash-faults present. 
Since every non-leader block in \sys should qualify for early finality after witnessing sufficient blocks in the subsequent round, consensus latency for all blocks approaches optimal at approximately $65\%$ lower than Bullshark (\cref{fig:exp:max throughput}). 
This performance is sustained even when we increase client transaction rates and committee sizes. Eventually, client transaction rates exceed consensus capacity, causing queues to form indefinitely and latencies to spike.  

For the remainder of our experiments, we utilize a baseline transaction rate of 100K-tx/s and a committee size of 10.
This represents a moderate load that effectively stresses the consensus mechanism while remaining within capacity.


\subsection{Cross-Shard Transactions}
\label{sec: exp type b}

\begin{figure}[t] 
    \centering
    \includegraphics[width=\linewidth]{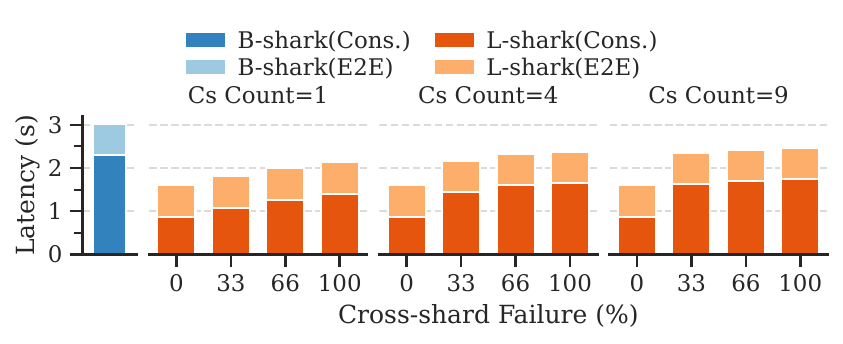}
    \vspace{-20pt}
    \caption{Performance of \sys with Type $\beta$ transactions, 
    while varying the amount of cross-shard activity and STO failure rates; ``Cs Count'' refers to ``Cross-shard Count''.}
    \vspace{-5pt}
    \label{fig:exp2}
\end{figure}

Since Type $\gamma$ transactions consist of pairs of Type $\alpha/\beta$ transactions, their performance closely correlates with that of their constituent sub-transactions in failure-free scenarios.
Therefore, we evaluate cross-shard behavior by simulating Type $\beta$ transactions and $\gamma$ sub-transactions.
We configure 50\%\footnote{We vary this proportion in \cref{app: add. results}} of all blocks to contain cross-shard transactions that either read from a number of shards equal to ``Cross-shard Count'' or have sub-transactions distributed across that many shards. 
When a block in round $r$ contains cross-shard transactions, we randomly determine the number of shards it interacts with (uniformly from 0 to ``Cross-shard Count'') for either reads or sub-transaction placement.
For each shard from which it reads, ``Cross-shard Failure'' denotes the probability that the read key is modified by another block in round $r$, or that the companion sub-transactions do not exist in the same round.

Given the relatively synchronous nature of the AWS network, the primary factor preventing a Type $\beta$ transaction $t$ (from a block in round $r$ reading from $k_j$) from obtaining STO is not pointing to blocks from the previous round (\cref{para: before r}) or failing the Leader-check (\cref{para: after r}), but rather when $b_j^r$ modifies the read value (\cref{para: during r}). 
However, when $b_j^r$ commits before $t$, we can determine its effect on the read value and evaluate STO for $t$. 
This scenario explains why, even when Type $\beta$ transactions are abundant and ``Cross-shard Failure'' rates are high, our consensus latencies remain approximately 25\% lower (\cref{fig:exp2}) than those of Bullshark. 
Even when all transactions are cross-shard, we maintain $\sim$18\% consensus latency reduction when ``Cross-shard Count = 4''  (details in \cref{app: add. results}).

\subsection{Performance under Failures}

\begin{figure}[t]
    \centering
    \setlength{\abovecaptionskip}{1pt}
    \begin{subfigure}{\linewidth}
         \centering
         \includegraphics[width=0.8\linewidth,trim=0cm 0cm 0cm 0cm]{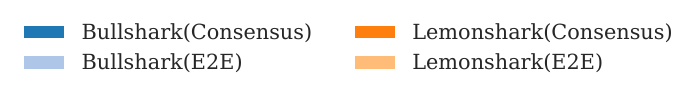}
     \end{subfigure}\\
     \vspace{-0.4\baselineskip}
    \begin{subfigure}[]{0.45\linewidth}
         \centering
         \includegraphics[width=\textwidth,trim=0cm 0cm 0cm .1cm]{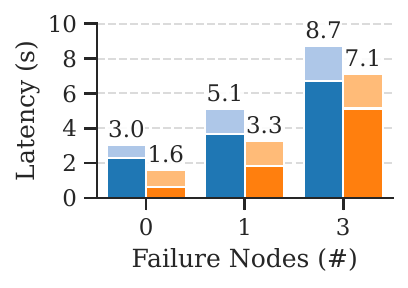}
         \vspace{-20pt}
         \subcaption{}{}
         \label{fig:exp1}
     \end{subfigure}
     \begin{subfigure}[]{0.45\linewidth}
         \centering
         \includegraphics[width=\textwidth,trim=0cm 0cm 0cm .1cm]{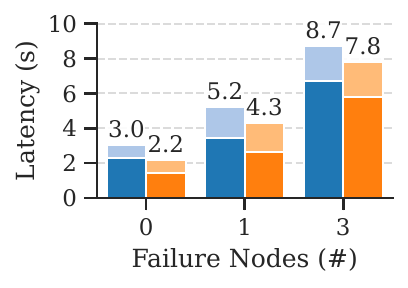}
            \vspace{-20pt}
         \subcaption{}{}
         \label{fig:exp3-2}

     \end{subfigure}
    \caption{Performance of (a) Type $\alpha$ (b) Type $\beta/\gamma$ (with moderate amount of cross-shard activity, (Cross-shard Count = 4, Cross-Shard Failure = 33\%, while varying the number of faults. }
    \vspace{-5pt}
    \label{fig: exp failures}
\end{figure}

We vary the number of crash-faults to evaluate their impact on consensus latency. When $f=1$, \sys's consensus latency improvement decreased from approximately 65\% to approximately 50\%, and further declined to approximately 24\% when $f=3$. As demonstrated in \cref{fig: exp failures}, Type $\beta/\gamma$ transactions exhibited similar performance degradation: approximately 23\% improvement when $f=1$ and approximately 14\% when $f=3$. This behavior is expected, as commitment disruption reduces the likelihood of the scenario described in \cref{para: during r}. For Type $\beta$ transactions, $t$ and $b_j^r$ are often committed together by a fallback leader instead. 
Nevertheless, \sys consistently achieves lower latencies than Bullshark across all failure scenarios.

\subsubsection{Missing Blocks In-Charge of a Shard}
\mich{i was quite unclear here last time, fixed some stuff to add more clarity}
Since only a single node may write to a particular shard per round, if the node in-charge of the shard is faulty, transactions designated to that shard will be delayed till an honest becomes in-charge of that shard. 
This represents a fundamental trade-off in \sys's sharding design; if a transaction is submitted while the node in-charge of it is unavailable, that transaction will incur an additional delay. 
\textbf{Those unfortunate transactions perform slightly worse} than in Bullshark, with an average increase of approximately 500ms (approximately 12\%) in end-to-end latency when $f=1$, and approximately 1500ms (approximately 17\%) when $f=3$.

\section{Related Work}

\paragraph{Traditional Asynchronous BFT}
Traditional DAG-free asynchronous BFT protocols~\cite{honeybadger,Lu2020DumboMVBAOM,Duan2018BEATAB,Liu2020EPICEA} utilize either the BKR~\cite{ben-or,bkr} or CKPS~\cite{Cachin2001SecureAE} agreement-on-common set paradigm; both with expected $O(\log n)$ time complexity. DAG-Rider~\cite{dag-rider} established the current standard for asynchronous BAB protocols by achieving constant expected time complexity with high throughput by establishing a common core\cite{Canetti1995StudiesIS,Dolev2016SomeGI} and subsequently electing a leader.

\paragraph{DAG-BFT with Weaker Network Assumptions}

Several descendants~\cite{bullshark,shoal,shoal++} of early DAG-BFT protocols~\cite{hashgraph,narwhal,dag-rider} reduce commitment latency through leader-focused strategies, such as increasing leader count or introducing faster commitment rules. These improvements come at the cost of reduced resilience to asynchrony~\cite{shoal,shoal++}.
Autobahn~\cite{autobahn} addresses the inefficiency of lock-step dissemination in DAG-BFT but employs PBFT~\cite{pbft}-style commitment, weakening network assumptions to partial synchrony.
In contrast, \sys preserves strong asynchrony assumptions, aligning with DAG-Rider~\cite{dag-rider} and Tusk~\cite{narwhal}. Since \sys builds upon Bullshark, our insights extend to those protocols\cite{dag-rider,narwhal}.

\paragraph{Uncertified DAG-BFT}
Recent work has introduced ``uncertified DAGs''~\cite{cordial,bbca} to reduce DAG-BFT latency by replacing expensive reliable broadcast primitives with best-effort dissemination. However, this introduces a synchronization overhead on the critical path~\cite{autobahn,shoal++}. This overhead can outweigh the latency savings, resulting in significant performance degradation under minor message loss.

\paragraph{Fast Finality in other works}

Other uncertified DAG-BFT protocols, such as Mysticeti-FPC~\cite{Babel2023MysticetiRT}, employ explicit voting to ensure conflict-free transaction ordering for distinct keys, allowing certain transactions to achieve finality faster. However, this approach differs fundamentally from \sys:  Mysticeti-FPC requires explicit coordination, whereas \sys implicitly determines whether early finality conditions are satisfied through local DAG inspection. 

For simplicity, \sys evaluates early finality at the coarse-grained block level. However, the mechanism can be refined to operate at the transaction level exclusively as well (see \cref{app: extending to only STO}). Moreover, Mysticeti-FPC trades network resilience for performance and still inherits the synchronization issues fundamental to all uncertified DAG-BFT protocols.

Similarly, Hotstuff-1~\cite{Kang2024HotStuff1LC} nodes ``vote'' explicitly by conveying their speculative execution results to clients. However, Hotstuff-1 also lacks resilience to network asynchrony, and its single-leader architecture limits throughput scalability.
\section{Conclusion}
In this work, we identify latency disparities between leader and non-leader blocks in DAG-BFT protocols and present \sys: an asynchronous DAG-BFT protocol that enables \textit{early finality}. Early finality allows non-leader blocks to return finalized results before commitment when specific conditions are satisfied, enabling them to achieve optimal latencies previously exclusive to leader blocks. 
\section*{Acknowledgements}
We thank our shepherd Andreas Haeberlen and the anonymous NSDI reviewers for their insightful comments and valuable feedback. We are grateful to Natacha Crooks for her valuable discussions.
Jialin Li was supported by the Singapore Ministry of Education, under Academic Research Fund Tier 2 grant MOE-T2EP20222-0016.

\newpage
{\footnotesize \bibliographystyle{acm}
\bibliography{biblo}}
\appendix

\section{Definitions and Proofs}

\renewcommand{\thealgorithm}{A-\arabic{algorithm}}
\setcounter{algorithm}{0}
\renewcommand{\thefigure}{A-\arabic{figure}}
\setcounter{figure}{0}

\subsection{Definitions for Bullshark}
This subsection describes the Bullshark consensus engine.
We utilize definitions to illustrate better how \sys works in the later sections. 

\subsubsection{Blocks}
\begin{definition}[Rounds and Waves]
\label{def: Rounds and waves}
    The protocol progresses in virtual rounds; For every \textit{round}.
    Beginning from round 1, every 4 rounds constitute a \textit{wave}; round 1-4 belongs to wave 1, round 5-8 to round 2, and so on.
\end{definition}

In each round ($r$), a node $p_i$ may call upon the Reliable Broadcast Primitive to disseminate a message ($rbc(m,p_i,r)$) where $m$ is the message. When a node receives that message, it calls $deliver(m,p_i,r)$. 
The Reliable Broadcast primitive guarantees the following properties:

\begin{itemize}

    \item \textbf{Agreement: }If two honest party calls $deliver(m,p_i,r)$, $deliver(m',p_i,r)$, then $m' = m$. 

    \item \textbf{Validity: }If the sending node is honest, all honest nodes will eventually call $deliver(m,p_i,r)$
    
    \item \textbf{Totality: }If an honest node calls $deliver(m,p_i,r)$, all honest nodes eventually call $deliver(m,p_i,r)$. 
\end{itemize}

In this work, we imagine a 2-phased reliable broadcast primitive akin to Bracha's reliable broadcast \cite{Bracha1987AsynchronousBA}. 

\begin{definition}[Blocks and Pointers]
    \label{def:Blocks and pointers}
    A block $b$ is the result of a message being delivered $deliver(m,p_i,r)$; we say $p_i$ produced the block $b$ in round $r$. 
    \lijl{Any node can deliver the block. ``Produce'' here means the node that does the RBC?}
    \mich{fixed it: removed "delivered by"}
    The message $m$ includes a set of transactions $b = [t_1,\ldots,t_i,\ldots,t_p]$ ordered by $p_i$, as well as a set of at least $2f+1$ blocks from $r-1$. We say $b$ has pointers to any block within that set. 
\end{definition}

In Bullshark, blocks from $r$ might have direct pointers to blocks in $r':r'< r-1$.
These pointers are called ``weak-links''.
However, in this work, we disregard this optimization and focus solely on what they refer to as ``strong-links''; or pointers to the immediate previous round $r-1$.

It is also clear how the blocks and their pointers intrinsically form a DAG, where the blocks are the vertices and the pointers represent the edges. 

\begin{definition} [Block Path]
    For a block $b'$ in round $r'$ and a block $b$ in round $r:r'>r$, $b'$ has a path to $b$ if 1) $b'$ has a pointer to $b$, or 2) there exist blocks in rounds $r'-1$ to $r+1$ that $b'$ has pointers recursively to $b$.
\end{definition}

\subsubsection{Leaders and Votes}

In Bullshark, there exist two classes of leaders: Stable leaders for when the network is relatively synchronous, and fallback leaders when there exists byzantine interference or asynchrony.

\begin{definition}[Stable Leader]
\label{Def:Stable leader}
    A Stable leader is a pseudonym that is deterministically given to a block in the first and third rounds of any wave.
\end{definition}
Stable leaders are typically assigned in a round-robin manner.

\begin{definition}[Fallback Leader]
\label{Def:Fallback leader}
    A fallback leader is a pseudonym randomly given to a block in the first round of any wave.
    This allocation is only revealed at the end of the fourth round of the wave, utilizing a global perfect coin.
\end{definition}

\begin{definition}[Raw Causal history of a block]
    The raw causal history of a particular block $b$ is the blocks to which it has paths to.
\end{definition}

\begin{definition}[Stable Vote]
    For a particular wave $w: w>1$, in the raw causal history of the block produced by $p_i$ in the first round of the wave: if it shows either the 2nd Stable leader in wave $w-1$ or the Fallback leader is committed, then paths by blocks produced in the 2nd and last rounds in $w$ by the same node are considered stable votes.
    \lijl{This definition is not very clear. Consider revising.}
    \mich{Tweaked slightly}
\end{definition}

\begin{definition}[Fallback Vote]
    For a particular wave $w:w>1$. In the raw causal history of the block produced by $p_i$ in the first round of the wave: if it shows neither the 2nd Stable leader in wave $w-1$ nor the Fallback leader is committed, paths by the block produced by $p_i$ in the last round of $w$ to the Fallback leader are considered as a fallback vote.
\end{definition}

In Bullshark, a leader that has garnered sufficient votes is considered \textbf{Committed}. By quorum intersection, it's clear to see that only a single leader type may be committed for the first round of a wave.

\begin{definition}[Committed Leader]
\label{Def:Final Leader}
    A leader is ``Committed'' once it has received sufficient votes; a fallback leader that has received at least $2f+1$ fallback votes, or a stable leader that has received at least $2f+1$ stable votes in the same wave. 
    A leader is also ``Committed'' if it's in another committed leader's causal history, and has obtained sufficient votes: $\geq f+1$ of the appropriate vote type with less than $f+1$ votes of the other vote type present. 
    \lijl{sufficient votes received by the ``other'' committed leader?}.
    \mich{In Bullshark, it is possible that a leader has not obtained 2f+1 votes but in the causal history of a committed leader and received f+1 or more votes (and not enough votes for the other type), it will be ordered before the committed leader. Added slightly more details. }
\end{definition}

\subsubsection{Causal Histories}
Once a leader is committed, 
the blocks it has paths to are ordered deterministically. As such, we define:

\begin{definition}[(Sorted) Causal History of a Block]
\label{def: sorted causal history}
For a block $b$, its causal history is the set of nodes that are part of a sub-DAG, where $b$ is the root, exclusive of the blocks that are in the previous known committed leader's causal history.
Its sorted causal history is obtained by applying Kahn's algorithm to the sub-DAG and reversing the list. where ties (blocks of the same round) are broken deterministically.
We denote this list as $H_{b}=[\ldots,b]$.
\end{definition}

The causal history of any block excludes blocks that are already committed, as those blocks cannot be executed again and are safe to ignore.
In contrast to previous works that accept any deterministic method, we specify a particular ordering procedure. This choice is both intuitive (older blocks execute first, in round-by-round order) and critical to \sys, ensuring that older blocks never precede newer blocks. 
We do not impose constraints on how blocks from the same round are ordered, provided the ordering remains deterministic. 
We illustrate this concept in \cref{fig: reversekhan}.

Once a leader commits, we commit and execute transactions within blocks of its sorted causal history sequentially, proceeding block by block.
Consequently, non-leader blocks are \textit{committed} if and only if they exist in the sorted causal history of a committed leader block, where any non-leader block may be included in at most one committed leader's causal history.
For a committed leader $b'$, we say it \textit{commits} $b:b \in H_{b'}$ as well as all transactions in $b$. 
Given the strict monotonic ordering between committed leaders, for leaders $b'$ and $b''$ that are committed consecutively, all elements in $H_{b'}$ are committed before those in $H_{b''}$.

\begin{definition}[Commitment of a non-leader block]
    For a non-leader block $b$, it is committed if it is in the causal history of a committed leader block $b': b \in H_{b'}$ and $b'$ has sufficient votes. We say $b$ and its transactions are committed by $b'$ in the order of $H_{b'}$.
\end{definition}

\begin{definition}[Commitment ordering]
\label{def: ordering}
    For two subsequently committed leaders $b',b''$ from rounds $r',r''$ such that $r'<r''$ where $H_{b'} = [b_1,b_2,b'],H_{b''}=[b_x,b_y,b'']$ 
    We say $b'$ and elements in $H_{b'}$ are committed before $b''$ and elements in $H_{b''}$. 
    Similarly in $H_{b'}$ we say $b_1$ is committed before $b_2$.
\end{definition}

    

\subsubsection{Transaction/Block Outcomes}

\begin{definition}[Key-spaces and Transactions]
    Let $K$ represent the key-space of a database, and a transaction as a unit of work that modifies a key $k\in K$ in the database. 
\end{definition}

\begin{definition} [Transaction Outcome (\textit{TO})]
    For a block $b = [t_1,\ldots,t_i,\ldots,t_p]$, the transaction outcome (TO) of $t_i$ is the outcome of $t_i$ when executing transactions in the following order: $H_{b}[:-1]+[t_1,\ldots,t_i]$, where $H_{b}[:-1]$ is the prefix of $H_{b}$ exclusive of $b$.
    \lijl{What does ``outcome'' mean here?}
    \mich{outcome is just the result of the transaction. It can be pictured as the resultant state of the keys modified.}
    \lijl{One issue: $H_{b}$ excludes histories of the previous committed leader. It is unclear if that excluded history impacts TO.}
    \mich{Since the previous committed leader's causal history is already ordered and executed. It should not matter in this case. }
\end{definition}

\begin{definition}[Block Outcome \textit{(BO)}]
     For a block $b$ and its Sorted Causal History $H_b$, the Block Outcome (BO) of $b$ is the execution results of all transactions in $b$ after executing the blocks in the order of $H_b$.
     \lijl{Isn't this just the transaction outcome of all transactions in $b$?}
     \mich{yes}
\end{definition}

It is key to note that a block's causal history changes based on which leader a node believes is the latest that has been committed.
Therefore, a node's view on a block's sorted causal history, as well as the TO of the transactions within the block might not be static.

\begin{definition}[Execution Prefix (Block)]
    For blocks $b',b: b=[t_1,\ldots,t_i,\ldots,t_p]$, where $b \in H_{b'}$. 
    The execution prefix $b'\langle b \rangle$ is the outcome of the transactions in $b$ when executing the prefix of $H_{b'}$ up to and including $b$ (\textit{i.e.}, $H_{b'}[0:index(b)]$ or $H_{b'}[0:index(b)-1]+[t_1,\ldots,t_i,\ldots,t_p]$ ). 
    We describe this as the execution prefix of $b$ with respect to $b'$.
    \lijl{It doesn't seem important to include the transactions in $b$ in this definition.}
\end{definition}

\begin{definition} [Execution Prefix (Transactions)]
    For blocks $b',b: b=[t_1,\ldots,t_i,\ldots,t_p]$, where $b \in H_{b'}$. 
    The execution prefix $b'\langle b(t_i) \rangle$ is outcome of transaction $t_i$ when executing the prefix of $H_{b'}$ right before $b$ and $[t_1,\ldots,t_i]$ (\textit{i.e.}, $H_{b'}[0:index(b)-1]+ [t_1,\ldots,t_i]$).
    We describe this as the execution prefix of $t_i$ with respect to $b'$.
\end{definition}

When the leader block $b'$ (where $b \in H_{b'}$) obtains sufficient votes to commit in round $r'$, the execution prefix of each block/transaction in $H_{b'}$ with respect to $b'$ becomes the \emph{finalized} immutable outcome.

\begin{definition}[Safe Transaction Outcome (STO)]
    For a particular transaction $t_i \in b$, we say that its transaction outcome is \textit{safe} if for a committed leader block $b': b\in H_{b'}$, the transaction outcome of $t_i$ is equivalent to the execution prefix $b' \langle b(t_i) \rangle $.
    \lijl{It is important to state the transaction outcome with respect to which block. I assume it is with respect to block $b$}
\end{definition}

\begin{definition}[Safe Transaction Outcome (SBO)]
    If all transactions within a block $b$ have STO, we say the block has a \textit{safe} block outcome (SBO). 
\end{definition}

\begin{definition}[Early Finality]
For a non-leader block $b$, early finality is achieved for $b$ if the SBO of $b$ may be determined before $b$ is determined to be committed. 
\lijl{I believe the commit rule of non-leader blocks was not explicitly defined.}
\lijl{We need to clearly define the timing (local vs global), so terminology such as ``before'' is meaningful.}
\mich{I think its mentioned in Def A.11}
\end{definition}

\begin{definition}[Block Persistence]
\label{def: Persistence}
    We say that a block $b$ in round $r$ persists in round $r':r'>r$, if for any set of $2f+1$ blocks in round $r'$, at least one block in the set has a path to $b$.
    \lijl{Is this from the perspective of a local DAG? Or globally?}
    \mich{Does not matter, it applies to any set of $2f+1$ blocks you have}
\end{definition} 

\begin{proposition}
    \label{prop:persistence}
    A block $b$ in round $r$ has greater than $f$ blocks in round $r+1$ pointing to it \textit{iff} it persists in round $r+1$.
\end{proposition}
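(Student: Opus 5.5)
Both directions come from reading \cref{def: Persistence} with $r'=r+1$. Since a path from a round-$(r+1)$ block to a round-$r$ block is just a direct pointer, ``$b$ persists in round $r+1$'' means: every set of $2f+1$ blocks from round $r+1$ contains a block that points to $b$. Let $P$ be the set of round-$(r+1)$ blocks pointing to $b$. The statement then reduces to: every $(2f+1)$-subset of round-$(r+1)$ blocks meets $P$ if and only if $|P|\ge f+1$. This is a counting argument over the at most $n=3f+1$ blocks of a round, one per node, which is guaranteed by the agreement property of reliable broadcast.

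For the forward direction ($|P|\ge f+1 \Rightarrow$ persists), I would use quorum intersection. Take any set $S$ of $2f+1$ round-$(r+1)$ blocks. Both $S$ and $P$ are subsets of at most $3f+1$ round-$(r+1)$ blocks, so $|S\cap P|\ge (2f+1)+(f+1)-(3f+1)=1$. Some block of $S$ therefore points to $b$, and $b$ persists.

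For the reverse direction (persists $\Rightarrow |P|\ge f+1$), I would argue by contraposition. Suppose $|P|\le f$. Then at least $n-f\ge 2f+1$ round-$(r+1)$ blocks do not point to $b$, so one can pick $2f+1$ of them as a set $S$ with $S\cap P=\emptyset$. This $S$ witnesses that $b$ does not persist.

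The main obstacle is the reverse direction. It needs at least $2f+1$ blocks that actually exist in round $r+1$ and avoid $P$, which is not automatic if fewer than $n$ blocks are produced (for example, Byzantine nodes stay silent). I would handle this by interpreting \cref{def: Persistence} over the global round-$(r+1)$ block set, taking $n=3f+1$ as in the Bullshark setting, and observing two things. First, if $2f+1$ round-$(r+1)$ blocks do not exist at all, the universal condition holds vacuously, so the statement should be read with respect to the blocks that exist. Second, counting over the $n$ node slots still gives a sound equivalence when the definition ranges over any hypothetical set of $2f+1$ distinct producers' blocks. I would also state explicitly that the pointers of $P$ are fixed by reliable broadcast, so $P$ is the same in every honest view. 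With that, the forward direction is the one later results use (e.g.\ that all blocks from round $r+2$ onward have a path to a persisting $b$), and it follows cleanly from the same intersection bound.
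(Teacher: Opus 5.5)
Your argument is the same as the paper's: quorum intersection, since $(2f+1)+(f+1)>3f+1$, gives the forward direction. For the converse, with at most $f$ pointers there remain at least $2f+1$ round-$(r+1)$ blocks avoiding $b$, which contradicts persistence.

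The paper's proof silently assumes $n=3f+1$ and that all $3f+1$ round-$(r+1)$ blocks exist, which is exactly the caveat you make explicit. One small correction: your vacuous-truth remark shows the literal converse failing rather than rescuing it. It is your reading over all $n$ producer slots that makes the equivalence hold.
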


\begin{proof}
    By quorum intersection.
    If there are at least $f+1$ blocks in round $r+1$ which have a path to $b$, then clearly any subset of size $2f+1$ blocks must contain at least one such block. 
    If $b$ persists, assume for the sake of contradiction that at most $f$ blocks point to it. Take the set of all blocks that do not have a path to $b$. This set clearly has size at least $2f+1$, contradicting that $b$ persists.
\end{proof}

The notion of Block persistence is that if a block has sufficient pointers to it at some round, then it will definitely be included in some committed leader's causal history.

\subsubsection{Impossibility of Early Finality in Bullshark}
In Bullshark, there is no restriction on which transactions may be included in which block.
As such, a block may include transactions that touch on all keys in $K$. 
Therefore, in our proof, we demonstrate that it's precisely because of this general approach to transaction block inclusion that leads to Bullshark being unable to achieve early finality.

\begin{proposition}
\label{prop: some no persist}
    Due to network asynchrony, there may exist at least $f$ blocks from round $r$ that do not persist in $r+1$. 
\end{proposition}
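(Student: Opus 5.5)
We prove the statement by exhibiting an admissible execution, that is, a schedule the asynchronous adversary can enforce, in which $f$ blocks of round $r$ fail to persist in round $r+1$. By \cref{prop:persistence}, a block $b$ in round $r$ persists in round $r+1$ iff more than $f$ blocks of round $r+1$ point to it. So it suffices to schedule message delivery so that some set $S$ of $f$ round-$r$ blocks each receives at most $f$ pointers from round $r+1$.

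\textbf{Step 1: fix the victims and the delay.} Fix a set $S$ of $f$ nodes, honest or Byzantine; their blocks are the intended victims. The adversary controls message delays, so it postpones the completion of the reliable broadcasts of the $S$-blocks of round $r$. These broadcasts still complete eventually, as RBC validity requires. Meanwhile the $n-f\ge 2f+1$ round-$r$ blocks from nodes outside $S$ are delivered promptly to everyone.

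\textbf{Step 2: let honest nodes advance without the victims.} An honest node moves to round $r+1$ once it has delivered $2f+1$ round-$r$ blocks; it cannot wait for more, since up to $f$ nodes may be silent. Every node outside $S$ therefore builds its round-$r+1$ block pointing only to blocks outside $S$. That gives at least $n-f$ blocks of round $r+1$ with no pointer to any block of $S$. At most the $f$ nodes of $S$ themselves could point to $S$-blocks, so each $S$-block has at most $f$ pointers from round $r+1$. By \cref{prop:persistence}, none of the $f$ blocks in $S$ persists in round $r+1$.

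\textbf{Step 3: check admissibility.} We must confirm the schedule is consistent with the model. Every message is eventually delivered, so the $S$-blocks do enter all honest DAGs, but only after round $r+1$ has been formed. RBC properties are respected throughout. The phrase ``at least $f$'' in the statement is met with exactly $f$, which is the maximum forcing possible here: persistence needs only $f+1$ pointers, so the adversary cannot in general starve more than $f$ blocks by scheduling alone.

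The main obstacle is Step 3: arguing that honest nodes genuinely proceed after $2f+1$ blocks and cannot distinguish slow blocks from crashed nodes. This is the standard FLP-style indistinguishability argument. I would cite the protocol's round-advance rule in \cref{subsec: shared key-space} directly. No quorum calculation beyond $n-f\ge 2f+1$ is needed.
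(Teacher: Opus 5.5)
Your construction is correct and is essentially the paper's argument: the paper lets a partition $\pi_1$ of $2f+1$ nodes see only one another's round-$r$ blocks, so the remaining $f$ round-$r$ blocks (your delayed set $S$) can be pointed to only by the $f$ nodes of $\pi_2$, and so they do not persist in $r+1$. Delete the aside in Step 3 that $f$ is the most the adversary can starve by scheduling alone, because it is false for $f\ge 3$: with $n=10$, the adversary can make every round-$(r+1)$ block point to the same six round-$r$ blocks plus one of the remaining four, splitting those ten extra pointers as $3,3,2,2$, so four blocks get at most $f=3$ pointers and fail to persist---this does not affect your proof, since the statement only asks for at least $f$.
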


\begin{proof}
    Suppose at round $r$, there exists a 2 partitions of nodes: $\pi_1,\pi_2$ such that $\pi_1 \cap \pi_2 = \emptyset, \pi_1 \cup \pi_2 = \Pi, |\pi_1|= 2f+1$.
    Its possible that the blocks in $\pi_1$ only learn of blocks in its own partition. As a result, its blocks in $r+1$ only point to the blocks produced by nodes in $\pi_1$. 
    
    Suppose at this point the partition is lifted. The nodes in $\pi_2$ now see all blocks from round $r$. Each block in $\pi_2$ may point to all blocks created by those in $\pi_2$, but those blocks may only get at most $f$ pointers, and therefore do not persist in $r+1$.


\end{proof}

\mich{interesting enough, it seems that with more nodes, the number of blocks that may not persist increases. Suppose $f=1$, we can guarantee at most 1 block does not persist. But with $f=3$ i can have $4$. }
\mich{not really necessary, but I put this here:}

\begin{lemma}
    \label{lemma: Impossibility of early finality in Bullshark}
    Consider the Bullshark protocol; in the presence of Byzantine adversaries and network asynchrony, it is not possible to determine if a non-leader block $b$ in round $r$ has SBO before it is committed. 
\end{lemma}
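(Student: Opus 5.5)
We argue by an indistinguishability (adversarial) construction. Fix an honest node $p$ and a non-leader block $b$ in round $r$ containing a transaction $t$ that reads and writes some key $k$; since Bullshark places no restriction on which keys a block may touch, this is always possible. Suppose, for contradiction, that there is a rule by which $p$ declares $b$ to have SBO at some local time before $p$ learns that $b$ is committed. We will build two executions that look identical to $p$ up to that moment but lead to different finalized outcomes for $t$, namely different execution prefixes $b'\langle b(t)\rangle$. Whatever value $p$ reports is then wrong in one of them.

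\textbf{Step 1: a hidden conflicting block.} By \cref{prop: some no persist}, the asynchronous schedule can make some block $\hat b$ in round $r$ (or any earlier uncommitted round) fail to persist in round $r+1$. It then has at most $f$ pointers, and $p$ may not even have delivered it. Let $\hat b$ be produced by a Byzantine node (or by a slow honest node) and contain a transaction writing $k$. Because $\hat b$ does not persist, \cref{prop:persistence} says no quorum argument forces future blocks to reach it. The adversary therefore controls whether the eventual committing leader $b'$ has a path to $\hat b$. If it does, the tie-breaking within round $r$ can order $\hat b$ before $b$ in $H_{b'}$ (choose node identities accordingly). If it does not, $\hat b$ is absent from $H_{b'}$.

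\textbf{Step 2: $p$ cannot resolve this before commitment.} The BO of $b$ is computed from $H_b$. It either includes $\hat b$ (if $b$ points to it) or excludes it, and the adversary can realise the opposite choice in $H_{b'}$. There is one exception: if $b$ itself has a path to $\hat b$, then $\hat b\in H_b$, but then the adversary instead places a conflicting block $\tilde b$ in round $r$ that $b$ does not see. If $p$ waits until it sees a candidate leader $b'$ with a path to $b$, the adversary denies $b'$ sufficient votes. It uses the $\le f$ Byzantine nodes plus delays so that $b'$ gathers fewer than $2f+1$ votes, which is indistinguishable to $p$ from the votes merely being slow. A later leader $b''$ is then committed whose causal history has the other status of $\hat b$, and by the first clause of Definition \ref{Def:Final Leader} it either re-commits $b$ or does not. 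Fallback leaders are chosen by the coin only at the end of the wave, so no wave-local information pins down which leader commits $b$. The key consequence is that every local view in which $b$ is not yet known committed has an extension in which $b'\langle b(t)\rangle$ includes $\hat b$'s write, and one in which it does not.

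\textbf{Step 3: conclude.} Choose the write in $\hat b$ so that the two prefixes yield different outcomes for $t$. By Agreement of RBC, both executions are consistent with all honest deliveries seen by $p$. Hence any SBO declaration is unsafe in one of them, contradicting Definition \ref{def: sorted causal history} plus the STO definition.

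The main obstacle is Step 2. We must construct rigorously, within Bullshark's vote rules, a schedule in which the leader $p$ currently sees fails to commit, while a different leader with a different view of $\hat b$ commits $b$. We also have to check that this stays consistent with the indirect-commit rule. I would handle it by delaying votes past the wave boundary and letting the coin select a fallback leader whose history was chosen adversarially.
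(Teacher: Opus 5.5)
Your proposal is correct and follows essentially the same route as the paper. Both arguments use a conflicting block in round $r$ that fails to persist in round $r+1$. Both then observe that, before commitment, nobody can tell whether the observed leader $b'$ will gather enough votes, or whether it will be superseded by a different leader $b''$ whose causal history has the opposite status of that block. Your two-execution indistinguishability framing, and your extra care about tie-breaking, vote denial and the indirect-commit rule, only make explicit what the paper's two-branch case analysis at round $r'+\omega-1$ leaves implicit.
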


\begin{proof}
    Suppose $b$ persists in round $r+1$ and there exists a committed leader $b'$ from $r'$ obtains sufficient votes in round $r'+\omega$ such that $b \in H_{b'}$. We focus on a single transaction $t \in b$, that modifies key $k\in K$.

    Consider some block $a$ in round $r$ which contains transactions that modifies every single key in $K$. By~\cref{prop: some no persist}, it is possible that by the means of an inactive adversary, or manipulated network ordering, $a$ does not persist in round $r+1$.

    It is clear that if $a$ is ordered before $b$ in execution, it will affect the BO of $b$. Since $a$ does not persist, for any block $b'$ there is guarantee that $a \in H_{b'}$. Suppose that it holds that $a \in H_{b'}$, and for a block $b''$ it holds that $a \in H_{b''}$.

    For early finality, we must evaluate if $b$ has SBO before it is committed. 
    
    Now, at round $r'+\omega - 1$, it is not known if $b'$ will get sufficient votes to be committed as leader. If it is committed, then to achieve early finality, we must evaluate if $b$ has SBO by this round. Suppose the SBO evaluation is that $a$ is not committed and executed before transactions of $b$. Then consider that at round $r' + \omega$ it is revealed that $b'$ is committed as leader, and so $a$ is ordered before $b$. This conflicts with the evaluated BO for $b$, and violates correctness. Conversely, suppose the SBO evaluation is that $a$ is committed. Then suppose that at round $r'+ \omega$ it is revealed that $b'$ is not committed, and eventually the next leader committed is $b''$, where we have that $a \notin H_{b''}$, again violating correctness. 
    As such, it is not possible to correctly decide on the finalized outcome of executing the transactions in $b$ at round $r'+\omega - 1$. 
    
    Therefore, early finality is not possible in the presence of Byzantine adversaries and network asynchrony


\end{proof}


\subsection{Lemonshark}
\label{app:lemon proofs}

\begin{definition}[Sharded Key-space]
For the Key-space $K= \{k_1,k_2,\ldots,k_n\}$, let it be partitioned into $n$ distinct shards $k_i = \{k_i^1,k_i^2,\ldots\}:\forall i,j \in n,i\neq j, k_i \cap k_j = \emptyset$. We say a block is in-charge of a shard if its transactions only modify keys from that particular shard.
\end{definition}

\begin{definition}[Shard ownership]
   A block $b$ from round $r$ and in-charge of a shard $k_i \in K$ is denoted as $b_i^r$.
\end{definition}

Lemonshark supports three types of transactions, each capable of early finality. For simplicity and as a proof of concept, we present these transaction types and argue that they are sufficient to cover the essential operations in typical database usage, namely, local updates, cross-shard reads, and coordinated atomic actions. These transaction types are:

\begin{itemize}
    \item \textbf{Type $\alpha$: }A transaction in a block in-charge of $k_i$ that reads from and writes to $k_i$. 

    \item \textbf{Type $\beta$: }A transaction in a block in-charge of $k_i$ that reads from a key in $k_j$ where $j\neq i$ and writes to $k_i$.

    \mich{changed to sub-transactions}
    \item \textbf{Type $\gamma$: }An unordered pair of Type $\alpha/\beta$ sub-transactions that are atomic and pair-wise serializable. 
\end{itemize}

For clarity, we focus on Type $\beta$ transactions that read from a single other shard, and Type $\gamma$ transactions that are sub-transaction pairs. Extensions to Type $\beta$ transactions reading from arbitrary numbers of shards and Type $\gamma$ transactions as $n$-tuples are detailed in \cref{app: extending beta gamma}.

\begin{definition}[Pair-wise serializable]
    A pair of sub-transactions is considered Pair-wise serializable if it is executed concurrently, with no other transaction interleaving it. 
\end{definition}

Since a Type $\gamma$ transaction is two separate sub-transactions that may exist in two separate blocks, it's possible that one sub-transaction is committed before the other. For Type $\gamma$ sub-transactions to be pair-wise serializable, they must be executed together. Therefore, the earlier committed sub-transaction must be delayed. This is achieved by means of a \textit{Delay List}.

\begin{definition}[Delay List ($DL$)]
    A Delay list from round $r$ includes an ordered list of transactions belonging to rounds up to $r$; we denote this at $DL_r$. Any transaction $t$ that reads or modifies a key from round $r$ automatically fails to gain STO if there exists a transaction in $DL_r$ that also modifies the same key.  
    
\end{definition}
\mich{proof sketches from here on out, will polish}
Suppose for a pair of sub-transactions $t_1,t_2$ that make up a Type $\gamma$ transaction. If one of the sub-transactions $(t_1)$ is committed before the other, or exists in an earlier round than the other $(t_2)$, the former is placed into the ordered delay list.

It is only removed from the DL once $t_2$ is committed or has STO. 
$DL_r$ includes all transactions that might be included in it before and up to round $r$.
Since a transaction that is placed into the delay list has an unknown outcome until the other half is committed, we make the restriction that a transaction $t$ that modifies or reads a key $k_i^j$ may not have STO if there exists a transaction in $DL$ that also modifies $k_i^j$. 

Since only one block may modify a key $k$ every round, we need not worry about concurrent additions into the $DL_r$ involving the same shard. 

\subsection{Leader check}
Here, we discuss in more formal detail the leader check mentioned in ~\cref{para: leader exp}
\begin{figure*}[t]
    \centering
    \includegraphics[width=0.8\textwidth, trim=0cm 0cm 0cm -.5cm]{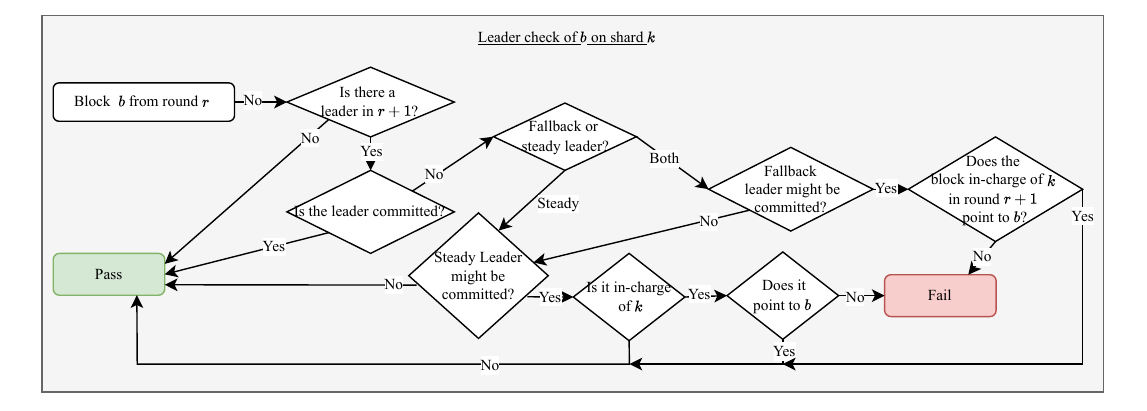}
    \caption{This Figure illustrates the checks to ensure that if a leader block in-charge of a shard $k_i$ in the immediate subsequent round exists, it must be executed after block $b$.}
    \label{fig: leadercheck}
\end{figure*}

    
    
    


\begin{algorithm}
\small
\caption{Leader Check}
\label{alg:leader_check}
\begin{algorithmic}[1]
\Function{Leader\_Check}{$b^r$, $k_i$}
    \If{leader in round $r+1$ \textbf{or} leader in round $r+1$ is committed}
        \State \Return \textbf{true}
    \EndIf
    
    \State $w \leftarrow \lfloor r/4 \rfloor + 1$
    \State $\text{steady\_ok} \leftarrow$ enough steady votes in wave $w$
    \State $\text{fallback\_ok} \leftarrow$ enough fallback votes in wave $w$
    
    \If{\textbf{not} steady\_ok \textbf{and not} fallback\_ok}
        \State \Return \textbf{true}
    \EndIf
    
    \State $\text{steady\_leader\_check} \leftarrow$ steady leader in $r+1$ is in charge of $k_i$
    
    \If{fallback\_ok \textbf{and} both leaders exist}
        \State \Return \Call{CheckPath}{$b^r$, $b^{r+1}_i$}
    \EndIf
    
    \If{steady\_ok \textbf{and} steady\_leader\_check}
        \State \Return \Call{CheckPath}{$b^r$, $b^{r+1}_i$}
    \EndIf
    
    \State \Return \textbf{true}
\EndFunction

\Function{CheckPath}{$b^r$, $b^{r+1}_i$}
    \State \Return $b^{r+1}_i$ has path to $b^r$
\EndFunction
\end{algorithmic}
\end{algorithm}

\begin{definition} [Leader Check]
    For a block $b$ in round $r$, it $b$ passes the leader check on shard $k_i$ if:

    \begin{enumerate}
        \item In the next round, there exist no leaders (odd round). 

        \item if there exists a steady leader and there exists sufficient steady votes, if it is in-charge of $k_i$, it must have a pointer to $b$. 

        \item If there exists sufficient fallback votes, $b_i^{r+1}$ must have a pointer to $b$.

    \end{enumerate}

    Or a leader in $r+1$ is already committed, while $b$ is not. 
    Otherwise, the block fails the leader check on shard $k_i$. 
\end{definition}

Note that $b$ does not need to be in-charge of shard $k_i$ in this definition.

We now prove using the following propositions that if the leader check is satisfied for a block $b$, then we have sufficient conditions to be certain that the block in-charge of $k_i$ in round $r+1$ cannot execute before $b$.

\mich{added \cref{def: ordering} to aid this abit. I also broke the original one into two. }
\begin{proposition}
    \label{prop: leader check_1}
    Let $b$ be an uncommitted block in round $r$ that persists in round $r+1$. If $b$ passes the leader check for shard $k_i$, then $b$ will always be committed before $b_i^{r+1}$.


       

\end{proposition}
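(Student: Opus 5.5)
The plan is to argue by cases on the clauses of the leader check, using the commitment ordering (\cref{def: ordering}) and the round-monotone sorting rule (\cref{def: sorted causal history}). The quantity to control is which committed leader commits $b$ and which commits $b_i^{r+1}$.

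First, a general observation. Suppose $b_i^{r+1}$ is committed by a leader $\ell$ that is not $b_i^{r+1}$ itself. Then $\ell$ lies in a round $\geq r+2$. By \cref{prop:persistence}, persistence of $b$ in round $r+1$ means every block from round $r+2$ onward has a path to $b$. So $\ell$ has a path to $b$. Either $b$ was already committed by an earlier leader, or $b\in H_{\ell}$. In the latter case the sorting rule places $b$ (round $r$) before $b_i^{r+1}$ (round $r+1$) in $H_\ell$. In both cases $b$ is committed before $b_i^{r+1}$.

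It therefore remains to handle the case where $b_i^{r+1}$ is itself a committed leader in round $r+1$. This is the only way a block of round $r+1$ can be committed by a leader that may lack a path to $b$. I would split along the clauses of the leader check.

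\begin{enumerate}
\item \emph{No leader in round $r+1$.} This case is vacuous.
\item \emph{No sufficient steady or fallback votes in the wave.} No leader from round $r+1$ can commit, so the case is again vacuous.
\item \emph{Sufficient steady votes, and the steady leader is in charge of $k_i$.} The check forces $b_i^{r+1}$ to point to $b$. If $b$ is uncommitted when $b_i^{r+1}$ commits, then $b\in H_{b_i^{r+1}}$ and $b$ precedes it by the sorting rule. Otherwise $b$ was committed earlier.
\item \emph{Sufficient steady votes, but the steady leader is not in charge of $k_i$.} Then $b_i^{r+1}$ is not a potential steady leader. Since only one leader type commits per wave, it also cannot be a fallback leader. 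It is therefore committed by some later leader, and the general observation applies.
\item \emph{Fallback votes possible.} Any block of that round, including $b_i^{r+1}$, may turn out to be the fallback leader. The check requires $b_i^{r+1}$ to have a pointer (path) to $b$, and the same argument as in the steady case applies.
\item \emph{A leader in round $r+1$ is already committed while $b$ is not.} If that leader is not $b_i^{r+1}$, we fall back on the general observation. If it is $b_i^{r+1}$, then $b_i^{r+1}$ is already committed while $b$ is not. This contradicts the conclusion as stated, so this case must be read against the hypothesis. I expect the proposition really concerns the remaining uncommitted blocks, and would argue this clause under that reading.
\end{enumerate}

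The main obstacle is making rigorous the claim that, under steady votes, a block not designated steady leader cannot be committed as a leader in round $r+1$. I would derive it from the Bullshark fact that at most one leader type commits per wave, applied via quorum intersection on votes. Indirect commitment, where a leader is committed through another committed leader's causal history with $\geq f+1$ votes (\cref{Def:Final Leader}), must also be covered. For that I would reuse the same path argument: if such a leader had a path to $b$, then $b$ is ordered earlier in the combined commit sequence.
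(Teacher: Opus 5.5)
Your proposal is correct and follows the same route as the paper. A committing leader in round $\geq r+2$ must reach $b$ by persistence, and the round-monotone sort then orders $b$ first; the only other possibility is that $b_i^{r+1}$ is itself the committed leader, where the pointer demanded by the leader check puts $b$ ahead of it. Your clause-by-clause split just unpacks the paper's one-line appeal to ``the second and third items of the leader check.''

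Your caveat about the ``already committed'' clause is well-founded. The paper's proof silently uses only items 2 and 3, and that clause is instead handled by \cref{prop: leader check_2}, which proves exactly the weaker ``remaining uncommitted blocks'' reading you propose.
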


\begin{proof}

    Suppose otherwise, $b_i^{r+1}$ executes before $b$. 
    
    Consider the case that $b_i^{r+1}$ is committed due to a committed leader $b'$ in round $r+2$ or after. As $b$ persists in round $r+1$, $b'$ must have a path to $b$ and so it will also commit $b$. Recall from \cref{def: sorted causal history} that blocks are ordered in non-descending order of the rounds they are created. Therefore, $b$ must be ordered to execute before $b_i^{r+1}$ in $H_{b'}$ and this case is impossible.

    Therefore, the committed leader $b'$ must exist in round $r+1$. Since $b_i^{r+1}$ is committed by a leader block in round $r+1$, this implies $b_i^{r+1} = b'$. As $b$ passes the leader check for shard $k$, the second and third items of the leader check condition guarantee that $b'$ must have a pointer to $b$. However, this means $b$ will be committed by $b_i^{r+1}$ and by the ordering of blocks in the causal history, $b$ will execute before $b_i^{r+1}$.

\end{proof}
This shows that the only block from $r+1$ that can possibly be committed before a block from $r$ that persists in $r+1$ is a leader block in $r+1$. However, if this leader has pointers to the persisting block, it will \textit{not} be committed before the persisting block due to the ordering defined in \cref{def: sorted causal history}.

\begin{proposition}
\label{prop: leader check_2}
    Let $b$ be an uncommitted block in round $r$ that persists in round $r+1$. If there exists a leader $b'$ in round $r+1$ where $b \notin H_{b'}$, and $b'$ is already known to be committed, then any uncommitted block in $r+1$ may not be executed before $b$. 
\end{proposition}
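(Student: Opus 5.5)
We argue by contradiction, in the style of the proof of \cref{prop: leader check_1}. Suppose some uncommitted block $a$ from round $r+1$ executes before $b$. Since $b'$ is already committed and $b \notin H_{b'}$, $b$ is not committed by $b'$. Because $b$ persists in round $r+1$, by \cref{prop:persistence} every block from round $r+2$ onwards has a path to $b$. So any leader committed after $b'$ that is still uncommitted when it is reached will contain $b$ in its causal history, until the first such leader commits it. Call the leader that eventually commits $b$ $b''$. Then $b''$ comes after $b'$ in the commitment order of \cref{def: ordering}, and $b''$ lies in a round $\geq r+2$. Here we use that at most one leader exists per round and that $b'$ occupies round $r+1$.

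The key step is to show that $a$ cannot be committed by any leader ordered strictly between $b'$ and $b''$. Any such leader $\ell$ lies in a round $\geq r+2$, so by persistence $\ell$ has a path to $b$. Since $b$ is not yet committed at that point, $b \in H_{\ell}$, which contradicts the choice of $b''$ as the leader committing $b$. Hence $a$ is committed either by $b'$ itself or by $b''$. The first case is excluded by hypothesis, since the statement concerns blocks still uncommitted after $b'$ is known committed. The only round-$(r+1)$ block that $b'$ could commit ahead of $b$ is $b'$ itself, and $b'$ is already committed.

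It remains to handle $a \in H_{b''}$. Both $a$ and $b$ then lie in $H_{b''}$. By the round-monotone ordering of \cref{def: sorted causal history}, $b$ (round $r$) precedes $a$ (round $r+1$), so $a$ executes after $b$, a contradiction. Leaders committed after $b''$ execute after all of $H_{b''}$, so any $a$ they commit also follows $b$.

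The main obstacle is the leader-commit subtleties of \cref{Def:Final Leader}: a leader with fewer than $2f+1$ votes can be committed retroactively, because it lies in another committed leader's causal history. We must ensure that no such retroactively committed leader from round $r+1$, other than $b'$, can slip in ahead of $b$. We would dispatch this by noting that round $r+1$ hosts at most one leader of each type, and that only one leader type can commit per wave-round. Since $b'$ is the committed leader of round $r+1$, no other round-$(r+1)$ leader can be committed. Any retroactive commitment from a later round falls under the persistence argument above, since blocks from round $\geq r+2$ have paths to $b$.
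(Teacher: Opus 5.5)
Your proof is correct and follows the paper's argument. Since $b\notin H_{b'}$ and leaders commit in increasing round order, the leader that commits $b$ lies in round $\geq r+2$; persistence forces it to have a path to $b$, and the round-ascending order of \cref{def: sorted causal history} then places $b$ before every round-$(r+1)$ block in its causal history. Your extra case analysis (leaders strictly between $b'$ and $b''$, retroactive commitments) only spells out what the paper's three-line proof leaves implicit. One imprecision: ``at most one leader exists per round'' is false as stated, since the first round of a wave has both a steady and a fallback leader, but your later appeal to only one leader type committing per wave-round fixes this.
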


\begin{proof}
    As $b$ is not in $H_{b'}$, the next elected leader that can commit $b$ must be in round $r+2$ or after. Since $b$ persists in $r+1$, this block must contain $b$ in its causal history and so commit $b$. Due to the ordering defined in \cref{def: sorted causal history}, no blocks in round $r+1$ may be ordered and execute before $b$ even if they exist and are in the next leader's causal history. 
\end{proof}



\subsubsection{STO for Type $\alpha$}

\mich{@ alvin, added this to help with the proofs below}

Let us define a notion that will be useful when showing that we can correctly determine STO for a transaction before the parent block is committed.

\begin{definition}[Complete Shard History]
\label{def: complete shard history}
    For an uncommitted block $b$ in round $r$ and a shard $k_i$, let $b_i^{\hat{r}}$ be the earliest uncommitted block in charge of shard $k_i$ from round $\hat{r}$ where $\hat{r} \leq r$.
    
    We say $b$ has \textit{Complete Shard History} for $k_i$ if and only if $b$ has a path to $\hat{b}$ which includes all blocks from round $r-1$ to $\hat{r}+1$ that are in-charge of $k_i$.
     We denote this ordered path of blocks as $C_b(i) = [b_i^{\hat{r}},b_i^{\hat{r}+1} \ldots,b_i^{r-1}, b]$, where $|C_b(i)| = r-\hat{r}+1$.
\end{definition}

Observe that by requiring a path containing all the blocks in-charge of $k_i$, each block on this path also has Complete Shard History. Note that $b$ does not have to be in-charge of $k$ to have Complete Shard History for $k_i$. 

Now, we show why Complete Shard History is useful in helping correctly determine STO. Suppose some block $b$ in round $r$ has Complete Shard History for shard $k_i$. Then $b$ knows of all of the relevant transactions on shard $k_i$ that can be committed. Further, if the next committed leader has a pointer to $b$, then the leader must also have pointers to all of these blocks, and so these blocks will be committed. Therefore, $b$ knows that all of these blocks must be executed, and so can correctly determine the outcome of executing these blocks before its own execution.

\begin{proposition}
    \label{prop: complete shard history}
    Consider a block $b$ in round $r$ which has complete shard history for $k_i$.
    If $b$ passes the leader check on $k$ and persists in $r+1$, there may not be any block in-charge of $k_i$ which is not in $H_{b}$ that may be executed before $b$ besides $b^r_i$.

\end{proposition}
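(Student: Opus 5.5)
We fix the leader $b'$ that eventually commits $b$; it exists because $b$ persists in round $r+1$. We then show that every block in-charge of $k_i$ executed before $b$ either lies in $H_b$ or is $b_i^r$. The candidates split by round: blocks $b_i^{s}$ with $s<r$, the same-round block $b_i^r$ (excluded by the statement), and blocks $b_i^{s}$ with $s>r$. We also have to consider blocks committed by earlier leaders $b''$ preceding $b'$ in the commitment order (\cref{def: ordering}), since these execute before all of $H_{b'}$. Blocks from before the earliest uncommitted round $\hat r$ are already committed and excluded from every causal history, so they never matter.

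For rounds $s>r$ we use the leader check. Any block from round $\geq r+2$ committed by a leader also forces that leader to have a path to $b$, by persistence in $r+1$ and quorum intersection (\cref{prop:persistence}). The sorted order of \cref{def: sorted causal history} then places $b$ before it. For $s=r+1$, \cref{prop: leader check_1} directly gives that $b_i^{r+1}$ is committed after $b$. If the leader-check clause used is that a round $r+1$ leader already committed without $b$, we invoke \cref{prop: leader check_2} instead. The remaining worry in this range is a block $b_i^s$, $s\ge r+2$, committed by an earlier leader $b''$ that does not contain $b$. Such a $b''$ has round $\geq s\geq r+2$, and persistence forces it to contain $b$, which is a contradiction.

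For rounds $\hat r\le s<r$ we use the complete shard history chain $C_b(i)=[b_i^{\hat r},\dots,b_i^{r-1},b]$. Each element is reachable from $b$, so the whole chain lies in $H_b$ unless some prefix was committed earlier. We must rule out the scenario of \cref{fig: Type A-2}, where a leader $b''$ committed strictly before $b'$ contains some $b_i^s$ but not $b_i^{s'}$ for some $s'<s$, which would reorder the chain. This cannot happen: a path from $b''$ to $b_i^s$ extends along the chain to $b_i^{s'}$, so $b''$ contains every earlier chain element too. Hence the committed leaders partition $C_b(i)$ into consecutive segments: each earlier leader takes a prefix, and $b'$ takes the suffix ending at $b$. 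Within $H_{b'}$ the round-monotone sort places the suffix in chain order before $b$. The effect on the shard is therefore identical to executing $C_b(i)$ in order, which is exactly the order in $H_b$. Any block in-charge of $k_i$ in rounds $[\hat r,r-1]$ is the unique $b_i^s$ for that round, so it lies on the chain. This means no off-chain shard-$k_i$ block from those rounds exists.

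The main obstacle is making precise that "not in $H_b$" is compatible with chain prefixes committed earlier by other leaders, since $H_b$ itself shrinks as commitments occur. I would first fix the view at the moment $b'$ commits. I would then argue that the prefix-closure property makes the executed sequence of $k_i$-blocks before $b$ equal to $C_b(i)[:-1]$, regardless of how the chain is split across leaders. This is the property on which STO for Type $\alpha$ actually relies.
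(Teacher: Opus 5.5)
Your proposal is correct, and its core is the paper's own argument. Persistence in round $r+1$ rules out any earlier-committing leader from round $\geq r+2$. The leader check handles round $r+1$: you invoke \cref{prop: leader check_1} and \cref{prop: leader check_2}, where the paper re-argues it inline. At round $r$ only $b_i^r$ remains, and complete shard history places every uncommitted pre-$r$ block in-charge of $k_i$ inside $H_b$. You index the cases by the round of the offending block rather than by the round of the committing leader $b''$, which is only a difference of presentation.

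Your extra prefix-closure argument on $C_b(i)$ goes beyond what the literal statement needs; the paper dismisses rounds below $r$ in one line. It does correctly address the fact that $H_b$ shrinks as leaders commit. It also gives a direct, path-based proof of chain-order preservation, which the paper instead derives inside \cref{lemma: STO for alpha} from the recursive SBO hypothesis.
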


\begin{proof}

    Let $b'$ be the leader that eventually commits $b$. Observe that as $b$ has complete shard history for $k_i$, blocks in-charge of $k_i$ but not in $H_{b}$ must be in round $r$ or after. Now, assume otherwise that some block besides $b^r_i$ in-charge of $k_i$ from some round before $r$ is executed before $b$. This implies that it must be committed by some leader $b''$ which commits before $b'$. If it is committed by $b'$, our ordering (see~\cref{def: sorted causal history of a block}) guarantees that it is executed after $b$.

    Suppose $b''$ is in round $r+2$ or after. Since $b$ persists in round $r+1$, it must hold that $b \in H_{b''}$ and so is committed by $b''$. This directly contradicts that $b'$ is the leader that eventually commits $b$.
    
    Now, suppose that $b''$ is in round $r+1$. 
    Since $b$ passes the leader-check, $b''$ must have a pointer to $b$ and therefore commits $b$ and again directly contradicts that $b$ is committed by $b'$.
    
    Finally, if $b''$ is in round $r$, the only block in-charge of $k$ that it can commit but is not in $H_{b}$ is $b^r_i$ (that is, $b''$ is $b^r_i$). It cannot possibly commit any block in rounds $r+1$ or after.
\end{proof}

Note that $b$ does not need to be in-charge of $k_i$ to have complete shard history for shard $k_i$.

\begin{figure}[t]
    \centering
    \includegraphics[width=.8\linewidth, trim=0cm 0cm 0cm -.5cm]{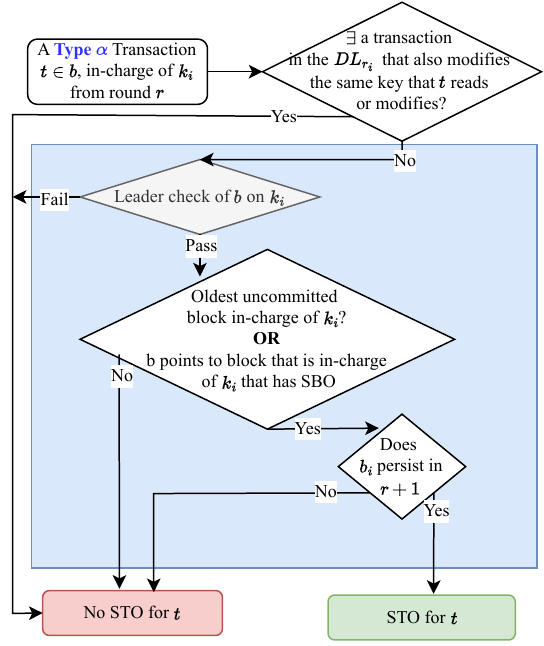}
    \caption{This Figure illustrates all the sufficient conditions for a Type $\alpha$ transaction to have STO.}
    \label{fig: Complete a}
\end{figure}

\begin{lemma}
    \label{lemma: STO for alpha}
    For a shard $k_i$, a Type $\alpha$ transaction $t \in b_i^r$ can be determined to have STO in round $r+1$ if all of the following hold: 
    \begin{itemize}

        \item There does not exist any transaction in $DL_r$ that modifies $k_i$. 
        \item $b_i^{r-1}$ has SBO and $b_i^r$ has a pointer to it, or $b_i^r$ is the earliest uncommitted block in-charge of $k_i$. 
        \item $b_i^r$ passes the leader check on $k_i$.  
        \item $b_i^r$ persists in round $r+1$. 
    \end{itemize}
\end{lemma}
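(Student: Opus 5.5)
We argue by induction on the length of the chain of uncommitted blocks in charge of $k_i$, using the notion of Complete Shard History (\cref{def: complete shard history}). Throughout, let $b'$ be the leader that eventually commits $b_i^r$; it exists because $b_i^r$ persists in round $r+1$, so every leader from round $r+2$ onward has a path to it.

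\textbf{Step 1: $b_i^r$ has Complete Shard History for $k_i$.} In the base case $b_i^r$ is the earliest uncommitted block in charge of $k_i$, so $C_{b_i^r}(i)=[b_i^r]$ and the condition holds trivially. Otherwise $b_i^r$ points to $b_i^{r-1}$, and $b_i^{r-1}$ has SBO. By the inductive hypothesis we may assume the SBO of $b_i^{r-1}$ was certified by the same conditions, and in particular that $b_i^{r-1}$ has Complete Shard History for $k_i$. Appending $b_i^r$ to $C_{b_i^{r-1}}(i)$ then gives $C_{b_i^r}(i)$. This is the step I expect to be the main obstacle: the lemma only asserts ``$b_i^{r-1}$ has SBO,'' so the induction has to be phrased over the checks actually performed, i.e.\ over the recursive definition in \cref{alg:alpha-sto-check}. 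Alternatively one can note that SBO for $b_i^{r-1}$ itself rules out any earlier $k_i$ block slipping between it and $b_i^r$.

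\textbf{Step 2: no foreign $k_i$ block executes before $b_i^r$.} We apply \cref{prop: complete shard history}, using the leader-check and persistence hypotheses. It says that the only block in charge of $k_i$ that is not in $H_{b_i^r}$ yet could execute before $b_i^r$ is $b_i^r$ itself, so no such foreign block exists. It remains to show that the $k_i$ blocks in $H_{b_i^r}$ are executed in $H_{b'}$ in the same relative order, and with the same committed prefix, as in $H_{b_i^r}$. Some of these blocks may be committed by earlier leaders $b''$ before $b'$. Because the chain $C_{b_i^r}(i)$ is a pointer path, any leader with a path to $b_i^s$ also has paths to every $b_i^{s'}$ with $s'<s$. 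Hence every committed leader contains a \emph{prefix} of the chain, never an arbitrary subset. Within any single $H$, \cref{def: sorted causal history of a block} orders blocks by ascending round, and successive leaders are executed in commit order (\cref{def: ordering}). Together these give that the chain executes in the order $b_i^{\hat r},\dots,b_i^{r-1},b_i^r$ regardless of how it is split among leaders.

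\textbf{Step 3: conclude STO.} A Type $\alpha$ transaction $t$ reads and writes only keys in $k_i$. Blocks in charge of other shards therefore never modify what $t$ touches, with one exception: $\gamma$ sub-transactions that are deferred and later executed elsewhere. That exception is excluded by the $DL_r$ hypothesis, using \cref{prop: DL} for comprehensiveness of the list. So the state of $k_i$ seen by $t$ under $H_{b_i^r}[:-1]+[t_1,\dots,t]$ equals the state seen under the prefix of $H_{b'}$ up to $t$. That is, the TO of $t$ equals $b'\langle b_i^r(t)\rangle$. Every condition used, namely the pointers, persistence via $f+1$ pointers (\cref{prop:persistence}), the leader check and the $DL_r$ check, is observable in the local DAG at round $r+1$. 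This is strictly before the earliest round $r+2$ at which $b_i^r$ could be committed as a non-leader, so the determination is made in round $r+1$.
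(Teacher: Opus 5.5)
Your Steps 1 and 2 hold up. The prefix-of-chain argument is a cleaner way to get the round-ordering of $C_{b_i^r}(i)$ than the paper's inference from SBO, and you use the $DL_r$ hypothesis, which the paper's proof never does. The gap is in Step 3. From ``the same blocks in charge of $k_i$ execute before $t$, in the same order, and nothing else writes $k_i$'' you conclude that $t$ sees the same $k_i$ state under $H_{b_i^r}[:-1]+[t_1,\dots,t]$ and under $b'$. That only follows if every chain block writes the \emph{same values} in both executions. This is automatic only when the chain contains nothing but Type $\alpha$ transactions.

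A chain block can contain a Type $\beta$ transaction $t'$ that reads $k_j^a$ and writes $k_i^x$, and the value it writes depends on which $k_j$ blocks precede it. Concretely:
\begin{itemize}
\item let $b_i^{r-1}\ni t'$ be the oldest uncommitted $k_i$ block, and let $b_j^{r-1}$ write $k_j^a$;
\item let $b_i^r$ point to $b_i^{r-1}$ but not reach $b_j^{r-1}$;
\item let a round-$r$ steady leader in charge of a third shard point to $b_j^{r-1}$ but not to $b_i^{r-1}$, and commit.
\end{itemize}
Globally $b_j^{r-1}$ then executes before $b_i^{r-1}$, but not in $H_{b_i^r}$. An $\alpha$ transaction $t\in b_i^r$ reading $k_i^x$ therefore gets a wrong TO. Yet every fact your argument actually uses still holds: Complete Shard History, persistence, the leader check on $k_i$, chain order, and the $DL_r$ condition. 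So your argument proves too much. The only hypothesis that fails in this scenario is that $b_i^{r-1}$ has SBO.

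The missing ingredient is the \emph{semantic} content of that SBO hypothesis; you use it only to extract Complete Shard History. The paper uses it directly. Since $b_i^{r-1}$ has SBO, its block outcome equals its execution prefix with respect to the leader that commits it. Recursing along the chain, the $k_i$ state left behind by $b_i^{r-1}$ is the finalized one. Then \cref{prop: leader check_1}, \cref{prop: complete shard history} and persistence show that no other block in charge of $k_i$ executes between $b_i^{r-1}$ and $b_i^r$. Hence $t$ starts from the same $k_i$ state in both executions. Your induction hypothesis must carry outcome equality of the chain blocks, not just their structure.

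Separately, the ``alternatively'' remark in Step 1 does not work. SBO is a statement about outcomes and can hold vacuously, for example for an empty $b_i^{r-1}$. It therefore neither yields Complete Shard History nor prevents an earlier $k_i$ block from executing between $b_i^{r-1}$ and $b_i^r$. Only the recursive structural check in \cref{alg:alpha-sto-check} gives you that.
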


\begin{proof}

    Recall that a transaction has STO if its TO matches the execution prefix of that transaction with respect to the leader that eventually commits it. Let $b'$ be the leader that eventually commits $b_i^r$.
    
    First consider the case where $b_i^r$ is the earliest block in-charge of $k_i$ which is not committed. As $b_i^r$ passes the leader check, by~\cref{prop: leader check_1} it holds that $b_i^{r+1}$ cannot execute before it. Next, as $b_i^r$ persists in round $r+1$, any block in-charge of $k_i$ in round $r+2$ or after cannot execute before it, as any leader committing such a block necessarily has a path to $b_i^r$ and so commits it. By the ordering on the blocks, $b_i^r$ must execute before any block in-charge of $k_i$ in round $r+2$ or later. Therefore, it is safe to conclude that $b_i^r$ will execute before any other blocks in-charge of $k_i$ in rounds after $r$. Therefore, the transaction outcome of $t$ must be equivalent against the execution prefix $b'\langle b(t_i)\rangle$.

    Suppose $b_i^r$ is not the earliest uncommitted block in-charge of $k_i$, but it has a pointer to $b_i^{r-1}$ which has SBO. This implies either $b_i^{r-1}$ is the earliest uncommitted block in-charge of shard $k$ and the earlier argument holds for all its transactions; or it has a pointer to the block that is in-charge of $k_i$ in the preceding round, which has SBO. By applying this argument recursively, this implies that $b_i^r$ has complete shard history of $k_i$. As $b_i^r$ passes the leader check, by~\cref{prop: leader check_1} and~\cref{prop: complete shard history}, we can be sure that there does not exist any block not in $H_{b}$ in-charge of $k_i$ that may be executed before $b_i^r$.

    Since $b_i^{r-1}$ has SBO, its block outcome is equivalent to its execution prefix with respect to the leader block that commits it. As it must also have complete shard history for $k_i$ (at round $r-1$), its block outcome must execute all of the blocks in-charge of $k_i$ in rounds prior to $r-1$ in order of their rounds. Therefore this also holds true for the execution order of blocks in-charge of $k_i$ by committed leader blocks. So it is clear that when committed by $b'$, $b_i^r$ must be the next block in-charge of $k_i$ to be executed after $b_i^{r-1}$. This naturally implies that the transaction outcome of any Type $\alpha$ transaction in $b_i^r$ must be equivalent to the execution prefix of it against $b'$.
    
    Finally, it is straightforward to observe that these conditions can be determined in round $r+1$.

\end{proof}

\begin{proposition}
\label{prop: persist bound}
      Even under network asynchrony and the presence of Byzantine adversaries, there must exist at least $(3f+2)/2$ blocks from round $r$ that persist in $r+1$. 
\end{proposition}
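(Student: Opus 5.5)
We prove the bound by double counting pointers between round $r$ and round $r+1$. By \cref{prop:persistence}, a block of round $r$ persists in $r+1$ exactly when at least $f+1$ blocks of round $r+1$ point to it. So it suffices to lower-bound how many round-$r$ blocks must receive at least $f+1$ pointers.

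\textbf{Setup.} Round $r+1$ is only reached once at least $2f+1$ blocks exist there. Under asynchrony and silent Byzantine nodes, that is all we may assume. Take $n=3f+1$ and let $m \ge 2f+1$ be the number of round-$r+1$ blocks. By \cref{def:Blocks and pointers}, each of them carries at least $2f+1$ pointers to distinct round-$r$ blocks. The total number of pointer incidences into round $r$ is therefore $P \ge m(2f+1) \ge (2f+1)^2$.

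\textbf{Counting.} Let $x$ be the number of round-$r$ blocks that persist, and suppose at most $N \le n$ blocks exist in round $r$. A persisting block receives at most $m$ pointers. A non-persisting block receives at most $f$ pointers. Hence
\[
P \le x\cdot m + (N-x)\cdot f .
\]
Combining the two bounds gives
\[
x(m-f) \ge m(2f+1) - Nf .
\]
The worst case is the smallest $m$ and the largest $N$, so set $m=2f+1$ and $N=3f+1$:
\[
x(f+1) \ge (2f+1)^2 - f(3f+1) = f^2+3f+1 .
\]
This yields $x \ge \frac{f^2+3f+1}{f+1}$, which equals $f+2-\frac{1}{f+1}$.

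\textbf{Main obstacle.} The target $(3f+2)/2$ does not fall out of this crude count. The slack is in using $m$ as the cap on pointers into a persisting block. The fix is to optimise over the adversary's choice of $m$: the cap on a persisting block's in-degree and the total pointer mass both grow with $m$, so one has to check which $m$ gives the tight bound. I would first redo the count with $m=3f+1$ (all nodes honest but slow), then with $m=2f+1$, and take the minimum. If double counting alone still falls short, I would argue via integrality. Write $x$ and the number of non-persisting blocks $N-x$. Use $N-x \le f$: from the $2f+1$ blocks every round-$(r+1)$ block points to, intersection arguments show at most about half of them can fail to persist. Then derive $x \ge N-(N-x)$ and average the two bounds to reach $(3f+2)/2$.

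The delicate point to verify is the exact extremal configuration. I would sanity-check it against the construction in \cref{prop: some no persist}, where $f$ blocks fail to persist, and adjust the counting until the bound matches the paper's constant.
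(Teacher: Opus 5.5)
Your double count is correct, but the argument stops there. It yields $x\ge (f^2+3f+1)/(f+1)$, i.e.\ $x\ge f+2$, and none of the suggested repairs closes the gap to $(3f+2)/2$. Minimising over $m$ just returns the $m=2f+1$ bound you already have, since the adversary chooses $m$. The claim $N-x\le f$ is false: a per-block intersection argument says nothing about how many \emph{distinct} round-$r$ blocks end up with at most $f$ pointers, and in the construction below $N-x=2f-1$. Averaging two bounds is not an inference. More importantly, no repair can succeed in your setting, because your count is tight. Let only the $2f+1$ honest nodes produce round-$(r+1)$ blocks, all pointing to the same $f+2$ round-$r$ blocks, and spread their remaining $f-1$ pointers cyclically over the other $2f-1$ round-$r$ blocks. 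Since $(2f+1)(f-1)=2f^2-f-1<f(2f-1)$, each of these receives at most $f$ pointers, so exactly $f+2$ blocks persist. Concretely, for $f=3$: blocks $c_1,\dots,c_7$ each point to $a_1,\dots,a_5$ plus, respectively, $\{a_6,a_7\},\{a_8,a_9\},\{a_{10},a_6\},\{a_7,a_8\},\{a_9,a_{10}\},\{a_6,a_7\},\{a_8,a_9\}$. Then $a_6,\dots,a_{10}$ have in-degrees $3,3,3,3,2$, so only $5<11/2$ blocks persist, and asynchronous scheduling realises this. So with persistence meaning ``at least $f+1$ pointers'' and only $2f+1$ blocks in round $r+1$ (the scenario both you and the paper treat as worst case), the statement holds for $f\le 2$ and fails for every $f\ge 3$.

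The paper's proof reaches $(3f+2)/2$ only through an unforced restriction. It splits the round-$(r+1)$ blocks into $\lceil(2f+1)/f\rceil=3$ groups, lets each group point to one common set of $2f+1-A$ non-persisting blocks, and requires the three sets to be disjoint, i.e.\ $3(2f+1-A)\le 3f+1-A$. Overlapping target sets, as above, are legitimate as long as each non-persisting block stays at $\le f$ pointers. The only real constraint is $f(3f+1-A)\ge (2f+1)(2f+1-A)$, which is precisely your double count and gives $A\ge f+2$. So do not ``adjust the counting until it matches the paper's constant''. What your method does prove is the following:
\begin{enumerate}
\item $x\ge f+2$ in general.
\item $x\ge (3f+1)(f+1)/(2f+1)>(3f+2)/2$ if all $3f+1$ round-$(r+1)$ blocks exist; this is your inequality with $m=3f+1$.
\item Suppose persistence is read literally as in the Block Persistence definition: every set of $2f+1$ \emph{existing} round-$(r+1)$ blocks contains one with a path to $b$. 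Then the threshold is $m-2f$ rather than $f+1$; the equivalence with $f+1$ pointers that you invoke only holds when $m=3f+1$. Bounding non-persisting in-degree by $m-2f-1$ in the same count gives $x\ge (3f+1)-mf/(2f+1)\ge (3f+1)(f+1)/(2f+1)$ for every $m$, which does establish the stated bound under that reading.
\end{enumerate}
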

\begin{proof}
    Suppose there exist $3f+1$ blocks in round $r$, and due to byzantine inaction, there exist only $2f+1$ blocks in round $r+1$. There may exist fewer than $3f+1$ blocks in round $r$, but with fewer blocks, more blocks persist, as there are fewer options for blocks in round $r$ to point to. 
    To minimize the number of blocks that persist in round $r+1$, each block in $r+1$ have exactly $2f+1$ pointers to blocks in $r$. 
    
    Suppose there exist $A$ blocks from round $r$ that must persist in. To minimize the pointers from blocks in round $r$ that point to the remainder $3f+1-A$ blocks, we let all blocks in $r+1$ point to those blocks in $A$. 

    This means each block in $r+1$ need to point to a remainder of $2f+1-A$ blocks in round $r$. 

    Since each $2f+1-A$ blocks in the remainder $3f+1-A$ blocks in $r$ can only have at most $f$ blocks pointing to it and there exists $2f+1$ blocks, we need $\lceil 2f+1/f \rceil=3$ of these sets in the $3f+1-A$ blocks. 
    Therefore, we solve the inequality:

    \begin{align*}
        (3f+1-A)(2f+1-A) >= 3 \\
        2f+1 > A >= (3f+2)/2
    \end{align*}
    Therefore, $(3f+2)/2$ is the minimum number of blocks that must persist in $r+1$. 
    
\end{proof}

\cref{prop: persist bound} shows that a certain number of blocks \textbf{must} persist in each round. Therefore, its possible for certain blocks to gain SBO even in the presence of byzantine adversaries and network asynchrony.


\subsubsection{STO for Type $\beta$}

\begin{figure}[t]
    \centering
    \includegraphics[width=0.8\linewidth, trim=0cm 0cm 0cm -.5cm]{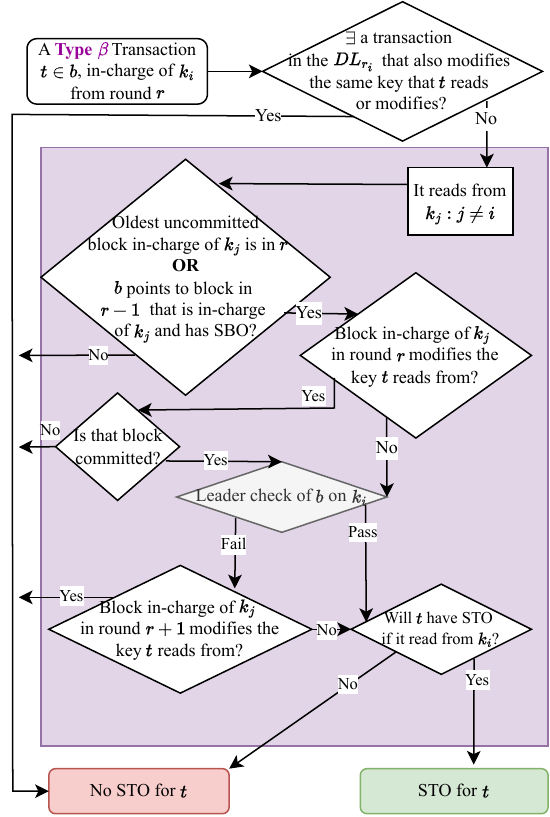}
    \caption{This Figure illustrates all the sufficient conditions for a Type $\beta$ transaction to have STO.}
    \label{fig: Complete b}
\end{figure}

\begin{lemma}
    \label{lemma: STO for beta}
     Take any two distinct shards $k_i, k_j$. Let $b_{i}^r$ be a block in round $r$ in-charge of shard $k_i$, and let $b_{j}^{r-1},b_{j}^r,b_{j}^{r+1}$ be blocks in-charge of $k_j$ in rounds $r-1, r, r+1$ respectively. 
     \mich{changed $b$ -> $b_{i}^r$}
     A Type $\beta$ transaction $t \in b_{i}^r$ that reads from $k_j^a \in k_j$ can be determined to have STO in $r+1$ if all of the following hold: 

     \begin{enumerate}

         \item $t$ meets all of the conditions for STO for a Type $\alpha$ transaction (see~\cref{lemma: STO for alpha}).
         \item $b_{i}^r$ has a pointer to $b_{j}^{r-1}$ and it has SBO, or $b_{j}^{r}$ is the oldest uncommitted block in-charge of $k_j$. 
         \item $b_{j}^r$ does not have a transaction that modifies $k_j^a$, or $b_j^r$ is known to already be committed by some leader block.
         

         \item $b_{i}^r$ passes the leader check for $k_j$ or the $b_{j}^{r+1}$ does not have a transaction that modifies $k_j^a$.
     \end{enumerate}
\end{lemma}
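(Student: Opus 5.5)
Let $b'$ be the leader that eventually commits $b_i^r$; it exists because $b_i^r$ persists in round $r+1$ (condition 1 together with \cref{prop:persistence}). The TO of $t$ is fixed by two things: the value $t$ reads from $k_j^a$, and the state of the keys in $k_i$ that $t$ writes to. Condition 1 and \cref{lemma: STO for alpha} already give that the $k_i$-part of the TO of $t$ equals the corresponding part of $b'\langle b_i^r(t)\rangle$. So it remains to show that the value of $k_j^a$ just before $t$ executes is the same in $H_{b_i^r}[:-1]$ and in the prefix of the committed order preceding $b_i^r$. Only blocks in-charge of $k_j$ can write $k_j^a$. I will split these blocks by round relative to $r$ (before, during, after) and show that each class contributes identically to both executions. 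This mirrors \cref{para: before r,para: during r,para: after r}.

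\textbf{Before $r$.} Assume first that $b_i^r$ points to $b_j^{r-1}$ and $b_j^{r-1}$ has SBO. Arguing recursively as in the proof of \cref{lemma: STO for alpha}, $b_j^{r-1}$ has complete shard history for $k_j$, so $b_i^r$ does too. Hence $b_i^r$ has complete shard history for $k_j$ up to round $r-1$: every uncommitted $k_j$-block from rounds $<r$ lies in $H_{b_i^r}$.

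Two things must then be shown.
\begin{itemize}
\item The $k_j$-blocks from rounds $<r$ execute before $b_i^r$ in the committed order. The ordering in \cref{def: sorted causal history of a block} places earlier rounds first, and any leader committing $b_i^r$ has a path to all of them. The alternative is that they were committed by an earlier leader, in which case they also execute before $b_i^r$.
\item They execute in the same relative order in both executions. This follows from the SBO of $b_j^{r-1}$, exactly as in the last paragraph of the proof of \cref{lemma: STO for alpha}.
\end{itemize}

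In the other case, $b_j^r$ is the oldest uncommitted $k_j$-block. Then no uncommitted $k_j$-block exists before round $r$, and the committed ones were already executed identically in both views.

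\textbf{During $r$ and after $r$.} The only $k_j$-block in round $r$ is $b_j^r$, and the sort may place it before $b_i^r$. If $b_j^r$ does not write $k_j^a$, it is irrelevant. Otherwise, condition 3 says $b_j^r$ is already committed by some leader. That leader precedes $b'$ in the total order, so the effect of $b_j^r$ on $k_j^a$ is known and lies in the prefix of both executions.

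This is the step I expect to be the main obstacle. It requires checking that the BO view, in which $b_j^r$ is excluded from $H_{b_i^r}$ as previously committed state, agrees with the committed order. In particular, $b_j^r$ being committed earlier must not let some other $k_j$-block from before $r$ slip in out of order. To rule that out I would invoke the complete shard history from the previous step together with \cref{prop: complete shard history} applied to shard $k_j$: it shows that no $k_j$-block outside $H_{b_i^r}$, other than $b_j^r$, can execute before $b_i^r$.

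For rounds $\ge r+1$, there are two sub-cases.
\begin{itemize}
\item If $b_i^r$ passes the leader check for $k_j$, then \cref{prop: leader check_1} gives that $b_j^{r+1}$ cannot execute before $b_i^r$.
\item Otherwise, by condition 4, $b_j^{r+1}$ does not write $k_j^a$, so its placement is irrelevant.
\end{itemize}
Blocks from rounds $\ge r+2$ cannot precede $b_i^r$: since $b_i^r$ persists in round $r+1$, any leader committing them also commits $b_i^r$ and orders it first.

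\textbf{Conclusion.} Combining the three cases, the read value of $k_j^a$ agrees in both executions, and therefore the TO of $t$ equals $b'\langle b_i^r(t)\rangle$. Finally, every condition (pointers, SBO of $b_j^{r-1}$, contents of $b_j^r$ and $b_j^{r+1}$, commitment status, the leader check, persistence) can be read off the local DAG by round $r+1$. Non-leader $b_i^r$ commits no earlier than round $r+2$, so the determination is early.
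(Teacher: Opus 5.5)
Your proposal is correct and follows the paper's proof essentially step for step. The Type-$\alpha$ conditions handle the $k_i$ side. Condition 2 gives complete shard history for $k_j$, so blocks from before round $r$ execute in round order, and condition 3 neutralizes $b_j^r$. Condition 4 (the leader check on $k_j$, or $b_j^{r+1}$ not writing $k_j^a$), together with persistence in round $r+1$, handles rounds $\geq r+1$.

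One small caveat concerns your ``during $r$'' step. There, \cref{prop: complete shard history} needs $b_i^r$ to pass the leader check on $k_j$, which holds in only one branch of condition 4. You do not actually need that proposition at this point. Complete shard history alone puts every uncommitted pre-$r$ block in-charge of $k_j$ on a pointer chain inside $H_{b_i^r}$, so leaders can only commit prefixes of that chain, which already rules out such blocks slipping in out of order.
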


\begin{proof}

    Let $b'$ be the leader block that eventually commits $b_i^r$.
    As the transaction writes to $k_i$, we require that $t$ meets the conditions for STO as if it is a Type $\alpha$ transaction, including that it persists in round $r+1$.
    
    Now, since $t$ reads from $k_j$, we now additionally need to ensure that our read value based on the block outcome of $b_{i}^r$ will be the same as that for $t$ when it is committed, so that the transaction outcome of $t$ is equivalent to the execution prefix of $t$ in $b'$. 

    First, suppose that either $b_{j}^r$ does not have a transaction that modifies $k_j^a$. Given this, executing $b_{j}^r$ before $b_{i}^r$ in leads to an identical execution prefix of $t$ with respect to $b'$ as if it were executed after. Alternatively, if this does not hold, then we know that $b_j^r$ is already committed by some previous leader. If so, then clearly $b_j^{r}$ cannot be in $H_{b'}$ (by definition of causal history of a block) and therefore it will not affect the execution prefix of $t$ with respect to $b'$. In either case, we can thus safely ignore $b_j^r$ as it has no effect on the execution prefix of $t$ with respect to $b'$.

    We have that either $b_{i}^r$ has a pointer to $b_j^{r-1}$ and $b_j^{r-1}$ has SBO, or that $b_j^r$ is the oldest uncommitted block in-charge of $k_j$. We then apply a similar argument as~\cref{lemma: STO for alpha}. This implies that $b$ has complete shard history of $k_j$, and further that the blocks in-charge of $k_j$ in $H_{b_j^{r-1}}$ must be executed in order of their rounds, followed by $b_j^{r-1}$. Therefore, our read value of $k_j^a$ in the transaction outcome of $t$ will be equivalent to the execution prefix of $t$ with respect to $b'$, as long as no block in-charge of $k_j$ after round $r$ affects the execution prefix of $t$ with respect to $b'$.
    
    If $b_i^r$ passes the leader check on $k_j$, by~\cref{prop: complete shard history}, there cannot exist any block in-charge of $k_j$ that can be ordered before $b_{i}^r$ in the next committed leader's causal history besides $b_j^r$. Otherwise, if $b_{i}^r$ does not pass the leader check on $k_j$ then we have that $b_j^{r+1}$ does not have a transaction that modifies $t$. Then it is clear that executing $b_j^{r+1}$ before $b_i^r$ leads to an equivalent execution prefix of $t$ with respect to $b'$ as if it were executed after. Therefore, $b_j^{r+1}$ will not affect the execution prefix of $t$ with respect to $b'$. As $b'$ persists in round $r+1$, then no blocks in-charge of $k_j$ will execute before $b_i^r$. 

    Therefore, we can conclude that the transaction order of $t$ is equivalent to the execution prefix of $t$ with respect to $b'$.

    
    These conditions can be evaluated in round $r+1$.
    
\end{proof}


\subsubsection{STO for Type $\gamma$}

The following proposition shows how we may infer that two blocks may exist in the causal history of a single leader block before even witnessing the said leader block. 

\begin{proposition}
    \label{prop: 2 blocks 1 leader}
    Consider a block $b$ in round $r$ and a block $\hat{b}$ in round $\hat{r}$, and without loss of generality let $r < \hat{r}$. If both blocks persist in round $\hat{r}+1$ and neither block is in the causal history of any leader block before and up to round $\hat{r}+1$, then both blocks must exist in the same committed leader block's causal history. 
\end{proposition}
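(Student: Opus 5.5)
The plan is to consider the first committed leader $b'$ located in a round strictly after $\hat{r}+1$, and to show that it commits both $b$ and $\hat{b}$. The argument rests on three ingredients: persistence, which gives paths from every later block; the hypothesis that no leader up to round $\hat{r}+1$ has already absorbed either block; and the fact that a block is committed by exactly one leader.

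First, I will use persistence to get paths. By \cref{prop:persistence} and the quorum-intersection remark following the definition of persistence, a block that persists in round $\hat{r}+1$ is reachable from every block in round $\hat{r}+2$ onward. The reason is that every such block points to $\geq 2f+1$ blocks of the previous round, and any $2f+1$-set in round $\hat{r}+1$ contains a block with a path to the persisting block. I will propagate this by a short induction on rounds. Applied to both $b$ and $\hat{b}$, it shows that every block in round $\geq \hat{r}+2$ has a path to both.

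Second, I will identify the committing leader. Leaders are eventually committed (liveness of the Bullshark core, which is assumed), so there is a first committed leader $b'$ in some round $r' \geq \hat{r}+2$. By the first step, $b'$'s raw causal history contains both $b$ and $\hat{b}$. By hypothesis, neither block lies in $H$ of any leader from rounds $\leq \hat{r}+1$. I then need to rule out that some committed leader ordered before $b'$ already took one of them. Any such leader committed earlier in the total order is either in round $\leq \hat{r}+1$, which is excluded by hypothesis, or in round $\geq \hat{r}+2$, which contradicts the minimality of $b'$. Here I use that the committed leader sequence is monotone in rounds (\cref{def: ordering}). Since a block is committed by at most one leader and is removed from later causal histories only once committed, both $b$ and $\hat{b}$ remain uncommitted when $b'$ commits, so $b,\hat{b} \in H_{b'}$.

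The main obstacle is the subtlety in \cref{Def:Final Leader}: a leader in round $\leq \hat{r}+1$ may become committed retroactively because it lies in the causal history of a later committed leader. I will handle this by noting that such a leader is still "a leader block up to round $\hat{r}+1$", so the hypothesis already excludes both blocks from its causal history. Consequently, any retroactively committed earlier leader commits neither block. The first committed leader in the sequence whose causal history contains either block must therefore be from round $\geq \hat{r}+2$, and by the first step it contains both.
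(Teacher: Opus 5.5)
Your proposal is correct and follows the same route as the paper's proof. The hypothesis forces the first leader that can absorb either block into a round after $\hat{r}+1$, and persistence in round $\hat{r}+1$ guarantees that this leader has paths to both blocks, so both land in its causal history; your extra care (the round-by-round induction for path propagation, taking $b'$ minimal in the commit order, and the remark on retroactively committed leaders) only makes explicit what the paper's three-line argument leaves implicit.
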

\begin{proof}
    Since neither block exists in the causal history of leader blocks from before and up to round $\hat{r}+1$, the next possible leader block can contain both must exist after $\hat{r}+1$. Since both blocks persist in round $\hat{r}+1$, that leader block must have a path to both blocks. Therefore, they must exist in the same committed leader block's causal history. 
\end{proof}

The following proposition shows that we will not incorrectly determine a transaction to have STO when there is uncertainty due to a conflicting transaction in the delay list.

\begin{proposition}
    \label{prop: DL}
    Consider two blocks $b_i^{r-1},b_i^r$ in-charge of shard $k_i$ in rounds $r-1, r$ respectively. 
    
    If $b$ points to $b_i^{r-1}$, which has SBO, or if we know $b_i^r$ is the oldest uncommitted block in-charge of $k_i$, then it implies that $DL_r$ contains all possible transactions that modified keys in $k_i$.
 
\end{proposition}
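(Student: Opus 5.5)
The plan is an induction on the rounds of the chain of blocks in-charge of $k_i$, mirroring the argument of \cref{lemma: STO for alpha}. The observation driving it is that a transaction $t'$ that modifies a key of $k_i$ can only reside in a block in-charge of $k_i$. This holds for a Type $\alpha/\beta$ transaction, and also for each sub-transaction of a Type $\gamma$ transaction, since each sub-transaction writes only to the shard of the block containing it. In each round there is exactly one such block. A transaction $t'$ enters the delay list only when it is a non-prime sub-transaction whose block precedes, or is committed before, its partner. So the set of transactions touching $k_i$ that can appear in $DL_r$ is a subset of the transactions in the uncommitted blocks $b_i^{\hat{r}},\ldots,b_i^{r-1}$, together with $b_i^r$ itself. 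Here $b_i^{\hat r}$ is the earliest uncommitted block in-charge of $k_i$. Entries from already committed blocks whose partner is still pending also belong to this set; those are handled in the base case below.

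The base case is when $b_i^r$ is the oldest uncommitted block in-charge of $k_i$. Every earlier block in-charge of $k_i$ is committed, and each has been executed or, if delayed, was already placed in $DL$ at its commit time. That placement is a deterministic consequence of the committed prefix, so every honest node's view of those delay-list entries is identical and complete. No other sources exist, since no block of an earlier round in-charge of $k_i$ is uncommitted.

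The inductive step treats the case where $b_i^r$ points to $b_i^{r-1}$ and $b_i^{r-1}$ has SBO. As in the proof of \cref{lemma: STO for alpha}, applying the SBO hypothesis recursively shows that $b_i^r$ has Complete Shard History for $k_i$ (\cref{def: complete shard history}). Hence every uncommitted block in-charge of $k_i$ from rounds $\hat r$ to $r-1$ is in the node's local view, and the chain $C_{b_i^r}(i)$ is fully known. For each transaction in these blocks, the node can decide locally whether it is a $\gamma$ sub-transaction and whether it should be in $DL$. The partner is named in the metadata, so membership follows the rule "added if committed earlier or located in an earlier round than its partner". Because we see the whole chain and, by induction, $DL_{r-1}$ was complete when $b_i^{r-1}$ received SBO, $DL_r$ restricted to $k_i$ equals $DL_{r-1}$ restricted to $k_i$ together with the relevant entries of $b_i^{r-1}$. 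Both pieces are known, which gives completeness. By \cref{prop: complete shard history}, no block in-charge of $k_i$ outside $H_{b_i^r}$ from an earlier round can precede $b_i^r$, so no unseen source of delay-list entries exists.

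I expect the main obstacle to be timing. An entry can be added to $DL$ for $k_i$ because a block in the chain is committed before its partner, and this could happen after the node evaluates $DL_r$. I would handle this by noting that any such entry is a transaction already in the known chain, so it is either already treated as possibly delayed or it touches a key whose conflict check fails conservatively. The argument would take $DL_r$ to contain every $\gamma$ sub-transaction in the chain whose partner is not yet committed or evaluated to STO. Removals happen only under the two conditions stated in \cref{sec: type c diff}, so stale entries never cause unsafety; they only block STO.
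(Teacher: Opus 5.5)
Your proposal is correct and follows the paper's proof. In the oldest-uncommitted case every earlier $k_i$ block is committed and hence known, and in the SBO-predecessor case you obtain Complete Shard History for $k_i$, so every block in-charge of $k_i$ before round $r$ is in the local view, and with it every transaction that could put a $k_i$ entry in $DL_r$. The extra machinery is unnecessary: the induction on $DL_{r-1}$, the appeal to the Complete Shard History proposition (whose leader-check and persistence hypotheses are not available here), and the timing discussion can all be dropped, because the conclusion follows directly once all those blocks are known.
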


\begin{proof}    
    If $b_i^r$ was the oldest uncommitted block in-charge of $k_i$ and it is in $r$, then any other possible block in-charge of $k_i$ prior to $r$ must have been committed and therefore known. 

    Similarly, if $b_i^{r-1}$ has SBO, it implies that $b$ has complete shard history of $k_i$. Therefore, there does not exist an uncommitted block in-charge of $k_i$ from before round $r$ that is not in $H_b$. Therefore, all blocks in-charge of $k_i$ prior to $r$ is known as well.

    Since all blocks that are in-charge of $k_i$ is known, there must not exist a transaction that would be in $DL_r$ that we do not know. 
\end{proof} 

\mich{added this to help make the ordering and execution of Type $\gamma$ transactions slightly clearer}
\begin{definition}[Type $\gamma$ sub-transactions execution order]
\label{def: gamma execution order}

Consider a Type $\gamma$ transaction comprised of two sub-transactions $t_1,t_2$ that exist in blocks $b_1,b_2$ from rounds $r_1,r_2$ respectively. 

\begin{itemize}
    \item If $r_1 = r_2$, and $b_1,b_2$ are in the same leader's causal history, then $t_1,t_2$ are executed in some deterministic order between the 2 transactions. 

    \item If $r_1 <r_2$, and $b_1,b_2$ are in the same leader's causal history. Then $t_1$ will be executed concurrently with $t_2$. That is to say, it is not executed along with the other transactions in $b_1$. 

    \item If $b_1,b_2$ are committed by different leaders. The one committed earlier executes at $max(r_1,r_2)$ with the one committed later. 
\end{itemize}
\end{definition}

\mich{Type gamma: same round, same leader}
\mich{below is a bit iffy, polishing atm, not sure how to phrase is better. }
\begin{lemma}
    \label{lemma: gamma 1}
    Consider a Type $\gamma$ transaction of 2 sub-transactions $t = [t_1, t_2]$ where $t_1 \in b_1,t_2 \in b_2$ in round $r$.
    The transaction $t$ can be determined to have STO in $r_1+1$ if all of the following hold: 

    \begin{itemize}
        \item Both $t_1$ and $t_2$ may be evaluated to have STO independently. 

        \item Every other transaction in $b_1$ and $b_2$ have STO. 
        
        \item The conditions of~\cref{prop: 2 blocks 1 leader} is met and so there exists some eventually committed leader block $b'$ such that both $t_1, t_2$ exist in $H_{b'}$. 

    \end{itemize}
\end{lemma}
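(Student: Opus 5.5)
Let $b'$ be the leader whose existence is guaranteed by the third hypothesis via \cref{prop: 2 blocks 1 leader}, so $b_1,b_2\in H_{b'}$. Since $r_1=r_2=r$, the deterministic tie-break in \cref{def: sorted causal history of a block} fixes the relative order of $b_1$ and $b_2$ in $H_{b'}$. This order is computable locally, because it depends only on the blocks' identities and not on the rest of $H_{b'}$. Without loss of generality $b_1$ precedes $b_2$, so by \cref{def: gamma execution order} $t_2$ is prime. The pair $(t_1,t_2)$ is therefore executed at the position of $t_2$ in $b_2$, and the other transactions of $b_1$ run in their original order with $t_1$ removed. The goal is to show that the outcome we compute locally for $(t_1,t_2)$ equals $b'\langle b_2(t_2)\rangle$ together with the corresponding outcome for $t_1$.

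\textbf{Step 1: the prefix state.} I will characterize the state of every key read or written by $t_1$ or $t_2$ immediately before the pair executes in $H_{b'}$. Say $t_1$ writes to $k_i$ and $t_2$ writes to $k_j$, possibly reading from other shards. Because each sub-transaction independently has STO, the arguments of \cref{lemma: STO for alpha} and \cref{lemma: STO for beta} apply. Through complete shard history, \cref{prop: complete shard history} and the leader checks (\cref{prop: leader check_1}), they show that the blocks from rounds $<r$ touching those keys which precede $b_i$ (resp.\ $b_j$) in $H_{b'}$ are exactly those in $H_{b_1}$ (resp.\ $H_{b_2}$), in round order. They also show that no block from round $r+1$ or later that touches those keys precedes the pair. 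Both blocks persist and lie in $H_{b'}$, so this prefix is the same for both sub-transactions and can be reconstructed locally.

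\textbf{Step 2: same-round interference.} The only remaining interference comes from round-$r$ blocks ordered between $b_1$ and $b_2$, and from transactions inside $b_1$ and $b_2$ themselves. For blocks $b_k^r$ with $k\neq i,j$, the condition in \cref{lemma: STO for beta} already ensures that any such block modifying a read key is either committed already or harmless. I expect to reuse that condition directly, since STO of each sub-transaction was evaluated with respect to the reads at round $r$. For the internal transactions, the second hypothesis says every other transaction in $b_1$ and $b_2$ has STO. Their finalized effects on the keys $t_1,t_2$ touch are therefore known exactly. Replaying $b_1\setminus\{t_1\}$, then $b_2$ up to $t_2$, then the atomic pair, reproduces $H_{b'}$'s execution on the relevant keys. \cref{prop: DL} ensures that no unknown delayed transaction interferes.

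\textbf{Main obstacle.} The hardest step is Step 2. Removing $t_1$ from its original slot in $b_1$ may change the outcomes of later transactions in $b_1$ that were judged STO under the original order. I will handle this by arguing that the STO check for those transactions, under the $DL_r$ mechanism, already treats $t_1$ as pending: $t_1$ is placed in $DL_r$ until the prime is resolved. Any transaction touching $t_1$'s keys would then have failed STO, which contradicts the hypothesis. With that settled, every condition is observable once the blocks of round $r+1$ are seen, which gives determination at round $r+1$ and hence early finality.
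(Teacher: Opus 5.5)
\textbf{There is a genuine gap, in the step you flag as the crux.} Your overall route matches the paper's:
\begin{enumerate}
\item fix the common leader $b'$ via \cref{prop: 2 blocks 1 leader};
\item use the deterministic same-round tie-break to make $t_2$ prime;
\item let the independent STO of $t_1,t_2$ pin down everything from other rounds;
\item let the STO of the remaining transactions in $b_1,b_2$ pin down the state just before the pair executes.
\end{enumerate}
Your resolution of the crux rests on $t_1\in DL_r$, and that fails in two ways. First, the paper inserts a $\gamma$ sub-transaction into the delay list only when it is committed before its partner or lies in an earlier round. Those are the different-leader and different-round cases. With $r_1=r_2$ and both blocks uncommitted, $t_1$ is simply not in $DL_r$. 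Second, if it were, the check would disqualify every transaction that reads or writes a key $t_1$ writes. On a literal reading that includes $t_1$ itself, and in any case it includes $t_2$ whenever $t_2$ reads what $t_1$ writes, as in the swap example. Your ``contradiction'' therefore shows that the first hypothesis can never hold, so it proves the lemma only vacuously. This is also why \cref{lemma: gamma 2} has to switch off $t_1$'s delay-list entry for the relevant blocks before it can conclude anything.

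\textbf{The concern is legitimate, and it must be discharged by the hypotheses.} It affects $t$ itself, not only the other transactions. Suppose a later $t_3\in b_1$ reads $t_1$'s write and writes a key the pair reads. Once $t_1$ is relocated, the state just before the pair executes is different. The delay list cannot fix this. The paper's proof handles it, tersely, by treating the second hypothesis as giving the finalized effect of each remaining transaction under the actual execution order, in which $t_1$ has been moved out of $b_1$. That applies to the rest of $b_1$ and to the transactions of $b_2$ before $t_2$. The fourth bullet of \cref{lemma: gamma 2} makes this reading explicit: those transactions ``have STO given that $t_1$ were to be in $b_i^{r_2}$''.

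\textbf{The fix.} Read the second bullet as STO evaluated with $t_1$ deleted from $b_1$. Alternatively, add the explicit condition that no transaction after $t_1$ in $b_1$ reads or writes a key $t_1$ writes. With either change, your replay goes through: $b_1\setminus\{t_1\}$, then the intervening round-$r$ blocks, then $b_2$ up to $t_2$, then the pair. Your proof then coincides with the paper's. Your handling of the intervening round-$r$ blocks via condition 3 of \cref{lemma: STO for beta} is a useful detail the paper omits.
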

\begin{proof}
    First, given that the conditions of~\cref{prop: 2 blocks 1 leader} hold, we now assume $b_1$ and $b_2$ are committed by the some leader block $b'$, and without loss of generality, let the prime transaction be $t_2$. Given that blocks of the same round are ordered deterministically, it is known that all blocks will order $b_1$ before $b_2$ and this assumption can be safely made.

    Now, given that both $t_1$ and $t_2$ satisfy the requirements to have STO independently, the transaction outcome of $t_1$ and $t_2$ are each equivalent to their respective execution prefix with respect to $b'$. However, now $t_1$ will be concurrently executed with $t_2$ instead of following the typical ordering on the blocks and transactions.
    
    Given that all other transactions in $b_1$ and the transactions prior to $t_2$ in $b_2$ will be guaranteed to execute \emph{before} $t_1$, their execution may now affect the transaction outcome of $t_1$. Given that all of these transactions have STO, their transaction outcome is equivalent to their execution prefix with respect to $b'$ and therefore, it can be correctly determined their effect on transaction outcome of $t_1$. Therefore transaction outcome of $t_1$ now, when ordered concurrently with $t_2$ is indeed equivalent to its execution prefix with respect to $b'$. This is illustrated with \cref{fig: type c1}.

    All of the above may be evaluated in round $r+1$. 

\end{proof}

\mich{type gamma, same leader, differ round. }
\begin{lemma}
\label{lemma: gamma 2}
     Consider a Type $\gamma$ transaction of 2 sub-transactions $t = [t_1, t_2]$ where $t_1 \in b_i^{r_1}$ and $t_2 \in b_j^{r_2}$, and without loss of generality, $r_1 < r_2$.
     
     $t = [t_1,t_2]$ can be determined to have STO in round $r_2+1$ if all of the following hold: 

    \begin{itemize}
        \item $t_1$ independently has STO when considering it to be in $b_i^{r_2}$. This condition includes that $b_i^{r_2}$ must be observed.
        
        \item $t_2$ independently has STO.

        \item Every other transaction in $b_j^{r_2}$ has STO. 

        \item Furthermore, all other transactions in $b_{i}^{r_2}$ have STO given that $t_1$ were to be in $b_{i}^{r_2}$ instead of $b_i^{r_1}$.
        
        \item The conditions of~\cref{prop: 2 blocks 1 leader} is met and so there exists some eventually committed leader block $b'$ such that both $t_1, t_2$ exist in $H_{b'}$. 

    \end{itemize}
\end{lemma}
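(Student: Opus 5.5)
The plan is to reduce this case to the same-round situation of \cref{lemma: gamma 1}, exploiting the execution rule in \cref{def: gamma execution order}. By the last hypothesis and \cref{prop: 2 blocks 1 leader}, both $b_i^{r_1}$ and $b_j^{r_2}$ lie in $H_{b'}$ for a single eventually committed leader $b'$. Since $r_1<r_2$, the second bullet of \cref{def: gamma execution order} applies. The non-prime $t_1$ is removed from $b_i^{r_1}$ and executed together with the prime $t_2$ inside $b_j^{r_2}$. So the first step is to fix $t_2$ as prime and write the execution order of $H_{b'}$ with $t_1$ relocated. I will then argue that, for every key touched by $t_1$ or $t_2$, this order is equivalent to one in which $t_1$ physically sits in $b_i^{r_2}$ and the pair is then combined as in the same-round case.

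The second step handles the keys of $k_i$ that $t_1$ reads and writes. Treating $t_1$ as a member of $b_i^{r_2}$, the hypothesis that it has STO there gives the following. By \cref{lemma: STO for alpha} or \cref{lemma: STO for beta}, $b_i^{r_2}$ persists in $r_2+1$, has complete shard history for $k_i$ (and for the read shard if $t_1$ is Type $\beta$), and passes the relevant leader checks. Then \cref{prop: complete shard history} and \cref{prop: leader check_1} imply that every block in charge of $k_i$ that executes before $b_i^{r_2}$ in $H_{b'}$ is in $H_{b_i^{r_2}}$, in round order. In particular this covers $b_i^{r_1},\dots,b_i^{r_2-1}$, with $t_1$ excised from $b_i^{r_1}$.

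Other transactions in these intermediate blocks may touch $t_1$'s keys. For this I will use that $t_1$ sits in $DL$ from $r_1$ until the prime attains STO. By \cref{prop: DL}, no conflicting transaction in rounds $r_1<r\le r_2$ could have been evaluated with a stale view, and any such transaction is simply placed after $t_1$. The fourth hypothesis, that the rest of $b_i^{r_2}$ has STO with $t_1$ relocated, then pins down every state change to $k_i$ before $t_1$ executes. Symmetrically, the second and third hypotheses pin down the $k_j$ state before $t_2$, exactly as in the proof of \cref{lemma: gamma 1}.

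The final step combines the two views. The prime block $b_j^{r_2}$ executes $t_1$ together with $t_2$. Blocks of round $r_2$ are ordered deterministically, and $b_i^{r_2}$ precedes $b_j^{r_2}$ or follows it. In either case the transactions executed between the "virtual" position of $t_1$ and $t_2$ all have STO by hypothesis, so their effect is computable in advance. This reproduces the argument of \cref{lemma: gamma 1} and yields outcomes equal to $b'\langle b_j^{r_2}(t_2)\rangle$ for the pair. All conditions concern blocks of rounds at most $r_2+1$, so they are checkable in round $r_2+1$.

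I expect the main obstacle to be justifying that relocating $t_1$ from $b_i^{r_1}$ to the position of $b_i^{r_2}$ does not change anything for the intermediate $k_i$ blocks, which had their outcomes computed without $t_1$. I would first try to show that the delay-list rule forbids those blocks from touching $t_1$'s keys while they still claim STO. Failing that, I would weaken the claim to require that any such transaction lacks STO, so that the relocation is observationally harmless.
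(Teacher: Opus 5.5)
Your top-level reduction matches the paper's. Use \cref{prop: 2 blocks 1 leader} to put $b_i^{r_1}$ and $b_j^{r_2}$ in one $H_{b'}$, excise $t_1$ from $b_i^{r_1}$, treat it as the last transaction of $b_i^{r_2}$, and invoke \cref{lemma: gamma 1}. The gap is in your third step, the one you flag as the main obstacle, and your proposed fix goes the wrong way.

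By the $r_1<r_2$ clause of \cref{def: gamma execution order}, $t_1$ executes inside $b_j^{r_2}$. It therefore runs \emph{after} the rest of $b_i^{r_1}$ and after every $k_i$ block of rounds $r_1+1,\dots,r_2-1$ in $H_{b'}$. Your claim that conflicting intermediate transactions are ``placed after $t_1$'' reverses the actual order. Once the order is read correctly, the obstacle disappears. The intermediate blocks' outcomes computed \emph{without} $t_1$ are exactly their execution prefixes, and placing $t_1$ virtually in $b_i^{r_2}$ is consistent with them. This is the paper's closing ``technical detail'': once it is known that $t_1$ executes with $t_2$, its delay-list entry is \emph{ignored} for $k_i$ blocks in rounds $r_1$ through $r_2$.

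Your alternative is to keep $t_1$ in $DL$ until the prime attains STO and use that to forbid conflicts. This is circular. The prime's STO is what this lemma is meant to establish. Through the first hypothesis (the checks of \cref{lemma: STO for alpha,lemma: STO for beta} for $t_1$ placed in $b_i^{r_2}$), it requires the SBO chain $b_i^{r_1},\dots,b_i^{r_2-1}$, with $t_1$ excised. A retained $DL$ entry would deny STO to any transaction in that chain touching $t_1$'s keys. The hypotheses could then hold only in the conflict-free case, and the entry would never be removed. Read literally, the entry would even make $t_1$'s own $DL$ check in $b_i^{r_2}$ collide with $t_1$ itself. Your fallback, adding ``such transactions lack STO'' as a hypothesis, proves a weaker statement, not this one.

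A smaller point: the ``in either case'' in your last step is too quick. If the tie-break orders $b_j^{r_2}$ before $b_i^{r_2}$, the real rule runs $t_1$ at $t_2$'s position, before the rest of $b_i^{r_2}$. The same-round instance of \cref{lemma: gamma 1} would instead make $t_1$ prime. The paper's reduction does not address this case either, but your argument should at least separate the two orders explicitly.
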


\begin{proof}
    First, given that the conditions of~\cref{prop: 2 blocks 1 leader} hold, we now assume that $b_i^{r_1}$ and $b_{j}^{r_2}$ are committed by the same leader block $b'$.

    Since $r_1 < r_2$, $t_1$ will be moved to execute concurrently with $t_2$ instead of in $b_i^{r_1}$. Given that $b_i^{r_2}$ is observed, then now let $t$ be removed from $b_i^{r_1}$ and placed as the last transaction in $b_i^{r_2}$ instead. Now, this becomes an instance which is covered by~\cref{lemma: gamma 1}. Given that the conditions of this lemma are met, then it holds that the conditions of~\cref{lemma: gamma 1} are satisfied as well. Therefore, the transaction order of $t_1$ will be equivalent to its execution prefix with respect to $b'$. 

    There is one technical detail that is different, since it is known that $t_1$ will now be executed with $t_2$ (as if in $b_{j}^{r_2}$). Therefore, its presence in the delay list can now be safely ignored when considering STO for transactions in blocks in-charge of $k_i$ in between and including rounds $r_1$ and $r_2$, as it is guaranteed that this transaction will only execute after these blocks.
    
    

    If the above conditions are true, STO may be evaluated at round $r+1$.
\end{proof}

\begin{lemma}
\label{lemma: early finality for all}
    Type $\alpha, \beta, \gamma$ transactions may be evaluated to have STO before the leader $b'$ that includes them in its causal history is committed. 
\end{lemma}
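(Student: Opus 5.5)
The plan is a timing argument layered on the correctness lemmas already proved. \cref{lemma: STO for alpha}, \cref{lemma: STO for beta}, \cref{lemma: gamma 1} and \cref{lemma: gamma 2} show that, when their hypotheses hold, the transaction outcome equals the execution prefix with respect to the eventually committing leader $b'$. What remains is to show that those hypotheses can be checked locally strictly earlier than the round in which $b'$ can be known to be committed. For each transaction type I would therefore compare two rounds: the round in which the STO conditions become decidable, and the earliest round in which the committing leader can gather enough votes.

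\textbf{Types $\alpha$ and $\beta$.} For $t\in b_i^r$, every condition in \cref{lemma: STO for alpha} and \cref{lemma: STO for beta} refers only to rounds $\le r+1$:
\begin{itemize}
\item the delay list $DL_r$, which \cref{prop: DL} shows is complete once the pointer/SBO condition holds;
\item pointers from $b_i^r$ to blocks of round $r-1$;
\item the contents of $b_j^r$ and $b_j^{r+1}$;
\item the leader check, which concerns leaders in round $r+1$ and their votes;
\item persistence in $r+1$, which by \cref{prop:persistence} is witnessed by $f+1$ pointers from round $r+1$.
\end{itemize}
So the conditions are decidable at round $r+1$. The committing leader $b'$ is handled in two cases. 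Since $b_i^r$ is a non-leader block, $b'$ lies in a round $r'\ge r+1$. A steady leader commits only once $2f+1$ blocks of round $r'+1\ge r+2$ point to it, and a fallback leader commits only at the end of its wave, which is later still. Hence $b'$ cannot be known to be committed before round $r+2$, which is strictly after round $r+1$.

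\textbf{Type $\gamma$.} The same comparison uses $\max(r_1,r_2)+1$ as the evaluation round, as given in \cref{lemma: gamma 1} and \cref{lemma: gamma 2}. The one wrinkle is \cref{prop: 2 blocks 1 leader}: it requires that no leader up to round $\max(r_1,r_2)+1$ has already taken either block. Any such leader would itself commit no earlier than round $\max(r_1,r_2)+2$, and by the proposition its absence forces $b'$ into a later round. So evaluation again precedes commitment.

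\textbf{Main obstacle.} The delicate point is the leader-check branch that waits for a leader in round $r+1$ to be known committed (\cref{prop: leader check_2}). That knowledge arrives around round $r+2$, so the decidable round is no longer strictly $r+1$. I would handle this by noting that this leader does not commit $b_i^r$: by hypothesis $b_i^r\notin H$ of that leader, so the leader that actually commits $b_i^r$ is in round $\ge r+2$. Its commitment is then known no earlier than round $r+3$. Evaluation therefore still strictly precedes commitment of the relevant $b'$, and this completes the argument for all three types.
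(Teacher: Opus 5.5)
Your proposal is correct and follows the paper's proof, which is exactly this round-counting comparison: the STO conditions are decidable at round $r+1$ (resp.\ $\max(r_1,r_2)+1$), while the committing leader cannot be known to be committed before round $r+2$ (resp.\ $\max(r_1,r_2)+3$). Your separate treatment of the ``leader in round $r+1$ already committed'' branch of the leader check goes beyond the paper's proof and mirrors the remark in \cref{para: leader exp}; note that if that leader is a fallback leader, its commitment is known only at round $r+4$, so comparing against your bound $r+3$ needs the extra observation that $b'$ must then lie in the next wave (round $\ge r+5$).
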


\begin{proof}
    For a Type $\alpha,\beta$ transaction in round $r+1$, it may be evaluated to have STO in round $r+1$ as per \cref{lemma: STO for alpha,lemma: STO for beta}. Since the leader checks were passed, the earliest the transaction may be committed is round $r+2$, where $b'$ is a steady leader. 

    For Type $\gamma$ transactions, that exist in two parts $b_1,b_2$ from rounds $r_1,r_2$ respectively. STO may be evaluated at round $max(r_1,r_2)+1$. whilst the leader may exist in the earliest round $max(r_1,r_2)+1+1$ and committed in round $max(r_1,r_2)+1+2$. 
\end{proof}

Therefore, by \cref{lemma: early finality for all}, early finality is possible for Type $\alpha, \beta, \gamma$ transactions.


\section{Extending Type $\beta/\gamma$ transactions}
\label{app: extending beta gamma}

Extending Type $\beta$ transactions to read from a set of shards is quite simple. Suppose we have a transaction $t \in b_i^r$ that reads from a set of shards $\mathcal{K}=\{\ldots\}$. $\forall k_j \in \mathcal{K}$, $b_i^r$ needs points to a block $b_j^{r-1}$ that has SBO, and it passes the leader check on shard $k_j$. If these conditions are achieved, it's clear to see how \cref{lemma: STO for beta} may be extended to prove that $t$ has STO.

Extending Type $\gamma$ transactions to be an $n$-tuple across $n$ shards is quite simple as well. Suppose all $n$ sub-transactions exist in the same round, have SBOs, and we can ensure that all sub-transactions will eventually be in the causal history of the same committed leader. 
Per \cref{lemma: gamma 1}, the Type $\gamma$ transaction has STO as well. A similar argument may be applied to \cref{lemma: gamma 2} when all $n$ sub-transactions exist in different rounds, but is eventually committed by the same leader. 

\section{Future Work: A Finer Grained \sys}
\label{app: extending to only STO}

For clarity, \sys presents \textit{sufficient} conditions for determining whether a block has SBO, evaluated recursively starting from the earliest uncommitted block in each shard. Intuitively, SBO is ``inherited'' from one block to the next: a block cannot have SBO if the uncommitted block in-charge of the same shard in the previous round does not. 

However, this inheritance requirement is stronger than necessary for individual transactions. Consider blocks $b_1, b_2$---the first and second uncommitted blocks in charge of a shard---and a type $\alpha$ transaction $t \in b_2$ that modifies a key untouched by any transaction in $b_1$. Transaction $t$ can achieve STO as long as $b_2$ persists and passes the leader check, regardless of whether $b_1$ has SBO or whether $b_2$ references it.  

This observation reveals that \sys's current conditions are sufficient but not necessary. Deriving tight \textit{necessary} conditions for STO evaluation remains an interesting direction for future work.

\section{Missing blocks, Weak links, Orphaned and Dangling Blocks}
\label{app: orphaned}
\mich{maybe run this section through llm to fix grammar etc}

Determining the ``\textit{oldest uncommitted block}'' for a shard in the presence of Byzantine nodes requires identifying whether missing blocks are genuinely absent or exist without a node's knowledge. 

Missing blocks can be ascertained by a node querying the remaining nodes to determine whether they voted in the second phase (\texttt{vote} phase) of the reliable broadcast~\cite{Bracha1987AsynchronousBA}. 
If fewer than $2f+1$ nodes voted (ascertained by having $<f+1$ positive responses out of the $2f+1$ responses obtained), such a block will never exist and can be categorized as missing. 

However, if at least $2f+1$ ($\geq f+1$ out of the $2f+1$ responses obtained) nodes voted, that block might exist. This is not problematic if all blocks in subsequent rounds are known and none reference the missing block. We refer to those as orphaned blocks. 
Dangling blocks (blocks pointed to by only a small subset of blocks and never committed) are categorized similarly. Bullshark~\cite{bullshark} allows blocks to reference blocks from non-immediate previous rounds via pointers referred to as ``weak links.''

These weak links do not participate in consensus; instead, they are used to help orphaned or dangling blocks eventually get committed. However, since blocks may be referenced via weak links almost arbitrarily, we disallow such links in \sys, as they enable arbitrary inclusion of blocks into a node's causal history. 

The problem with dangling blocks is that if they are not committed, they will always be the \textit{oldest uncommitted block} in-charge of a particular shard, preventing other blocks in-charge of the same shard from ever gaining SBO. 

In this section, we will define \textit{limited look-back} as a potential solution to handle such cases, effectively acting as a high-water mark that eventually excludes these dangling blocks from consideration.

We first define Sorted Causal History with Limited Look-back to be used in-place of Sorted Causal History. 
\begin{definition}[Sorted Causal History with Limited look-back ($v$)]
\label{def app: lh}
    Suppose the last known committed leader is in round $r'$, where the next possibly committed leader is in round $r'+2$. Consider a block $b$ in round $r>r'$ and let $v$ be a publicly known constant. $b$'s Sorted Causal History with Limited look-back $LH_{b}$ is defined to be all blocks in $H_{b}$ that are from round $r'+2-v$ or after. We denote $r'+2-v$ as the watermark $m_b$ for $b$. 
\end{definition}

Consider some block $b$ in round $r$ and the leader block $b'$ in round $r'$ that eventually commits $b$. One issue that can potentially occur with the above definition is that a node's local view of $LH_{b}$ may include blocks that are not in $LH_{b'}$, if they have different watermarks. Therefore, we need to show that for any block $b\in LH_{b'}$, their Sorted Causal History with limited look-back excludes blocks from the same rounds, that is, their watermarks are the same.

\begin{lemma}
    Consider a leader block $b'$ from round $r'$ that eventually commits. For all $b \in LH_{b'}$, the watermark $m_b$ of block $b$ is identical to watermark $m_{b'}$ of block $b'$.
\end{lemma}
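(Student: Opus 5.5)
The watermark in \cref{def app: lh} is $m_b = r'+2-v$, where $r'$ is the round of the last \emph{known} committed leader. It therefore depends only on which committed leader is taken as the reference point, not on the round of $b$ itself. My plan is to show that, when $LH_{b'}$ is evaluated, every $b \in LH_{b'}$ is evaluated against the same last committed leader as $b'$. Equal watermarks then follow immediately.

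First I fix the leader $b'$ in round $r'$ that eventually commits, and let $\ell$ be the committed leader immediately preceding $b'$ in the total order produced by the consensus core, say in round $r_\ell$. By \cref{def: ordering}, the commitment order is strictly monotonic and deterministic across honest nodes, so $\ell$ is uniquely determined. Since $H_{b'}$ (and hence $LH_{b'}$) is defined by excluding blocks already in the causal history of the previously committed leader, the watermark of $b'$ is $m_{b'} = r_\ell+2-v$.

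Next I take any $b \in LH_{b'}$. By definition $b \notin H_\ell$ and $b$ is committed by $b'$, so at the time its outcome is tied to $b'$, the last committed leader preceding the commitment of $b$ is also $\ell$. No leader committed strictly between $\ell$ and $b'$ exists, again by monotonicity. Consequently $H_b$ is computed relative to $\ell$, and $m_b = r_\ell+2-v = m_{b'}$. I would also check that $b$'s round exceeds $r_\ell$, as \cref{def app: lh} requires. This holds because $b$ lies in $H_{b'}$ but not in $H_\ell$, and by the persistence and quorum reasoning of \cref{prop:persistence}, blocks from rounds before $r_\ell$ that are reachable from $b'$ but not from $\ell$ are still from rounds at least $r_\ell+2-v$. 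Alternatively, the definition can simply be read as applying to any uncommitted $b$.

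The main obstacle is the phrase ``last known committed leader''. A node's local view may lag, so a node might evaluate $LH_b$ using an older committed leader $\ell_0$ preceding $\ell$. I would handle this by arguing that the watermark relevant to the finalized outcome is the one computed with respect to the actual predecessor $\ell$, which all honest nodes agree on once $b'$ commits. I would also observe that an STO determination made under a stale view $\ell_0$ still has to pass the leader checks. In particular, \cref{prop: leader check_2} forces the node to know whether leaders in intervening rounds have committed, so any determination the protocol actually makes already uses $\ell$. If this step fails, the fallback is to state the lemma explicitly relative to the canonical predecessor $\ell$, which makes it an immediate consequence of deterministic commitment order.
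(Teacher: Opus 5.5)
Your main argument is correct and matches the paper's. The watermark is fixed by the last known committed leader (equivalently, the next possibly committed leader), not by the block. So $b'$ and every block it commits share the predecessor $\ell$, and hence the watermark $r_\ell+2-v$. The paper says the same thing by contradiction: two different watermarks would require two distinct next possibly committed leaders.

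Two of your auxiliary claims are wrong, however, and should be dropped rather than relied upon.

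First, blocks of $LH_{b'}$ need not come from rounds after $r_\ell$. For $v\geq 2$, the non-leader blocks of round $r_\ell$ are never in $H_\ell$, yet they can lie in $LH_{b'}$. Older blocks that $\ell$ failed to reach can lie there too; these are the dangling blocks that motivate the look-back in the first place. The bound ``round $\geq r_\ell+2-v$'' holds only because $LH$ filters on it. \cref{prop:persistence} says nothing about how old such blocks can be. So adopt your alternative reading, that the definition applies to any uncommitted block.

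Second, the leader checks do not force a node to know $\ell$. They concern only a leader in round $r+1$, and they pass trivially when there is none. So when $\ell$ lies in a round at most $r$, a node can reach an SBO verdict while its view of the last committed leader is stale.

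The paper does not handle this inside the lemma. Its argument works within a single view, where the last known committed leader, and hence the watermark, is one fixed quantity. Stale views are treated separately after the lemma, by monotonicity: a lower watermark only enlarges the set of conflicts a node must account for, so verdicts reached under a stale view remain valid.
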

\begin{proof}
    Suppose that for any node's local view of the DAG, there exists a block $b\in LH_{b'}$ where its watermark differs from $m_{b'}$. By the definition of watermark of $b$ and $b'$, this implies that there exist two distinct next possibly committed leaders from 2 different rounds, which is absurd. 
\end{proof}

The next question to address is whether for a given block $b$ (from round $r$), the Sorted Causal History with Limited Look-back of $b$ will differ for two distinct nodes. Observe that the only factor that changes the local view of the watermark is the knowledge of the latest committed leader. If a node evaluates a block to have met the criteria for SBO, the criterion will hold regardless of which leader prior to $r$ is newly committed. This consistency is illustrated in \cref{app:lemon proofs}.

In other words, suppose the view of the watermark is for a fixed block $b$ for a node is $m_{b1}$ and it evaluates $b$ to have SBO. Even if the view of the watermark for $b$ is $m_{b2}$ where $m_{b2} > m_{b1}$ for another node, it will not alter the execution prefix of $b$ with respect to the leader that eventually commits it. Intuitively, this holds because the node with the lower watermark must account for a larger set of conflicts.

It is also important to note that even in the case outlined using \cref{def: sorted causal history of a block}, it is not possible for a node to initially evaluate a block as meeting the SBO criterion, and later, after learning of a more recently committed leader, revise this evaluation to not meet the SBO criterion. This holds as well when utilizing \cref{def app: lh}.

Thus, the use of Sorted Causal Histories with Limited Look-back eventually eliminates dangling blocks from consideration. By constantly updating the threshold for the \textit{oldest uncommitted block}, this mechanism refreshes the possibility for blocks to meet the SBO criterion.

\section{Additional Experimental Discussion}
\label{app: experiment discussion}

\subsection{Rationale for Randomizing Faulty Nodes}
There are a few reasons to randomize which nodes are faulty; firstly, in the case where there is a single fault:
In the original Bullshark\cite{bullshark} implementation, the steady leader is elected in a round robin manner. Which means it is possible that the supposed faulty node will never be the 2nd steady leader of a wave, causing the system to never enter fallback-mode. 
Secondly, the original implementation orders the node randomly and then selects the leader in a round-robin manner, enabling the previous case even when $f>1$. 
Artificially improving the performance of the protocol when faults exist.

Lastly, since \sys requires a recursive chain of SBO for a block to gain SBO (if it was not the latest uncommitted block). Randomizing the failures will result in a worse but fairer representation of \sys. 

\subsection{Major Code Differences}
\label{app: sec: code diff}

Here we discuss some changes made to Bullshark's Code. 
\begin{enumerate}
    \item \label{modi:1} We were unable to get the claimed performance when utilizing parameters as described in their paper (utilizing their code). However, we achieved similar results when we utilized the parameters present in their repository\footnote{Data available at \href{https://github.com/asonnino/narwhal/tree/bullshark-fallback/benchmark/data/latest}{GitHub}}. We believe this could either be a mistake on their part or one of the various bugs in their fallback version of the protocol.  
    
    \item Similar to Narwhal-Tusk\cite{narwhal}, nodes perform 1 round of one-to-all broadcast to create batches, and another one-to-all broadcast to form blocks including those batches. 
    Each batch contains up to 500kB of transactions, but can be represented by a 32B hash. Therefore, a block size of 1000B should only contain approximately 32 batches. 
    In their code, whenever producing a block, the entire batch queue is flushed, resulting in blocks containing an unbounded number of batches (and an extremely large block including many hashes). 
    We set an upper-limit (bounded by the block size); therefore, we exhibit a queuing behavior in our results when the client transaction rate is increased. 
    
    \item
    In Bullshark's implementation, they ordered the nodes sequentially and selected the last $f$ nodes to be faulty. 
    We instead, chose to randomized the selection of faulty nodes. 
    Furthermore, we also randomized the steady leader election as compared to a simple round-robin rotation (per Bullshark). 
    However, we add a restriction that no two consecutive steady leaders are the same. This is to introduce a more normalized yet realistic failure behavior. 
    Previously, because the rotation of leaders was static, it was possible for the system to never enter fall-back mode in some cases; when a single faulty node is only ever elected as the first steady leader of the wave--- skewing the results. 
    Our procedure allows the state to enter \textit{fall-back} mode reliably when $f=1$, making it more aligned with an adaptive adversary. 
    
\end{enumerate}

\subsection{Additional Experiment results}
\label{app: add. results}
\paragraph{Varying ``Cross-shard probability'' in \sys}
In \cref{sec: exp type b} we utilized ``Cross-shard probability''= 50\%, denoting the percentage of blocks containing Type $\beta/\gamma$ transactions; in \cref{fig:app:exp} we show the effects of varying that rate. 
Even at 100\%, we enjoy $\sim13\%$ E2E latency reduction and $\sim18\%$ consensus latency reduction. 
\begin{figure}[h]
    \centering
    \includegraphics[width=.8\linewidth]{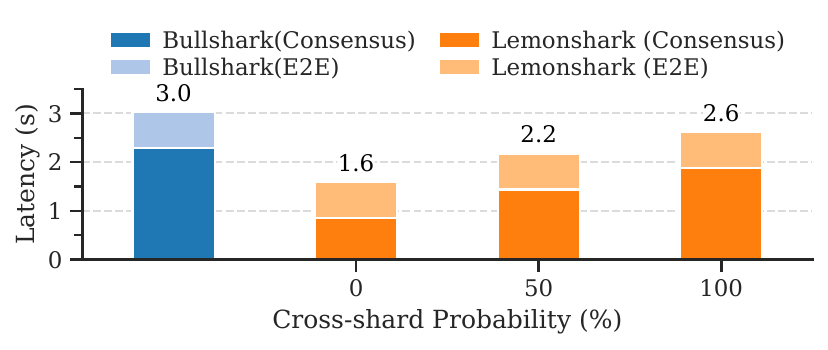}
    \caption{Performance of \sys with moderate amount of Type $\beta$ transactions,(Cross-shard count = 4, Cross-shard Failure = 33\%) while varying Cross-shard probability.}
    \label{fig:app:exp}
\end{figure}

\subsection{Cost of Experiments}
As illustrated in the appendix of Narhwal-Tusk \cite{narwhal}, running these experiments on AWS can become very costly, extremely quickly.

\section{Pipelining Dependent Client Transactions}

\label{app: pipelined}
\begin{figure}[t]
    \centering
    \includegraphics[width=0.8\linewidth,trim=2cm .5cm 1cm 1cm]{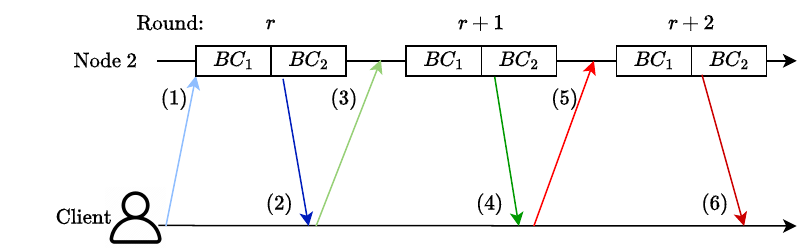}
    \caption{For illustration, we break the RBC of our block into two one/all-to-all broadcast instances $BC_1,BC_2$; akin to the 2-phase nature of Bracha's reliable broadcast\cite{Bracha1987AsynchronousBA}. The client first sends in a transaction $t_1$. (2) After the first $BC_1$, the process provides a speculative outcome for $t_1$ in the form of $to_1$. (3) The client then sends in the next transaction $t_2$ with the requirement that it only executes if the execution of $t_1$ yields $to_1$. (4,5,6) similar steps repeat for $t_2,t_3$. }
    \label{fig: causal example 1}
\end{figure}

Beyond the scope of our main insight in \cref{sec: lemon}, we observe that if a client has a chain of transactions $t_1, \ldots, t_l$ such that every $t_i$ requires the outcome of the previous transaction $t_{i-1}$, a client will typically need to send $t_1$, wait for its finalized outcome when committed, before sending $t_2$, and so on. Therefore, the total latency incurred is at least the time required for $l$ distinct leaders to be committed.

Since a RBC primitive may be decomposed into two phases\cite{Bracha1987AsynchronousBA} of one-to-all broadcast. Suppose the process receiving the transaction provides the (unfinalized and potentially unsafe) speculative TO for $t_1$ in $b_1$ for round $r_1$ during the first phase. In that case, the client may send a tentative transaction $t'_2$ that executes conditionally on the speculated outcome of $t_1$ being the one specified in $t'_2$.

Before the first phase of the \textit{next block} $b_2: t'_2 \in b_2, b_2 \in H_{b_1}$ in round $r_1+1$ can be proposed, a similar operation is performed for $t'_3 \in b_3$, and so forth.

Such pipelined behavior can potentially achieve much lower latencies for the entire chain of transactions $t_1, \ldots, t_l$; we illustrate this in \cref{fig: causal example 1}. It is important to note that due to our sorting algorithm in \cref{def: sorted causal history of a block}, $\forall b: b_1, b_2 \in H_{b}$, $b_2$ is always ordered after $b_1$ in $H_{b}$.

Eventually, when the block is committed or has satisfied the conditions for early finality, the finalized outcome of $t_1$ is communicated to the client, whereupon two cases may occur:

\begin{enumerate}
    \item \textbf{The finalized outcome matches the speculated outcome provided previously:} In this case, the client simply continues, knowing that $t'_2$ has been submitted and will be executed in due time.

    \item \textbf{The finalized outcome does \textit{not} match the previously speculated outcome:} In this case, $t'_2$ will only execute if the speculative outcome of $t_1$ aligns with the one included in $t'_2$. If not, $t'_2$ is aborted, as are any subsequent transactions that depend on its speculated outcome. The client then immediately submits a new transaction, $t''_2$, based on the finalized outcome of $t_1$, thereby restarting the chain of transactions.
\end{enumerate}

Because cascading failure is intrinsic to the transaction chain itself, the above conclusions can be reached independently by each process without requiring coordination. Even if an unexpected outcome occurs, the latency remains upper-bounded by the rate of normal consensus progress via leader election. Thus, in the worst case, this pipelined behavior incurs no additional penalties.

\subsection{Pipelining and \sys}
\begin{figure}[t]
    \centering
    \includegraphics[width=.8\linewidth,trim=2cm .5cm 1cm 1cm]{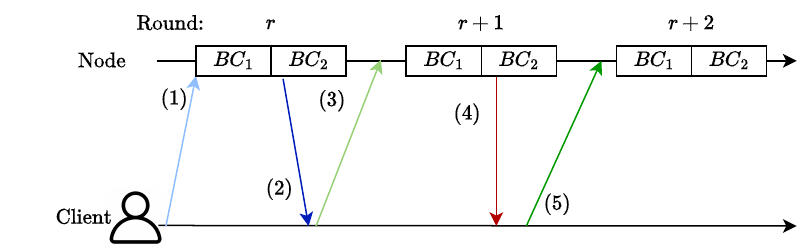}
    \caption{Catching the next bus: (1) Our client sends a Type $\beta$ transaction to the process. (2) After the first $BC_1$, the process sends a speculative outcome. (3) The client uses this speculative outcome to send its 2nd causally related transaction. (4) At $r+1$ we learn that our read will be stale, and the process informs the client (5) The client then re-sends the 2nd transaction in the immediate subsequent block, incurring a single block worth of delay compared to an entire consensus latency worth.}
    \label{fig: bus catch}
\end{figure}

\begin{figure}[!t]
    \centering
    \includegraphics[width=1\linewidth]{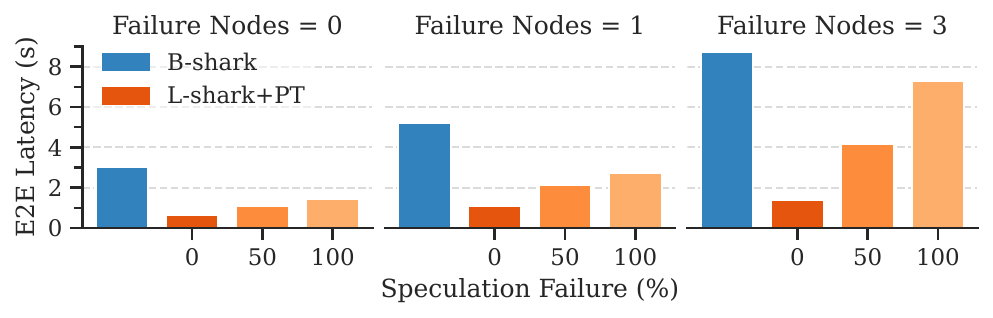}
    \caption{
    Performance of Pipelined client transactions (L-shark + PT) when varying the number of crash-faults. Where $50\%$ of transactions are Type $\beta/\gamma$, ``Cross-shard Count'' = 4, and ''Cross-shard Failure'' = 33\%. 
    }
    \label{fig:exp3-1}
\end{figure}
 
\sys allows processes to determine whether transactions may have STO or whether they definitively cannot have STO. In such cases, the provided speculative TO may be known to be definitively inaccurate before commitment. This can be communicated to the client before finality, thereby allowing the client to resubmit its chain of transactions earlier; we illustrate this in \cref{fig: bus catch}.

However, transactions that may execute based on some condition are analogous to Type $\gamma$ transactions, in that the conditional execution of a transaction may affect other transactions. As such, all contingent transactions are added to the Delay List as well. They are removed when they are finalized or removed entirely when cascading failure occurs.

\subsection{Evaluation}
We also evaluated how pipelined dependent transactions may perform by fixing a moderate amount of Type $\beta$ transactions (Cross-shard count = 4, Cross Shard Failure = 33\%), and varying the number of failures. We also added a parameter ``Speculation Failure'' to denote the probability that the speculated outcome deviates from the finalized outcome if STO is not possible. In the best case, we saw an $\sim80\%$ improvement in latencies when there are no faults, but even in the worst case we still saw $\sim 11\%$ lower E2E latencies. This is because of the phenomenon we describe in \cref{fig: bus catch}.


\end{document}